\DeclareMathAlphabet\mathbfcal{OMS}{cmsy}{b}{n}
\renewcommand\appendix{\inappendixtrue\par
  \@addtoreset{equation}{section}%
  \@addtoreset{figure}{section}%
  \@addtoreset{table}{section}%
  \setcounter{section}\z@
  \renewcommand\thesection{\@Alph\c@section}%
  \renewcommand\thesubsection{\@Alph\c@section.\@arabic\c@subsection}
  \renewcommand\theequation{\@Alph\c@section\@arabic\c@equation}%
  \renewcommand\thefigure{\@Alph\c@section\,\@arabic\c@figure}%
  \renewcommand\thetable{\@Alph\c@section\,\@arabic\c@table}%
  \renewcommand{\@seccntformat}[1]{Appendix \csname the##1\endcsname.\quad}%
}
\newtheorem{definition}{Definition}
\newtheorem{theorem}{Theorem}
\newtheorem{example}{Example}
\newtheorem{lemma}{Lemma}[section]
\newtheorem{proposition}{Proposition}
\newenvironment{proofof}[1]{\vspace{5pt}\setlength{\parindent}{0cm}\setlength{\parskip}{0.2cm} {\bf Proof of #1}.}{}
\renewenvironment{proof}{\setlength{\parindent}{0cm}\setlength{\parskip}{0.2cm} {\emph{Proof}}.}{\vspace{0.5cm}}
\newcommand{\review}[2]{#2}
\newcommand{\sidecomment}[1]%
                    {\marginpar{\footnotesize\emph{\color{blue} #1}}}
\newcommand\qed{~\hfill$\Box$}
\newcommand{\tuple}[1]{\ensuremath{\langle #1 \rangle}}
\newcommand{\set}[1]{\ensuremath{\{#1\}}}
\newcommand{\setm}[2]{\ensuremath{\{\ #1\ \big|\ #2\ \}}}
\newcommand{\eqdef}{%
  \mathrel{\vbox{\offinterlineskip\ialign{%
    \hfil##\hfil\cr%
    $\scriptscriptstyle\mathrm{def}$\cr%
    \noalign{\kern1pt}%
    $=$\cr%
    \noalign{\kern-0.1pt}%
}}}}
\DeclareMathOperator{\Not}{\it not}
\DeclareMathOperator{\Undef}{\it undef}
\DeclareMathOperator{\sneg}{\sim\!}
\DeclareMathOperator{\cdotl}{\!\cdot\!}
\newcommand{\at}{At}
\newcommand{\lb}{Lb}
\newcommand{\signature}{\tuple{\at,\lb}}
\newcommand{\evalues}{\ensuremath{\mathbf{V}_{\lb}}}
\newcommand{\graphs}{\ensuremath{\mathbf{G}_{\lb}}}
\newcommand{\cgvalues}{\ensuremath{\mathbf{I}_{\lb}^{CG}}}
\newcommand{\cgvaluesterms}{\ensuremath{\mathbf{V}_{\lb}^{CG}}}
\newcommand{\values}{\cgvaluesterms}
\newcommand{\rR}{\ensuremath{R}}
\newcommand{\rA}{\ensuremath{A}}
\newcommand{\rH}{\ensuremath{H}}
\newcommand{\rB}{\ensuremath{B}}
\newcommand{\rC}{\ensuremath{C}}
\newcommand{\rL}{\ensuremath{L}}
\newcommand{\cP}{{\ensuremath{P}}}
\newcommand{\cQ}{{\ensuremath{Q}}}
\newcommand{\wP}{{\ensuremath{\mathfrak{P}}}}
\newcommand{\wwP}{\mathcal{P}}
\newcommand{\cI}{{\ensuremath{\tilde{I}}}}
\newcommand{\wI}{{\ensuremath{\mathfrak{I}}}}
\newcommand{\wJ}{{\ensuremath{\mathfrak{J}}}}
\newcommand{\eI}{{\ensuremath{I}}}
\newcommand{\eJ}{{\ensuremath{J}}}
\newcommand{\eU}{{\ensuremath{U}}}
\newcommand{\sI}{{\ensuremath{\hat{I}}}}
\newcommand{\botI}{\ensuremath{\mathbf{0}}}
\newcommand{\topI}{\ensuremath{\mathbf{1}}}
\newcommand{\ereduct}[2]{{\ensuremath{#1^{#2}}}}
\newcommand{\eWpP}[1]{\ensuremath{\Gamma_{#1}}}
\newcommand{\eWp}{\eWpP{\cP}}
\newcommand{\eWWpP}[1]{\ensuremath{\Gamma^2_{#1}}}
\newcommand{\eWprPI}[3]{{\ensuremath{{\eWpP{#1}^2}\!\uparrow^{#2}({#3})}}}
\newcommand{\eWprI}[2]{\eWprPI{\cP}{#1}{#2}}
\newcommand{\eWprP}[2]{\eWprPI{#1}{#2}{\botI}}
\newcommand{\eWpr}{\eWprP{\cP}}
\newcommand{\eWWprPI}[3]{{\ensuremath{\eWWpP{#1}\!\uparrow^{#2}({#3})}}}
\newcommand{\eWWprP}[2]{\eWWprPI{#1}{#2}{\botI}}
\newcommand{\eWWpr}{\eWWprP{\cP}}
\newcommand{\wlfpP}[1]{{\ensuremath{\mathfrak{T}_{#1}}}}
\newcommand{\wgfpP}[1]{{\ensuremath{\mathfrak{TU}_{#1}}}}
\newcommand{\elfpP}[1]{{\ensuremath{\mathbb{L}_{#1}}}}
\newcommand{\egfpP}[1]{{\ensuremath{\mathbb{U}_{#1}}}}
\newcommand{\elfp}{{\elfpP{\cP}}}
\newcommand{\egfp}{{\egfpP{\cP}}}
\newcommand{\ewfmP}[1]{\ensuremath{\mathbb{W}_{#1}}}
\newcommand{\ewfm}{\ewfmP{\cP}}
\newcommand{\wfm}{\ewfmP{\cP}}
\newcommand{\lambdac}{\ensuremath{\lambda^{c}}}
\newcommand{\lambdap}{\ensuremath{\lambda^{p}}}
\newcommand{\lambdaq}{\ensuremath{\lambda^{q}}}
\newcommand{\wLb}{\ensuremath{\lb}}
\newcommand\boolAlgebra{\ensuremath{\mathbf{B}_{\wLb}}}
\newcommand{\etpP}[1]{\ensuremath{T_{#1}}}
\newcommand{\etp}{\etpP{\cP}}
\newcommand{\etprP}[2]{{\ensuremath{\etpP{#1}\!\uparrow^{#2}(\botI)}}}
\newcommand{\etpr}[1]{{\ensuremath{\etp\!\uparrow^{#1}(\botI)}}}
\newcommand{\WpP}[1]{\ensuremath{\tilde{\Gamma}_{#1}}}
\newcommand{\Wp}{\WpP{\cP}}
\newcommand{\tpP}[1]{\ensuremath{\tilde{T}_{#1}}}
\newcommand{\tp}{\tpP{P}}
\newcommand{\tprP}[2]{{\ensuremath{\tpP{#1}\!\uparrow^{#2}(\botI)}}}
\newcommand{\tpr}[1]{{\ensuremath{\tp\!\uparrow^{#1}(\botI)}}}
\newcommand{\wtpP}[1]{\ensuremath{\mathfrak{T}_{#1}}}
\newcommand{\wbotI}{\ensuremath{\mathbf{\bot}}}
\newcommand{\wtprPI}[3]{{\ensuremath{\wtpP{#1}\!\uparrow^{#2}({#3})}}}
\newcommand{\wtprP}[2]{\wtprPI{#1}{#2}{\wbotI}}
\newcommand{\wWpP}[1]{\ensuremath{\mathfrak{G}_{#1}}}
\newcommand{\wWprPI}[3]{\ensuremath{\mathfrak{G}^2_{#1}\!\uparrow^{#2}({#3})}}
\newcommand{\wWprP}[2]{\wWprPI{#1}{#2}{\botI}}
\newcommand{\stpP}[1]{\ensuremath{\hat{T}_{#1}}}
\newcommand{\stp}{\stpP{\cP}}
\newcommand{\stprP}[2]{{\ensuremath{\stpP{#1}\!\uparrow^{#2}(\botI)}}}
\newcommand{\sWpP}[1]{\ensuremath{\hat{\Gamma}_{#1}}}
\newcommand{\sWp}{\sWpP{\cP}}
\newcommand{\sWprPI}[3]{{\ensuremath{{\sWpP{#1}^2}\!\uparrow^{#2}({#3})}}}
\newcommand{\sWprP}[2]{\sWprPI{#1}{#2}{\botI}}
\newcommand{\sWpr}{\sWprP{\cP}}
\def\gfp{\hbox{\textnormal{gfp}}}
\def\lfp{\hbox{\textnormal{lfp}}}
\newcounter{programcount}
\newcommand{\newprogram}{\refstepcounter{programcount}\ensuremath{P_{\arabic{programcount}}}}
\newcommand{\programref}[1]{{\ensuremath{\cP_{\ref{#1}}}}}
\begin{document}
\bibliographystyle{acmtrans}

 \submitted{October 21 2015}
 \revised{January 26 2016}
 \accepted{February 22 2016}

\title[Enablers and Inhibitors in Causal Justifications of Logic Programs]{Enablers and Inhibitors in \\ Causal Justifications of Logic Programs\footnote{This is an extended version of a paper presented at the Logic Programming
and Nonmonotonic Reasoning Conference (LPNMR 2015), invited as a rapid
communication in TPLP. The authors acknowledge the assistance of
the conference program chairs Giovambattista Ianni and Miroslaw
Truszczynski.}}

\author[P. Cabalar \& J. Fandinno]
         {Pedro Cabalar and Jorge Fandinno\\
          Department of Computer Science\\
		  University of Corunna, Spain\\
		  \email{\{cabalar, jorge.fandino\}@udc.es}}

\pagerange{\pageref{firstpage}--\pageref{lastpage}}
\volume{\textbf{10} (3):}
\jdate{October 2015}
\setcounter{page}{1}
\pubyear{2015}

\maketitle

\label{firstpage}

\setcounter{page}{1}
\pagestyle{headings}

\begin{abstract}
\emph{To appear in Theory and Practice of Logic Programming (TPLP).}
In this paper we propose an extension of logic programming (LP) where each default literal derived from the well-founded model is associated to a justification represented as an algebraic expression. This expression contains both causal explanations (in the form of proof graphs built with rule labels) and terms under the scope of negation that stand for conditions that enable or disable the application of  causal rules. Using some examples, we discuss how these new conditions, we respectively call \emph{enablers} and \emph{inhibitors}, are intimately related to default negation and have an essentially different nature from regular cause-effect relations. The most important result is a formal comparison to the recent algebraic approaches for justifications in LP: \emph{Why-not Provenance}~(WnP) and \emph{Causal Graphs}~(CG). We show that the current approach extends both WnP and CG justifications under the Well-Founded Semantics and, as a byproduct, we also establish a formal relation between these two approaches.
\end{abstract}

\begin{keywords}
causal justifications, well-founded semantics, stable models, answer set programming.
\end{keywords}


\section{Introduction}\label{sc:introduction}

The strong connection between Non-Monotonic Reasoning~(NMR) and Logic Programming~(LP) semantics for default negation has made possible that LP tools became nowadays an important paradigm for Knowledge Representation~(KR) and problem-solving in Artificial Intelligence~(AI). In particular, \emph{Answer Set Programming}~(ASP)  \cite{niemela1999,MT99} has \review{R1.1}{established} as a preeminent LP paradigm for practical NMR with applications in diverse areas of AI including planning, reasoning about actions, diagnosis, abduction and beyond. The ASP paradigm is based on the \emph{stable models semantics}~\cite{GelfondL88} and is also closely related to the other mainly accepted interpretation for default negation, \emph{well-founded} semantics (WFS)~\cite{van1991well}. One interesting difference between these two LP semantics and classical models (or even other NMR approaches) is that true atoms in LP must be founded or justified by a given derivation. These \emph{justifications} are not provided in the semantics itself, but can be syntactically built in some way in terms of the program rules, as studied in several approaches~\cite{specht1993generating,denecker1993justification,pemmasani2004online,GPST08,pontelli2009justifications,oetsch2010,schultz2013aba}.

Rather than manipulating justifications as mere syntactic objects, two recent approaches have considered extended multi-valued semantics for LP where justifications are treated as \emph{algebraic} constructions: 
\emph{Why-not Provenance} (WnP)~\cite{damasio2013justifications} and \emph{Causal Graphs} (CG)~\cite{CabalarFF14}. Although these two approaches present formal similarities, they start from different understandings of the idea of justification. On the one hand, WnP answers the query ``why literal $L$ might hold'' by providing conjunctions of \emph{hypothetical modifications} on the program that would allow deriving $L$. These modifications include rule labels, expressions like $not(A)$ with $A$ an atom, or negations `$\neg$' of the two previous cases. As an example, a justification for $L$ like $r_1 \wedge not(p) \wedge \neg r_2 \wedge \neg not(q)$ means that the presence of rule $r_1$ and the absence of atom~$p$ would allow deriving $L$  (hypothetically) if both rule $r_2$ were removed and atom $q$ were added to the program. If we want to explain why $L$ \emph{actually} holds, we have to restrict to justifications without `$\neg$', that is, those without program modifications (which will be the focus of this paper). 

On the other hand, CG-justifications start from identifying program rules as \emph{causal laws} so that, for instance, $(p \leftarrow q)$ can be read as ``event $q$ \emph{causes} effect $p$.'' Under this viewpoint, (positive) rules offer a natural way for capturing the concept of \emph{causal production}, i.e. a continuous chain of events that has helped to cause or produce an effect~\cite{hall2004,hall2007structural}. The explanation of a true atom is made in terms of graphs formed by rule labels that reflect the ordered rule applications required for deriving that atom. These graphs are obtained by algebraic operations exclusively applied on the positive part of the program. Default negation in CG is understood as absence of cause and, consequently, a false atom has \emph{no justification}. 

The explanation of an atom $A$ in CG is more detailed than in WnP, since the former contains graphs that correspond to all relevant proofs of $A$ whereas in WnP we just get conjunctions that do not reflect any particular ordering among rule applications. However, as explained before, CG does not capture the effect of default negation in a given derivation and, sometimes, this information is very valuable, especially if we want to answer questions of the form ``why not.''

\review{R2.1}{
As in the previous paper on CG~\cite{CabalarFF14}, our final goal is to achieve an elaboration tolerant representation of causality that allows reasoning about cause-effect relations.
Under this perspective, although WnP is more oriented to program debugging, its possibility of dealing with hypothetical reasoning of the form ``why not'' would be an interesting feature to deal with counterfactuals, since several approaches to causality (see Section~\ref{sec:contributory.causes}) are based on this concept.} To understand the kind of problems we are interested in, consider the following example. A drug~$d$ in James Bond's drink causes his paralysis $p$ provided that he was not given an antidote $a$ that day. We know that Bond's enemy, Dr. No, poured the drug:
\begin{eqnarray}
p & \leftarrow & d,\, \Not a  \label{r1}\\
d \label{r2}
\end{eqnarray}
\noindent In this case it is obvious that $d$ causes $p$, whereas the absence of $a$ just \emph{enables} the application of the rule. Now, suppose we are said that Bond is 
daily administered an antidote by the MI6, unless it is a holiday $h$:
\begin{eqnarray}
a \leftarrow \Not h \label{r3}
\end{eqnarray}
\review{R1.2, R3.2}{
Adding this rule makes $a$ become an \emph{inhibitor} of $p$, as it prevents $d$ to cause $p$ by rule~\eqref{r1}.} But suppose now that we are in a holiday, that is, fact $h$ is added to the program \eqref{r1}-\eqref{r3}. Then, the inhibitor $a$ is \emph{disabled} and $d$ causes $p$ again. However, we do not consider that the holiday $h$ is a (productive) cause for Bond's paralysis $p$ although, indeed, the latter counterfactually depends on the former: ``had not been a holiday $h$, Bond would have not been paralysed.'' \review{R1.2, R3.2}{We will say that the fact $h$, which disables inhibitor $a$, is an \emph{enabler} of $p$, as it allows applying rule~\eqref{r1}.}

In this work we propose dealing with these concepts of enablers and inhibitors by augmenting CG justifications with a new negation operator~`$\sneg$' in the CG causal algebra. We show that this new \review{R3.4}{approach, which we call} \emph{Extended Causal Justifications} (ECJ), captures WnP justifications under the Well-founded Semantics, establishing a formal relation between WnP and CG as a byproduct.

The rest of the paper is structured as follows. The next section  defines the new approach. Sections~\ref{sec:CG} and~\ref{sec:WnP} explain the formal relations to CG and WnP through a running example. 
\review{R3.3}{Section~\ref{sec:contributory.causes} studies several examples of causal scenarios from the literature
and finally, Section~\ref{sec:conc} concludes the paper.
Appendix~\ref{sec:figs} contains an auxiliary figure depicting some common algebraic properties and Appendix~\ref{sec:proofs} contains the formal proofs of theorems from the previous sections.}


\section{Extended Causal Justifications (ECJ)}
\label{sec:CP}

A \emph{signature} is a pair \signature\ of sets that respectively represent \emph{atoms} (or \emph{propositions}) and \emph{labels}. Intuitively, each atom in $At$ will be assigned justifications built with rule labels from $Lb$. In principle, the intersection $At \cap Lb$ does not need to be empty: we may sometimes find \review{R1.3}{it} convenient to label a rule using an atom name (normally, the head atom). Justifications will be expressions that combine four different algebraic operators: a product `$*$' representing conjunction or joint causation; a sum `$+$' representing alternative causes; a non-commutative product `$\cdot$' that captures the sequential order that follows from rule applications; and a non-classical negation `$\sneg$' which will precede inhibitors (negated labels) and enablers (doubly negated labels).

\begin{definition}[Terms]
Given a set of labels $Lb$, a \emph{term}, $t$ is recursively defined as one of the following expressions \mbox{$t ::= l \ | \ \prod S \ | \ \sum S \ | \ t_1 \cdot t_2 \ | \ \sneg t_1$} where $l \in Lb$, $t_1, t_2$ are in their turn terms and $S$ is a (possibly empty and possibly infinite) set of terms. A term is \emph{elementary} if it has the form $l$, $\sneg l$ or $\sneg\sneg l$ with $l\in Lb$ being a label.\qed
\end{definition}

\noindent When $S=\{t_1,\dots,t_n\}$ is finite we simply write $\prod S$ as $t_1*\dots*t_n$ and $\sum S$ as $t_1+\dots+t_n$. Moreover, when $S=\emptyset$, we denote $\prod S$ by $1$ and $\sum S$ by $0$, as usual, and these will be the identities of the product `$*$' and the addition `$+$', respectively. We assume that `$\cdot$' has higher priority than `$*$' and, \review{R2.5}{in turn}, `$*$' has higher priority than `$+$'.

\begin{definition}[Values]
\label{def:values}
A \emph{(causal) value} is each equivalence class of terms under axioms for a completely distributive (complete) lattice with meet `$*$' and join `$+$' plus the axioms of Figures~\ref{fig:appl} and~\ref{fig:neg}.
The set of (causal) values is denoted by $\evalues$.\qed
\end{definition}

\begin{figure}[htbp]
\begin{center}
\newcommand{\titleSep}{0pt}
\newcommand{\contentSep}{-10pt}
\newcommand{\rowSep}{5pt}
$
\begin{array}{c}
\hbox{\em Associativity}\vspace{\titleSep}\\
\hline\vspace{\contentSep}\\
\begin{array}{r@{\ }c@{\ }r@{}c@{}l c r@{}c@{}l@{\ }c@{\ }l@{\ }}
t & \cdot & (u & \cdot & w) & = & (t & \cdot & u) & \cdot & w\\
\\
\end{array}
\end{array}
$
\ \ \ \
$
\begin{array}{c}
\hbox{\em Absorption}\vspace{\titleSep}\\
\hline\vspace{\contentSep}\\
\begin{array}{r@{\ }c@{\ }c@{\ }c@{\ }l c r@{\ }c@{\ }r@{\ }c@{\ }c@{\ }c@{\ }c@{\ }l@{\ }}
&& t &&& = & t & + & u & \cdot & t & \cdot & w \\
u & \cdot & t & \cdot & w & = & t & * & u & \cdot & t & \cdot & w
\end{array}
\end{array}
$
\ \ \ \
$
\begin{array}{c}
\hbox{\em Identity}\vspace{\titleSep}\\
\hline\vspace{\contentSep}\\
\begin{array}{rc r@{\ }c@{\ }l@{\ }}
t & = & 1 & \cdot & t\\
t & = & t & \cdot & 1
\end{array}
\end{array}
$
\ \ \ \
$
\begin{array}{c}
\hbox{\em Annihilator}\vspace{\titleSep}\\
\hline\vspace{\contentSep}\\
\begin{array}{rc r@{\ }c@{\ }l@{\ }}
0 & = & t & \cdot & 0\\
0 & = & 0 & \cdot & t\\
\end{array}
\end{array}
$
\\
\vspace{\rowSep}
$
\begin{array}{c}
\hbox{\em Idempotency}\vspace{\titleSep}\\
\hline\vspace{\contentSep}\\
\begin{array}{r@{\ }c@{\ }l@{\ }c@{\ }l }
x & \cdot & x  & = & x\\
\\
\\
\end{array}
\end{array}
$
\hspace{.05cm}
$
\begin{array}{c}
\hbox{\em Addition\ distributivity}\vspace{\titleSep}\\
\hline\vspace{\contentSep}\\
\begin{array}{r@{\ }c@{\ }r@{}c@{}l c r@{}c@{}l@{\ }c@{\ }r@{}c@{}l@{}}
t & \cdot & (u & + & w) & = & (t & \cdot & u) & + & (t & \cdot & w)\\
( t & + & u ) & \cdot & w & = & (t & \cdot & w) & + & (u & \cdot & w)\\ \\
\end{array}
\end{array}
$
\hspace{.05cm}
$
\begin{array}{c}
\hbox{\em Product\ distributivity}\vspace{\titleSep}\\
\hline\vspace{\contentSep}\\
\begin{array}{rcl}
c \cdot d \cdot e & = & (c \cdot d) * (d \cdot e) \ \ \ \hbox{with} \ d \neq 1 \\
c \cdot (d*e)     & = & (c \cdot d) * (c \cdot e) \\
(c*d) \cdot e     & = & (c \cdot e) * (d \cdot e)
\end{array}
\end{array}
$
\end{center}
\vspace{-5pt}
\caption{Properties of the `$\cdot$' operator ($c,d,e$ are terms without `$+$' and $x$ is an elementary term).
Distributivity is also satisfied over infinite sums and products.}
\label{fig:appl}
\end{figure}

\begin{figure}[htbp]
\begin{center}
\newcommand{\titleSep}{0pt}
\newcommand{\contentSep}{-10pt}
\newcommand{\rowSep}{5pt}
$
\begin{array}{c}
\hbox{\em Pseudo-complement}\vspace{\titleSep}\\
\hline\vspace{\contentSep}\\
\begin{array}{c@{\ }c@{\ }r@{}  }
t \ * \ \sneg t     & = & 0
\\
\sneg\sneg\sneg t   & = & \sneg t
\end{array}
\end{array}
\hspace{0.6cm}
\begin{array}{c}
\hbox{\em De Morgan}\vspace{\titleSep}\\
\hline\vspace{\contentSep}\\
\begin{array}{r@{}r@{}c@{}l@{\ }c@{\ }l@{}c@{}l@{} }
\sneg & (t & + & u) & = & (\sneg t  & * & \sneg u)\\
\sneg & (t & * & u) & = & (\sneg t  & + & \sneg u)
\end{array}
\end{array}
\hspace{0.6cm}
\begin{array}{c}
\hbox{\em Weak excl. middle}\vspace{\titleSep}\\
\hline\vspace{\contentSep}\\
\begin{array}{c@{\ }c@{\ }r@{}  }
\sneg t \ + \ \sneg\sneg t     & = & 1
\\
\\
\end{array}
\end{array}
\hspace{0.6cm}
\begin{array}{c}
\hbox{\em appl. negation}\vspace{\titleSep}\\
\hline\vspace{\contentSep}\\
\begin{array}{c@{\ }c@{\ }r@{}  }
\sneg (t \ \cdot \ u)   & = & \sneg (t \ * \ u) 
\\
\\
\end{array}
\end{array}
$
\end{center}
\vspace{-5pt}
\caption{Properties of the `$\sneg$' operator.}
\label{fig:neg}
\end{figure}


\noindent
Note that $\tuple{\evalues,+,*,\sneg\ ,0,1 }$ is a
completely distributive Stone algebra (a pseudo-complemented, completely distributive, complete lattice which satisfies the weak excluded middle axiom) whose meet and join are, as usual, the product `$*$' and the addition~`$+$'. Informally speaking, this means that these two operators satisfy the properties of a Boolean algebra but without negation.

Note also that all three operations, `$*$', `$+$' and `$\cdot$' are associative. Product~`$*$' and addition `$+$' are also commutative, and they hold the usual absorption and distributive laws with respect to infinite sums and products of a completely distributive lattice. 

The axioms for `$\cdot$' in Figure~\ref{fig:appl} are directly extracted from the CG algebraic structure. For a more detailed explanation on their induced behaviour see~\cite{CabalarFF14}. The new contribution in this paper with respect to the CG algebra is the introduction of the `$\sneg$' operator whose meaning is captured by the axioms in Figure~\ref{fig:neg}. As we can see, this operator satisfies De Morgan laws and acts as a complement for the product $t * \sneg t = 0$. However, it diverges from a classical Boolean negation in some aspects. In the general case, the axioms $\sneg \sneg t = t$ (double negation) and $t + \sneg t = 1$ (excluded middle) are not valid. Instead\footnote{This behaviour coincides indeed with the properties for default negation obtained in Equilibrium Logic~\cite{Pearce96} or the equivalent General Theory of Stable Models~\cite{FerrarisLL07}.}, we can replace a triple negation $\sneg \sneg \sneg t$ by $\sneg t$, and we have a weak version of the excluded middle axiom $\sneg t + \sneg \sneg t = 1$. 
The negation of an application is defined as the negation of the product $\sneg (t \cdot u) \eqdef \sneg (t * u )$ which, \review{R2.5}{in turn}, is equivalent to $\sneg(u * t)$, since $*$ is commutative. In other words, under negation, the rule application ordering is disregarded. It is not difficult to see that we can apply the axioms of negation to reach an equivalent expression that avoids its application to other operators.
We say that a term is in \emph{negation normal form} (NNF) if no other operator is in the scope of negation `$\sneg$'. Moreover, an NNF term is in \emph{disjuntive normal form} (DNF) if: (1) no sum is in the scope of another operator; (2) only elementary terms are in the scope of application; and (3) every product is transitively closed, that is, of the form of $a \cdotl b * b \cdotl c * a \cdotl c$.
Without loss of generality, we assume from now that all functions defined over causal terms are applied over their DNF form, although, we will usually write them in NNF for short.

The lattice order relation is defined as usual in the following way:
\begin{IEEEeqnarray*}{c"C"c"C"c}
t \leq u & \text{ iff } & (t * u = t) & \text{ iff } & (t + u = u)
\end{IEEEeqnarray*}
Consequently $1$ and $0$ are respectively the top and bottom elements with respect to relation $\leq$.

\begin{definition}[Labelled logic program]\label{def:causal.P}
Given a signature $\signature$, a \emph{(labelled logic) program} $P$ is a set of rules of the form:
\begin{eqnarray}
r_i: \ \ \rH \ 
    \leftarrow \ \rB_1, \dotsc, \ \rB_m, \
                    \Not \rC_1, \dotsc, \ \Not \rC_n
    \label{eq:rule} 
\end{eqnarray}

\noindent where \mbox{$r_i\in Lb$} is a label or $r_i=1$, $\rH$ (the \emph{head} of the rule) \review{R3.4}{is an atom}, and  $\rB_i$'s and $\rC_i$'s (the \emph{body} of the rule) are either atoms or terms.\qed
\end{definition}

\noindent
When $n=0$ we say that the rule is positive, \review{R2.6}{furthermore, if in addition \mbox{$m=0$} }we say that the rule is a \emph{fact} and omit the symbol `$\leftarrow$.'
When $r_i \in Lb$ we say that the rule is labelled; otherwise $r_i=1$ and we omit both $r_i$ and `$:$'. By these conventions, for instance, an unlabelled fact $\rA$ is actually an abbreviation of $(1: \rA \leftarrow )$.
A program $P$ is \emph{positive} when all its rules are positive, i.e. it contains no default negation. It is \emph{uniquely labelled} when each rule has a different label or no label at all. In this paper, we will assume that programs are uniquely labelled. Furthermore, for the \review{R2.6}{sake of clarity}, we also assume that, for every atom $\rA\in At$, there is an homonymous label $\rA\in Lb$, and that each fact $\rA$ in the program actually stands for the labelled rule $(\rA: \; \rA \leftarrow)$.
For instance, following these conventions, a possible labelled version for the James Bond's program could be program~\newprogram\label{prg:bond}~below:
\\[-8pt]
\begin{minipage}{\textwidth}
\begin{minipage}[t]{\dimexpr\textwidth-2cm-(0.36\textwidth)\relax}
\begin{eqnarray*}
r_1 : \ \ p & \leftarrow & d, \Not a\\  
r_2 : \ \ a & \leftarrow & \Not h
\end{eqnarray*}
\end{minipage}
\hspace{2cm}
\begin{minipage}[t]{0.35\textwidth}
\begin{eqnarray*}
d\\ 
h
\end{eqnarray*}
\end{minipage}
\end{minipage}
\\[8pt]
where facts $d$ and $h$ stand for rules $(d: \; d \leftarrow)$ and $(h: \; h\leftarrow)$, respectively.

An \emph{ECJ-interpretation} is a mapping \mbox{$\eI:At\longrightarrow\evalues$} assigning a value to each atom. For interpretations $\eI$ and $\eJ$ we say that $\eI\leq \eJ$ when
\mbox{$\eI(\rA) \leq \eJ(\rA)$} for each atom $\rA \in At$.
Hence, there is a \mbox{$\leq$-bottom} interpretation \botI\ (resp. a $\leq$-top interpretation~\topI) that stands for the interpretation mapping each atom $\rA$ to $0$ (resp. $1$).
The value assigned to a negative literal $\Not \rA$ by an interpretation~$\eI$, denoted as $\eI(\Not \rA)$, is defined as $\eI(\Not \rA) \eqdef \sneg\eI(\rA)$, as expected. Similarly, for a term $t$, $\eI(t)\eqdef [t]$ is the equivalence class of~$t$.

\begin{definition}[Model]
An interpretation $\eI$ satisfies a rule like \eqref{eq:rule}  iff
\begin{IEEEeqnarray}{c+x*}
\big( \ \eI(\rB_1) * \dotsc * \eI(\rB_m)
    * \eI(\Not \rC_1) * \dotsc * \eI(\Not \rC_n) \ \big) 
        \cdot r_i \ \leq \ \eI(\rH) &
     \label{eq:model}
\end{IEEEeqnarray}
and $\eI$ is a (causal) model of $\cP$, written $\eI\models \cP$, iff $\eI$ satisfies all rules in $\cP$.\qed
\end{definition}

As usual in LP, for positive programs, we may define a direct \review{R2.7}{consequence operator}~$\etp$ s.t.
\begin{align*}
\etp(\eI)(\rH)
    \ \ &\eqdef \ \
      \sum \big\{ \ \big( \ \eI(B_1) * \dotsc * \eI(B_n) \ \big) \cdot r_i \
        \mid \ (r_i: \ \rH \leftarrow B_1, \dotsc, B_n ) \in P \ \big\}
\end{align*}
for any interpretation $\eI$ and atom $\rH \in \at$.
\review{R1.5}{
We also define $\etpr{\alpha} \eqdef \etp(\etpr{\alpha-1})$ for any successor ordinal $\alpha$ and
\begin{gather*}
\etpr{\alpha} \ \ \eqdef \ \ \sum_{\beta < \alpha} \etpr{\beta}
\end{gather*}
for any limit ordinal alpha. 
As usual, $\omega$ denotes the smallest infinite limit ordinal.
Note that $0$ is considered a limit ordinal and, thus,
$\etpr{0} = \sum_{\beta < 0} \etpr{\beta} = \botI$.
}

\begin{theorem}\label{thm:tp.properties}
\label{prop:tp.properties}
Let $\cP$ be a (possibly infinite) positive logic program.
Then,
($i$) \ the least fixpoint of the $\etp$ operator, denoted by $\lfp(\etp)$, satisfies
$\lfp(\etp)=\etpr{\omega}$ and it is the least model of $\cP$, 
\ \ ($ii$) \ furthermore, if $P$ is positive and has $n$ rules, then  $\lfp(\etp)=\etpr{\omega}=\etpr{n}$.\qed
\end{theorem}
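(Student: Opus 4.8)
The plan is to read this as an instance of the Knaster--Tarski / Kleene fixpoint theorem on the complete lattice of ECJ\nobreakdash-interpretations. Since $\evalues$ is a complete lattice, the interpretations ordered pointwise by $\leq$ form a complete lattice too, with suprema and infima computed atomwise and $\botI$ as bottom. The bridge to models is the observation that, unfolding the satisfaction condition \eqref{eq:model} for a \emph{positive} rule (i.e.\ with $n=0$) and using that in a complete lattice $\sum X \leq v$ iff $x \leq v$ for all $x\in X$, one gets $\eI \models \cP$ iff $\etp(\eI) \leq \eI$; that is, the models of $\cP$ are exactly the pre-fixpoints of $\etp$. Hence, once $\etp$ is shown monotone, Knaster--Tarski already yields that $\lfp(\etp)$ exists and is the $\leq$-least model of $\cP$, which proves half of~(i).

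For the rest of (i) --- that the least fixpoint is attained at stage $\omega$ --- I would show that $\etp$ is monotone and upward $\omega$-continuous. Monotonicity follows from monotonicity of `$*$', `$+$' and `$\cdot$' (the last one because $t \leq u$ gives $u = t+u$, so $w\cdot u = w\cdot t + w\cdot u \geq w\cdot t$ by the addition-distributivity axioms of Figure~\ref{fig:appl}, and symmetrically on the other side), together with the fact that body literals which are terms are interpretation-independent constants. Continuity exploits the key finiteness feature of rule bodies: each summand of $\etp(\eI)(\rH)$ is a \emph{finite} product $(\eI(B_1)*\dots*\eI(B_m))\cdot r_i$, finite products distribute over suprema of directed families by complete distributivity, `$\cdot$' distributes over arbitrary sums (Figure~\ref{fig:appl} caption), and arbitrary joins commute; so $\etp(\sum_\beta \eI_\beta) = \sum_\beta \etp(\eI_\beta)$ for every ascending chain $\{\eI_\beta\}$. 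Standard Kleene iteration then gives that $\etpr{0} \leq \etpr{1} \leq \dots$ is ascending, that its supremum $\etpr{\omega}$ is a fixpoint and, being the iteration from $\botI$, the least one; combined with the previous paragraph, $\lfp(\etp) = \etpr{\omega}$ and it is the least model.

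For (ii) I would reduce the claim to showing that the iteration stabilises by stage $n$, i.e.\ $\etpr{n} = \etpr{n+1}$: granted this, monotonicity makes $\etpr{n}$ a fixpoint, hence by (i) it equals $\lfp(\etp) = \etpr{\omega}$. To obtain stabilisation I would first prove, by induction on $k$, that $\etpr{k}(\rH)$ equals the sum of the causal values of all labelled derivation trees of $\rH$ built from the rules of $\cP$ of depth at most $k$, where the value of a derivation is formed by plugging the sub-derivation values $v_1,\dots,v_m$ into the pattern $(v_1*\dots*v_m)\cdot r_i$ dictated by the rule $r_i$ used at its root. Since $\cP$ has only $n$ rules, any derivation of depth strictly greater than $n$ must repeat a rule label along some root-to-leaf branch; using idempotency $x\cdot x = x$ on elementary terms together with the absorption and product-distributivity axioms of Figure~\ref{fig:appl}, I would argue that the value of such a derivation lies below (hence is absorbed in the sum by) the value of a strictly shorter derivation. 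Consequently derivations of depth $>n$ contribute nothing new, so $\etpr{\omega} = \etpr{n}$ and in particular $\etpr{n} = \etpr{n+1}$.

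The routine part is (i): everything there follows mechanically from the stated lattice and distributivity axioms. The main obstacle is the subsumption step in (ii): turning ``a rule repeats on some branch'' into ``this derivation's value lies below that of a shorter derivation'' has to be carried out carefully in the non-commutative causal algebra, keeping track of the transitive-closure normal form and of how absorption interacts with the nested `$\cdot$'/`$*$' structure of derivation values (alternatively, one might try to transfer the corresponding finiteness result already available for Causal Graphs~\cite{CabalarFF14}).
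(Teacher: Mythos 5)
Your proposal is correct, but it takes a genuinely different route from the paper. The paper proves both parts by reduction: since a positive program may still contain `$\sneg$' inside body terms, it renames every occurrence of $\sneg l$ and $\sneg\sneg l$ as fresh labels $l'$ and $l''$, so that the result is a CG program; it then invokes the CG result (Theorem~\ref{theorem:tp.properties}, i.e.\ Theorem~2 of \citeNP{CabalarFF14}) and renames back, observing that every equality valid in the renamed program remains valid in the original one --- a two-line argument that imports the least-model claim, the $\etpr{\omega}$ characterisation and the $\etpr{n}$ bound wholesale. You instead re-derive everything inside the ECJ algebra: for (i) you identify the models of a positive program with the pre-fixpoints of $\etp$, prove monotonicity and $\omega$-continuity from finiteness of rule bodies, complete distributivity and the infinitary distributivity of `$\cdot$' over sums (Figure~\ref{fig:appl}), and conclude by Knaster--Tarski/Kleene; for (ii) you characterise $\etpr{k}$ via derivation trees of depth at most $k$ and absorb any derivation repeating a rule along a branch into a strictly smaller one. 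Your part (i) is complete and self-contained, and arguably more informative than the paper's, since it does not lean on the external CG theorem and makes clear that `$\sneg$' needs no special treatment there. For (ii), the subsumption step you flag as the main obstacle is exactly the combinatorial content of the CG result the paper imports (in the graph reading: splicing the subderivation at the lower occurrence of the repeated rule into the position of the upper one yields a graph contained in the transitive closure of the original, hence a $\leq$-greater value), so it is true but would have to be carried out carefully in the term algebra or, as you yourself suggest, transferred from \cite{CabalarFF14} --- and that transfer is essentially what the paper's renaming accomplishes, at which point your route and the paper's converge on that part. What each approach buys: the paper's reduction is short and reuses prior work for both statements at once; yours makes the fixpoint machinery explicit and avoids relying on the renaming-back step, whose soundness rests on the one-directional transfer of equalities under the quotient.
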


Theorem~\ref{thm:tp.properties} asserts that, as usual, positive programs have a $\leq$-least causal model. As we will see later, this least model coincides with the traditional least model (of the program without labels) when one just focuses on the set of true atoms, disregarding the justifications explaining why they are true.
For programs with negation we define the following reduct.

\begin{definition}[Reduct]
\label{def:e.model}
 Given a program $\cP$ and an interpretation $\eI$ we denote by $\ereduct{\cP}{\eI}$ the positive program containing a rule of the form
\begin{gather}
r_i : \ \ \rH \leftarrow \rB_1, \dotsc, \rB_m, \
    \eI(\Not \rC_1), \dotsc, \ \eI(\Not \rC_n)
    \label{eq:e.rule}
\end{gather}
\review{R2.8}{for each rule} of the form $\eqref{eq:rule}$ in $\cP$.\qed
\end{definition}

\noindent
Program $\cP^\eI$ is positive and, from Theorem~\ref{thm:tp.properties}, it has a \emph{least causal model}. 
By $\eWp(\eI)$ we denote the least model of program $\ereduct{\cP}{\eI}$. The operator $\eWp$ is anti-monotonic and, consequently, $\eWp^2$ is monotonic (Proposition~\ref{prop:gamma.antimonotonic} in the appendix) so that, by Knaster-Tarski's theorem, it has a least fixpoint $\elfp$ and a greatest fixpoint $\egfp\eqdef\eWp(\elfp)$. These two fixpoints respectively correspond to the justifications for true and for non-false atoms in the (standard) well-founded model (WFM), we denote as $W_P$.

For instance, in our running example,
$\elfpP{\programref{prg:bond}}(d) = \eWprP{\programref{prg:bond}}{\alpha}(d) = d$ for $1\leq\alpha$ points out that atom $d$ is true because of fact $d$.
Similarly,\ \
$\elfpP{\programref{prg:bond}}(h) = h$
and
$\elfpP{\programref{prg:bond}}(a) = \sneg h \cdotl r_2$
reveals that atom $h$ is true because of fact $h$, and that atom $a$ is not true because fact $h$ has inhibited rule $r_2$.

Furthermore,
\begin{gather*}
\elfpP{\programref{prg:bond}} (p)
	\ \ = \ \ \eWprP{\programref{prg:bond}}{\alpha}(p)
  \ \ = \ \ (\sneg(\sneg h \cdotl r_2) * d ) \cdotl r_1
  \ \ = \ \ (\sneg\sneg h  * d ) \cdotl r_1   +  (\sneg r_2  * d ) \cdotl r_1
\end{gather*}
for $2\leq\alpha$.
That is, Bond has been paralysed because fact $h$ has enabled drug $d$ to cause the paralysis by means of rule $r_1$.
This corresponds to the justification $(\sneg\sneg h * d ) \cdotl r_1$. Notice how the real cause $d$ is a positive label (not in the scope of negation) whereas the enabler $h$ is in the scope of a a double negation $\sneg\sneg h$. Justification~\mbox{$(\sneg r_2 * d ) \cdotl r_1$} means that $d\cdotl r_1$ would have been sufficient to cause $p$, had not been present $r_2$. 
This example is also useful for illustrating the importance of axiom \emph{appl. negation}.
By directly evaluating the body of rule $r_1$,
we have seen that
\mbox{$\eWprP{\!\programref{prg:bond}}{2}(p)  \!=\! (\sneg(\sneg h \cdot r_2) * d ) \cdot r_1$}.
Then, axiom \emph{appl. negation} allows us to break the dependence between $\sneg h$ and $r_2$ into enablers and inhibitors:
\mbox{$\sneg(\sneg h \cdot r_2) = \sneg(\sneg h * r_2) = \sneg\sneg h + \sneg r_2$}
and, applying distributivity,
we obtain one enabled justification, $(\sneg\sneg h * d ) \cdotl r_1$, and one disabled one, $ (\sneg r_2 * d ) \cdotl r_1$.

In our previous example, the least and greatest fixpoint coincided $\elfpP{\programref{prg:bond}} = \egfpP{\programref{prg:bond}} = \eWprP{\programref{prg:bond}}{2}$.
To \review{R2.9}{illustrate the case where} this does not hold consider, for instance, the program $\newprogram\label{prg:cycle}$ formed by the following negative cycle:
\begin{gather*}
\begin{IEEEeqnarraybox}[][t]{lC lCl}
r_{1} &: \ \ & a &\leftarrow& \Not b
\end{IEEEeqnarraybox}
\hspace{2cm}
\begin{IEEEeqnarraybox}[][t]{lC lCl}
r_{2} &: \ \ & b &\leftarrow& \Not a
\end{IEEEeqnarraybox}
\end{gather*}
In this case, the least fixpoint of $\eWp^2$ assigns $\elfpP{\programref{prg:cycle}}(a)= \sneg r_2 \cdotl r_1$ and $\elfpP{\programref{prg:cycle}}(b)= \sneg r_1 \cdotl r_2$, while, in its turn, the greatest fixpoint of $\eWp^2$ corresponds to $\egfpP{\programref{prg:cycle}}(a)= r_1$ and $\egfpP{\programref{prg:cycle}}(b)= r_2$. If we focus on atom $a$, we can observe that it is not concluded to be true, since the least fixpoint $\elfp$ has only provided one disabled justification $\sneg r_2 \cdotl r_1$ meaning that $r_2$ is acting as a disabler for $a$. But, on the other hand, $a$ cannot be false either since the greatest fixpoint provides an enabled justification $r_1$ for being non-false (remember that $\egfp$ provides justifications for non-false atoms). As a result, we get that $a$ is left undefined because $r_2$ prevents it to become true while $r_1$ can still be used to conclude that it is not false.

To capture these intuitions, we provide some definitions. A \emph{query literal (q-literal)} $\rL$ is either an atom $\rA$, its default negation `$\Not \rA$' or the expression `$\Undef \rA$' meaning that $\rA$ is undefined.

\begin{definition}[Causal well-founded model]~Given a program $\cP$, its \emph{causal well-founded model} $\ewfm$ is a mapping from q-literals to values~s.t.
\begin{IEEEeqnarray*}{c +x*}
\ewfm(\rA) \eqdef \elfp(\rA) 
\hspace{10pt}
\ewfm(\Not \rA)\eqdef \ \sneg\egfp(\rA)
\hspace{10pt}
\ewfm(\Undef \rA) \eqdef \sneg \ewfm(\rA) * \sneg \ewfm(\Not \rA)  &\qed
\end{IEEEeqnarray*}
\end{definition}

Let $l$ be a label occurrence in a term $t$ in the scope of $n\geq 0$ negations. We say that $l$ is an \emph{odd} or an \emph{even} occurrence if $n$ is odd or even, respectively. We further say that $l$ is \review{R2.10}{a \emph{strictly} even} occurrence if it is even and $n>0$. 

\begin{definition}[Justification]
\label{def:justification}
Given a program $P$ and a q-literal~$\rL$ we say that \review{R3.6}{a term $E$ with no sums} is a \emph{(sufficient causal) justification} for $\rL$ iff \ $E\leq~\ewfm(\rL)$.
Odd (resp. strictly even) labels\footnote{We just mention labels, and not their occurrences because terms are in NNF and $E$ contains no sums. Thus, having odd and even occurrences of a same label at a same time would mean that $E=0$.} in $E$ are called \emph{inhibitors} (resp. \emph{enablers}) of $E$. A justification is said to be \emph{inhibited} if it contains some inhibitor and it is said to be \emph{enabled} otherwise.\qed
\end{definition}

True atoms will have at least one enabled justification, whereas false atoms only contain disabled justifications. As an example of a query for a plain atom $A$, take the already seen explanation for $p$ in Bond's example program \programref{prg:bond}: $\ewfmP{\programref{prg:bond}}(p)  \!=\! \elfpP{\programref{prg:bond}}(p) \!=\! (\sneg\sneg h * d ) \cdotl r_1 + (\sneg r_2 * d ) \cdotl r_1$. We have here two justifications for atom $p$, let us call them \mbox{$E_1 \!=\!(\sneg\sneg h \!*\! d ) \cdotl r_1$} and $E_2 = (\sneg r_2 * d ) \cdotl r_1$. Justification $E_1$ is enabled because it contains no inhibitors (in fact, $E_1$ is the unique real support for $p$). Moreover, $h$ is an enabler in $E_1$ because it is strictly even (it is in the scope of double negation) whereas $d$ is a productive cause, since it is not in the scope of any negation. On the contrary, $E_2$ is disabled because it contains the inhibitor $r_2$ (it occurs in the scope of one negation). Intuitively, $r_2$ has prevented $d \cdotl r_1$ to become a justification of $p$. On the other hand, for atom $a$ we had $\ewfmP{\programref{prg:bond}}(a)  \!=\! \sneg h \cdot r_2$ that only contains an inhibited justification (being $h$ the inhibitor), and so, atom~$a$ is not true.
Now, if we query about the negative q-literal $\Not a$, we obtain $\ewfmP{\programref{prg:bond}}(\Not a)  \!=\! \sneg\egfpP{\programref{prg:bond}}(a)$ which in this case happens to be $\sneg\elfpP{\programref{prg:bond}}(a)  \!=\! \sneg(\sneg h \cdot r_2) = \sneg \sneg h + \sneg r_2$. That is, q-literal $\Not a$ holds, being enabled by $h$. Moreover, $\sneg r_2$ points out that removing $r_2$ would suffice to cause $\Not a$ too. It is easy to see that the explanations we can get for q-literals $\Not \rA$ or $\Undef \rA$ will have all their labels in the scope of negation (either as inhibitors or as enablers).

To illustrate a query for $\Undef A$, let us return to program~$\programref{prg:cycle}$ whose standard well-founded model left both $a$ and $b$ undefined. Given the values we obtained in the least and greatest fixpoints, the causal WFM will assign $\ewfmP{\programref{prg:cycle}}(a) = \sneg r_2 \cdotl r_1$ and
$\ewfmP{\programref{prg:cycle}}(b)= \sneg r_1 \cdotl r_2$, that is, $r_2$ prevents $r_1$ to cause $a$ and $r_1$ prevents $r_2$ to cause~$b$. Furthermore, the values assigned to their respective negations, $\ewfmP{\programref{prg:cycle}}(\Not a) = \sneg r_1$ and $\ewfmP{\programref{prg:cycle}}(\Not b)= \sneg r_2$, point out that atoms~$a$ and~$b$ are not false because rules $r_1$ and $r_2$ have respectively prevented them to be so. Finally, we obtain that $\Undef a$ is true because
\begin{gather*}
\ewfm(\Undef a)
	= \sneg \ewfmP{\programref{prg:cycle}}(a) * \sneg \ewfmP{\programref{prg:cycle}}(\Not a)
	= (\sneg\sneg r_2+\sneg r_1) * \sneg\sneg r_1
	= \sneg\sneg r_2 * \sneg\sneg r_1
\end{gather*}
that is, rules $r_1$ and $r_2$ together have made $a$ undefined. Similarly, $b$ is also undefined because of rules $r_1$ and $r_2$, $\ewfm(\Undef b) = \sneg\sneg r_1 * \sneg\sneg r_2$.


The next theorem shows that the literals satisfied by the standard WFM are precisely those ones containing at least one enabled justification in the causal WFM.

\begin{theorem}\label{thm:wellf.justification<-$>$weff.standard}
Let $P$ be a labelled logic program  over a signature~$\signature$ where $\lb$ is a finite set of labels and let $W_P$ its (standard) well-founded model.
A q-literal \mbox{$\rL$} holds with respect to $W_P$ if and only if there is some \mbox{enabled} justification $E$ of $\rL$, that is, \mbox{$E\leq\ewfm(\rL)$} and $E$ does not contain odd negative labels.\qed
\end{theorem}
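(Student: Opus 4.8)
The plan is to project the whole causal algebra onto classical truth values and read the theorem off that projection.

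First I would define $\tau\colon\evalues\to\mathbf{2}$, the homomorphism onto the two-element Boolean algebra $\mathbf{2}=\{0,1\}$ obtained by sending every label to $1$ and reading `$*$' and `$\cdot$' as conjunction, `$+$' as disjunction, and `$\sneg$' as Boolean negation. This $\tau$ is well defined because $\mathbf{2}$, under these readings, validates every axiom of Definition~\ref{def:values}: the completely distributive Stone lattice axioms and the axioms of Figures~\ref{fig:appl} and~\ref{fig:neg} all hold in $\mathbf{2}$, the only checks that are not immediate being \emph{product distributivity} and \emph{appl.\ negation}, and both become trivial once `$\cdot$' is `$\wedge$'. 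Being a homomorphism into a complete Boolean algebra, $\tau$ preserves arbitrary sums and products; in particular it is monotone, $\tau(\sneg t)=\neg\,\tau(t)$, and $\tau(t\cdot r_i)=\tau(t)$ for every rule label $r_i$.

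The key lemma is: for every $t\in\evalues$, $\tau(t)=1$ if and only if $t$ admits an enabled justification (a sum-free term $E\leq t$ with no odd label). For the forward direction, write $t$ in DNF as $\sum_i E_i$; since $\tau$ preserves sums, $1=\tau(t)=\bigvee_i\tau(E_i)$, so $\tau(E_j)=1$ for some $j$. By the shape of DNF terms, $E_j$ is a transitively closed product whose factors are elementary terms or applications of elementary terms, and $\tau(E_j)=1$ forces none of these to involve an inhibitor $\sneg l$ (which $\tau$ maps to $0$); hence $E_j$ has no odd label and $E_j\leq t$, i.e.\ $E_j$ is an enabled justification of $t$. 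Conversely, if $E\leq t$ is sum-free with no odd label, then in DNF $E$ is a product of labels, double negations $\sneg\sneg l$, and applications thereof, so $\tau(E)=1$ and, by monotonicity, $\tau(t)\geq\tau(E)=1$. (The degenerate values do not interfere: $0=\sum\emptyset$ is a sum, hence never a justification, and $\tau(0)=0$; while $1=\prod\emptyset$ is an enabled justification and $\tau(1)=1$.)

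Next I would lift $\tau$ pointwise to interpretations, writing $\overline{J}\eqdef\tau\circ J$, and show that it intertwines the ECJ operator $\eWp$ with the classical anti-monotone operator $\Lambda_P$ of the well-founded semantics — the one whose squared least and greatest fixpoints give, respectively, the true and the non-false atoms of $W_P$~\cite{van1991well} — that is, $\overline{\eWp(\eI)}=\Lambda_P(\overline{\eI})$. Applying $\overline{(\cdot)}$ to the reduct $\ereduct{\cP}{\eI}$ turns each body value $\sneg\eI(\rC_j)$ into $\neg\,\overline{\eI}(\rC_j)\in\{0,1\}$, so $\overline{(\cdot)}$ carries $\ereduct{\cP}{\eI}$ to the Gelfond--Lifschitz reduct of $\cP$ with respect to $\overline{\eI}$ (a body term $t$ being read as the truth constant $\tau(t)$); since $\tau$ is a sum-preserving homomorphism that fixes every rule label, a transfinite induction on the stages of the $\etp$-iteration (Theorem~\ref{thm:tp.properties}) shows $\tau$ commutes with the least-model construction, yielding the intertwining identity. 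Consequently $\tau$ intertwines $\eWp^2$ with $\Lambda_P^2$, so $\overline{\elfp}=\lfp(\Lambda_P^2)$ is the set of atoms true in $W_P$, and $\overline{\egfp}=\overline{\eWp(\elfp)}=\Lambda_P(\overline{\elfp})=\Lambda_P(\lfp\Lambda_P^2)=\gfp(\Lambda_P^2)$ is the set of atoms non-false in $W_P$.

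The conclusion then follows for each kind of q-literal $\rL$, using the key lemma and $\tau(\sneg t)=\neg\,\tau(t)$. If $\rL=\rA$: $\rA$ holds in $W_P$ iff $\rA$ is true iff $\tau(\elfp(\rA))=\tau(\ewfm(\rA))=1$ iff $\rA$ has an enabled justification. If $\rL=\Not\rA$: $\Not\rA$ holds iff $\rA$ is false iff $\tau(\egfp(\rA))=0$ iff $\tau(\sneg\egfp(\rA))=\tau(\ewfm(\Not\rA))=1$ iff $\Not\rA$ has an enabled justification. If $\rL=\Undef\rA$: it holds iff $\rA$ is neither true nor false iff $\tau(\elfp(\rA))=0$ and $\tau(\egfp(\rA))=1$ iff $\tau\big(\sneg\ewfm(\rA)*\sneg\ewfm(\Not\rA)\big)=\tau(\ewfm(\Undef\rA))=1$ iff $\Undef\rA$ has an enabled justification. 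I expect the main obstacle to be the key lemma, in particular its reliance on the normal-form theory of the algebra — that every value has a DNF of the shape described in the paper and that such a DNF disjunct maps to $1$ under $\tau$ precisely when it is free of inhibitors — since this rests on the syntactic normalisation machinery (inherited from CG and here extended with `$\sneg$') rather than on the lattice axioms alone; granted that and the well-definedness of $\tau$, the intertwining step is a routine transfinite induction.
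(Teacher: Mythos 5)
Your proof is correct, but it follows a genuinely different route from the paper's. The paper splits the two directions: for ``enabled justification $\Rightarrow$ $\rL$ holds in $W_P$'' it first maps ECJ justifications to non-hypothetical WnP justifications via Theorem~\ref{thm:wellf.justification<-$>$weff.provenace} and then invokes the external result of D\~amasio and Pereira (Theorem~\ref{thm:prov.models}) with $R=N=\emptyset$ to transfer back to the standard WFM; for the converse it proves Lemma~\ref{lem:gamma.causal-$>$standard}, a stage-by-stage induction relating the two-valued operator $\sWp$ to $\eWp$ through the predicate ``there exists an enabled justification,'' and then iterates the alternating fixpoint. You instead collapse everything into a single evaluation homomorphism $\tau:\evalues\to\mathbf{2}$ (every label $\mapsto 1$, `$\cdot$' read as `$*$', `$\sneg$' as Boolean negation), which is legitimate because $\mathbf{2}$ validates all axioms of Definition~\ref{def:values} and of Figures~\ref{fig:appl} and~\ref{fig:neg}, plus the key lemma that $\tau(t)=1$ iff $t$ has an enabled justification (which, as you note, leans on the DNF normal form of Proposition~\ref{prop:negation.normal.form}, exactly the machinery the paper's Definition~\ref{def:justification} also presupposes). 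Your intertwining step $\overline{\eWp(\eI)}=\Lambda_P(\overline{\eI})$ is essentially the same computation as the paper's Lemma~\ref{lem:gamma.causal-$>$standard}, just packaged algebraically, and your case analysis on $\rA$, $\Not\rA$, $\Undef\rA$ matches the paper's. What the two approaches buy: the paper's detour through WnP makes Theorem~\ref{thm:wellf.justification<-$>$weff.standard} nearly a corollary of results it needs anyway for Section~\ref{sec:WnP}, at the price of depending on the WnP correspondence and an external theorem; your argument is self-contained, treats both directions uniformly, and makes explicit that the standard WFM is just the image of the causal WFM under a two-valued quotient of the algebra --- a cleaner conceptual statement, provided one accepts (at the same level of rigour the paper applies to $\lambdac$ and $\lambdap$) that $\tau$ is well defined on equivalence classes and preserves the infinite sums used at limit stages.
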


\noindent Back to our example program $\programref{prg:bond}$, as we had seen, atom $p$ had a unique enabled justification $E_1 = (\sneg\sneg h * d ) \cdotl r_1$. The same happens for atoms $d$ and $h$ whose respective justifications are just their own atom labels. Therefore, these three atoms hold in the standard WFM, $W_{\programref{prg:bond}}$. On the contrary, as we discussed before, the only justification for $a$, $\ewfmP{\programref{prg:bond}}(a) = \sneg h \cdotl r_2$, is inhibited by $h$, and thus, $a$ does not hold in $W_{\programref{prg:bond}}$. 
The interest of an inhibited justification for a literal is to point out ``potential'' causes that have been prevented by some abnormal situation. In our case, the presence of $\sneg h$ in $\ewfmP{\programref{prg:bond}}(a) = \sneg h \cdotl r_2$ points out that an exception $h$ has prevented $r_2$ to cause $a$.
When the exception is removed, the inhibited justification (after removing the inhibitors) becomes an enabled justification.

In our running example, if we consider a program~$\newprogram\label{prg:bond.remove-h}$ obtained by removing the fact $h$ from \programref{prg:bond}, then $\ewfmP{\programref{prg:bond.remove-h}}(a) = r_2$ points out that $a$ has been caused by rule $r_2$ in this new scenario. This intuition about inhibited justifications is formalized as follows.

\begin{definition}\label{def:varrho}
Given a term $t$ in DNF, by $\varrho_x:\values \longrightarrow\values$, we denote the function that removes  the elementary term $x$ from $t$ as follows:
\begin{align*}
\varrho_x(t) \ &\eqdef \ \begin{cases}
    \varrho_x(u) \otimes \varrho_x(w)
        &\text{if } t=u \otimes v \text{ with } \otimes\in\set{+,*,\cdot}
    \\
    1   &\text{if } \sneg\sneg t \text{ is equivalent to } \sneg\sneg x
    \\
    0   &\text{if } t \text{ is equivalent to } \sneg x 
\end{cases}
\end{align*}
Note that we have assumed that $t$ is in DNF. Otherwise, $\varrho_x(t)\eqdef\varrho_x(u)$ where $u$ is an equivalent term in DNF.\qed
\end{definition}

\begin{theorem}\label{thm:wellf.inhibited.justification-$>$sm.cause}
Let $P$ be a program over a signature~$\signature$ where $\lb$ is a finite set of labels.
Let $\cQ$ be the result of removing from $\cP$ all rules labelled by some $r_i \in\lb$.
Then, the result of removing $r_i$ from the justifications of some atom $\rA$ with respect to program $P$ are justifications of $\rA$ with respect to $Q$, that is,
$\varrho_{\sneg r_i}(\ewfmP{P}(\rA)) \leq \ewfmP{Q}(\rA)$.
\end{theorem}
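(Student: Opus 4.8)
The plan is to reduce, via $\ewfmP{\cP}(\rA)=\elfpP{\cP}(\rA)$ and $\elfpP{\cP}=\lfp(\eWpP{\cP}^{2})$, to proving $\varrho_{\sneg r_i}\big(\elfpP{\cP}(\rA)\big)\leq\elfpP{\cQ}(\rA)$ for every atom $\rA$, and then to run a transfinite induction along the alternating‑fixpoint construction of $\elfp$, fuelled by two transfer lemmas comparing $\eWpP{\cP}$ and $\eWpP{\cQ}$ through $\varrho_{\sneg r_i}$. The first step is to pin down the algebra of $\varrho_{\sneg r_i}$: on terms in DNF, removing $\sneg r_i$ with $\varrho_{\sneg r_i}$ amounts to applying the unique endomorphism of the causal algebra that sends the label $r_i$ to $0$ and fixes every other label — an inhibitor $\sneg r_i$ becomes $1$ and so disappears from its product, while a productive occurrence $r_i$ or an enabler $\sneg\sneg r_i$ annihilates its product (consistently with the fact that deleting rule $r_i$ destroys any derivation that used it and disables any inhibitor it provided). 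Hence $\varrho_{\sneg r_i}$ distributes over $+$ (including infinite sums), over $*$, over $\cdot$, and over $\sneg$ (after renormalising to DNF), and it maps $\botI$ to $\botI$; it is, however, \emph{not} monotone. Extend it to interpretations pointwise, $\varrho_{\sneg r_i}(\eI)(\rA)\eqdef\varrho_{\sneg r_i}(\eI(\rA))$.

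Next I would establish two transfer lemmas. Since $\cQ$ is $\cP$ with the rule labelled $r_i$ deleted, for any interpretation $\eI$ the reduct $\ereduct{\cQ}{\eI}$ equals $\ereduct{\cP}{\eI}$ minus that one rule, and by Theorem~\ref{thm:tp.properties} $\eWpP{\cP}(\eI)=\etprP{\ereduct{\cP}{\eI}}{\omega}$ and $\eWpP{\cQ}(\eI)=\etprP{\ereduct{\cQ}{\eI}}{\omega}$. The core computation is: for all interpretations $\eI$ (of $\cP$), $\eJ$ (of $\cQ$), $\eK$, and every atom $\rH$,
\begin{gather*}
\eJ\leq\varrho_{\sneg r_i}(\eI)\ \Longrightarrow\ \varrho_{\sneg r_i}\big(\etpP{\ereduct{\cP}{\eI}}(\eK)(\rH)\big)\ \leq\ \etpP{\ereduct{\cQ}{\eJ}}\big(\varrho_{\sneg r_i}(\eK)\big)(\rH),
\end{gather*}
together with the reverse inequality when $\varrho_{\sneg r_i}(\eI)\leq\eJ$. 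Indeed, distributing $\varrho_{\sneg r_i}$ over the sum defining $\etpP{\ereduct{\cP}{\eI}}(\eK)(\rH)$, the summand $(\cdots)\cdot r_i$ coming from the deleted rule is annihilated; each remaining summand comes from a rule of $\cQ$, and $\varrho_{\sneg r_i}$ turns its body value $\big(\cdots * \sneg\eI(\rC_1)*\cdots\big)$ into $\big(\cdots * \sneg\varrho_{\sneg r_i}(\eI(\rC_1))*\cdots\big)$, which by anti‑monotonicity of $\sneg$ and the hypothesis sits below (resp.\ above) the body value $\big(\cdots * \sneg\eJ(\rC_1)*\cdots\big)$ used in $\ereduct{\cQ}{\eJ}$. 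Iterating this identity along the $\etp$‑chains (using monotonicity of $\etpP{\ereduct{\cQ}{\eJ}}$ and that $\varrho_{\sneg r_i}$ commutes with sums) and taking the $\omega$‑limit yields
\begin{align*}
\eJ\leq\varrho_{\sneg r_i}(\eI)\ &\Longrightarrow\ \varrho_{\sneg r_i}\big(\eWpP{\cP}(\eI)\big)\ \leq\ \eWpP{\cQ}(\eJ),\\
\varrho_{\sneg r_i}(\eI)\leq\eJ\ &\Longrightarrow\ \eWpP{\cQ}(\eJ)\ \leq\ \varrho_{\sneg r_i}\big(\eWpP{\cP}(\eI)\big).
\end{align*}

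Then comes the alternating induction. Put $\eI^{\cP}_\alpha\eqdef\eWprP{\cP}{\alpha}$ (the ascending $\eWpP{\cP}^{2}$‑chain from $\botI$, converging to $\elfpP{\cP}$) and $\eU^{\cP}_\alpha\eqdef\eWpP{\cP}(\eI^{\cP}_\alpha)$ (its descending companion, so $\eI^{\cP}_{\alpha+1}=\eWpP{\cP}(\eU^{\cP}_\alpha)$), and likewise $\eI^{\cQ}_\alpha,\eU^{\cQ}_\alpha$ for $\cQ$. I would prove by transfinite induction the paired invariant $\varrho_{\sneg r_i}(\eI^{\cP}_\alpha)\leq\eI^{\cQ}_\alpha$ and $\eU^{\cQ}_\alpha\leq\varrho_{\sneg r_i}(\eU^{\cP}_\alpha)$. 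The base case follows from $\varrho_{\sneg r_i}(\botI)=\botI$ and the first implication above with $\eI=\eJ=\botI$. At a successor: the second invariant at $\alpha$ feeds the first implication (with $\eI=\eU^{\cP}_\alpha$, $\eJ=\eU^{\cQ}_\alpha$) to give $\varrho_{\sneg r_i}(\eI^{\cP}_{\alpha+1})\leq\eI^{\cQ}_{\alpha+1}$, and this in turn feeds the second implication (with $\eI=\eI^{\cP}_{\alpha+1}$, $\eJ=\eI^{\cQ}_{\alpha+1}$) to give $\eU^{\cQ}_{\alpha+1}\leq\varrho_{\sneg r_i}(\eU^{\cP}_{\alpha+1})$. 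At a limit $\lambda$ the first invariant passes to the join since $\varrho_{\sneg r_i}$ commutes with sums, and the second then follows from the second implication again. Finally, as $\lb$ is finite the chains stabilise at $\omega$, so $\varrho_{\sneg r_i}(\elfpP{\cP})=\varrho_{\sneg r_i}\big(\sum_\alpha\eI^{\cP}_\alpha\big)=\sum_\alpha\varrho_{\sneg r_i}(\eI^{\cP}_\alpha)\leq\sum_\alpha\eI^{\cQ}_\alpha=\elfpP{\cQ}$, which gives the claim for every atom.

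The main obstacle is the clash between the \emph{anti‑monotone} operator $\eWpP{\cP}$ and the \emph{non‑monotone} operator $\varrho_{\sneg r_i}$: a naive single induction does not close, which is exactly why the argument has to track a paired invariant on the ascending (toward $\elfp$) and descending (toward $\egfp$) legs of the alternating fixpoint and why the transfer step must be supplied in two versions with opposite inequality directions. The remaining technical points — that $\varrho_{\sneg r_i}$ genuinely commutes with `$\sneg$' once both sides are brought to DNF (this is precisely what the reformulation of $\varrho_{\sneg r_i}$ as the endomorphism $r_i\mapsto 0$ buys), and that the contribution of the deleted rule is always of the shape $(\cdots)\cdot r_i$ and hence killed — are routine once that reformulation is in place.
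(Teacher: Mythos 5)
Your proposal is correct and follows essentially the same route as the paper: your first transfer implication is exactly the paper's Lemma~\ref{lem:eWp.label.removing-rules}, and your transfinite induction along the $\eWp^2$-iteration is the paper's induction, except that you make explicit the mirror implication and the paired invariant on the two legs of the alternation (which the paper's written proof uses only tacitly, citing its one-directional lemma for both inequalities), and your reading of $\varrho_{\sneg r_i}$ as the substitution $r_i\mapsto 0$ supplies precisely the treatment of productive occurrences, i.e.\ of terms of the form $(\cdots)\cdot r_i$, that the paper's lemma proof relies on. Two immaterial slips: the base case of your second invariant should invoke the \emph{second} implication with $\eI=\eJ=\botI$ (the first yields the opposite inequality, and the needed hypothesis $\varrho_{\sneg r_i}(\botI)\leq\botI$ is trivial), and $\varrho_{\sneg r_i}$, being a substitution homomorphism preserving `$*$' and `$+$', is in fact monotone, although nothing in your argument depends on the contrary remark.
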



\section{Relation to Causal Graph Justifications}
\label{sec:CG}

We discuss now the relation between ECJ and CG approaches.
Intuitively, ECJ extends CG causal terms by the introduction of the new negation operator `$\sneg$'.
Semantically, however, there are more differences than a simple syntactic extension. A first minor difference is that ECJ is defined in terms of a WFM, whereas CG defines (possibly) several causal stable models. In the case of stratified programs, this difference is irrelevant, since the WFM is complete and coincides with the unique stable model. A second, more important difference is that CG exclusively considers productive causes in the justifications, disregarding additional information like the inhibitors or enablers from ECJ. As a result, a false atom in CG has \emph{no justification} -- its causal value is $0$ because there was no way to derive the atom. For instance, in program $\programref{prg:bond}$, the only CG stable model $I$ just makes $I(a)=0$ and we lose the inhibited justification $\sneg h \cdot r_2$ (default $r_2$ could not be applied). True atoms like $p$ also lose any information about enablers: $I(p) = d \cdotl r_1$ and nothing is said about $\sneg\sneg h$. Another consequence of the CG orientation is that negative literals $\Not A$ are never assigned a cause (different from $0$ or~$1$), since they cannot be ``derived'' or produced by rules. In the example, we simply get $I(\Not a)=1$ and $I(\Not p)=0$.

To further illustrate the \review{R2.11}{similarities} and differences between ECJ and CG, consider the following program~\newprogram\label{prg:yale.broken} capturing a variation of the Yale Shooting Scenario.
\begin{gather*}
\begin{IEEEeqnarraybox}[][t]{lClCl}
d_{t+1} &:& dead_{t+1} &\leftarrow& shoot_t, \ loaded_t,\ \Not ab_t
\\
l_{t+1} &:& loaded_{t+1} &\leftarrow& load_t
\\
a_{t+1} &:& ab_{t+1} &\leftarrow& water_t
\end{IEEEeqnarraybox}
\hspace{1.25cm}
\begin{IEEEeqnarraybox}[][t]{l}
\overline{loaded}_0
\\
\overline{dead}_0
\\
\overline{ab}_0
\end{IEEEeqnarraybox}
\hspace{1.25cm}
\begin{IEEEeqnarraybox}[][t]{lCl}
load_1
\\
water_3
\\
shoot_8
\end{IEEEeqnarraybox}
\end{gather*}
plus the following rules corresponding inertia axioms 
\[
F_{t+1} \leftarrow F_{t},\ \Not \overline{F}_{t+1}
\hspace{30pt}
\overline{F}_{t+1} \leftarrow \overline{F}_{t},\ \Not F_{t+1}
\]
for $F \in \set{loaded,\,ab,\,dead}$. Atoms of the form $\overline{A}$ represent the strong negation of $A$ and we disregard models satisfying both $A$ and $\overline{A}$.
Atom $dead_9$ does not hold in the standard WFM of~\programref{prg:yale.broken}, and so there is no CG-justification for it.
Note here the importance of default reasoning. On the one hand, the default flow of events is that the turkey, Fred, continues to be alive when nothing threats him. Hence, we do not need a cause to explain why Fred is alive. On the other hand, shooting a loaded gun would normally kill Fred, being this a cause of its death. But, in this example, another exceptional situation -- $water$ spilled out -- has \emph{inhibited} this existing threat and allowed the world to flow as if nothing had happened (that is, following its default behaviour). 

In the CG-approach, $dead_9$ is simply false by default and no justification is provided. However, a gun shooter could be ``disappointed'' since another conflicting default (shooting a loaded gun \emph{normally} kills) has not worked. Thus, an expected answer for the shooter's question ``why $\Not dead_9$?'' is  that $water_3$ broke the default, disabling $d_9$. In fact, ECJ yields the following inhibited justification for $dead_9$:
\begin{gather}
\ewfmP{\programref{prg:yale.broken}}(dead_9)=( \sneg water_3 * shoot_8 * load_1 \cdotl l_2 ) \cdot d_9
    \label{eq:e.just.yale.broken}
\end{gather}
\noindent meaning that $dead_9$ could not be derived because inhibitor $water_3$ prevented the application of rule~$d_9$ to cause the death of Fred. Note that inertia rules are not labelled, which, as mentioned before, is syntactic sugar for rules with label $1$. Since $1$ is the identity of product and application, this has the effect of not being traced in the justifications.
Note also that,
according to Theorem~\ref{thm:wellf.inhibited.justification-$>$sm.cause}, if we remove fact $water_3$ (the inhibitor) from \programref{prg:yale.broken} leading to a new program \newprogram\label{prg:yale.nobroken}, then we get:
\begin{gather}
\ewfmP{\programref{prg:yale.nobroken}}(dead_9)=(shoot_8 * load_1 \cdotl l_2) \cdot d_9
    \label{eq:e.just.yale.nobroken}
\end{gather}
\noindent which is nothing else but the result of removing $\sneg water_3$ from~\eqref{eq:e.just.yale.broken}. In fact, the only CG stable model of \programref{prg:yale.nobroken} makes this same assignment~\eqref{eq:e.just.yale.nobroken} which also corresponds to the causal graph depicted in Figure~\ref{fig:graphs2}. In the general case, CG-justifications intuitively correspond to enabled justifications after forgetting all the enablers.
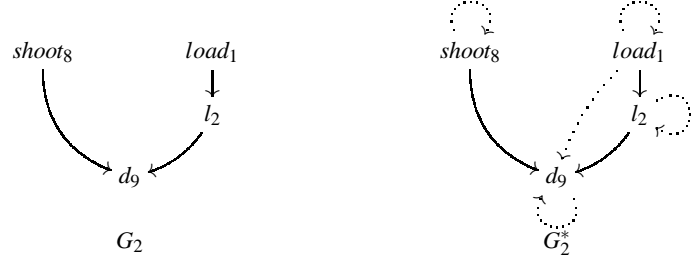
\begin{figure}[htbp]\centering
$$
\xymatrix @-5mm {
{shoot_8} \ar@/_12pt/[ddr]      & & {load_1} \ar[d]         &
&\hspace{1cm}&
{shoot_8} \ar@/_12pt/[ddr]  \ar@{.>}@(ul,ur)    & & {load_1} \ar[d] \ar@{.>}@/_5pt/[ddl]  \ar@{.>}@(ul,ur)      &
\\
                                & & {l_2} \ar@/^5pt/[dl]            &
&&
                                & & {l_2} \ar@/^5pt/[dl] \ar@{.>}@(ur,dr)           &
\\
                                & {d_9}
&&&&
                                & {d_9} \ar@{.>}@(dr,dl)
\\
                                & {G_2}
&&&&
                                & {G_2^*}
}
$$
\caption{$G_2$ is the cause of $dead_9$ in program~\programref{prg:yale.broken} while $G_2^*$ is its associated causal graphs, that is, its reflexive and transitive closure.}
\label{fig:graphs2}
\end{figure}
Formally, however, there is one more difference in the definition of causal values: CG causal values are defined as ideals for the poset of a type of graphs formed by rule labels.

\begin{definition}[Causal graph]
Given some set $\lb$ of (rule) labels,
a \emph{causal graph} (\emph{c-graph}) $G \subseteq \lb \times \lb$ is a reflexively and transitively closed set of edges.
By $\graphs$, we denote the set of causal graphs.
Given two c-graphs $G$ and $G'$, we write $G \leq G'$ when $G \supseteq G'$.\qed
\end{definition}

Intuitively, causal graphs, like~$G_2$ in Figure~\ref{fig:graphs2}, are directed graphs representing the causal structure that has produced some event.
Furthermore, $G \leq G'$ means that $G$ contains enough information to yield the \review{R2.12}{same effect as} $G'$, but perhaps more than needed (this explains $G \supseteq G'$). For this reason, we sometimes read $G \leq G'$ as ``$G'$ is \emph{stronger} than $G$.'' Causes will be $\leq$-maximal (or $\subseteq$-minimal) causal graphs.
Formally, including reflexive and transitive edges allows to capture this intuitive relation simply by the subgraph relation. Note that, since causal graphs are reflexively closed, every vertex has at least one edge (the reflexive one) and, thus, we can omit the set of vertices. Besides, for the sake of clarity, we only depict the minimum set of edges necessary for defining a causal graph (transitive and reflexive reduction).
\review{R3.7}{For instance, graph~$G_2$ in Figure~\ref{fig:graphs2} is the transitive and reflexive reduction of the causal graph $G_2^*$.}

\begin{definition}[CG Values in~\citeNP{CabalarFF14}]
Given a set of labels $\lb$, a \emph{CG causal value} is any ideal (or lower-set) for the poset $\tuple{\graphs,\leq}$.
By~$\cgvalues$, we denote the set of CG causal values.
Product~`$*$', sums~`$+$' and the $\leq$-order relation are defined as the set intersection, union and the subset relation, respectively.
Application is given by
$U\cdotl U' \, \eqdef \, \setm{ G'' \leq G \cdot G' }{ G \in U \text{ and } G' \in U'}$.\qed
\end{definition}

It has been shown in~\cite{fandinno2015thesis} that CG values can be alternatively characterised as a free algebra generated by rule labels under the axioms of a complete distributive lattice plus the axioms of Figure~\ref{fig:appl}.

\begin{definition}[CG Values in~\citeNP{fandinno2015thesis}]
Given a set of labels $\lb$, a CG term is a term without negation~`$\sneg$'.
\emph{CG causal values} are the equivalence classes of CG terms for a completely distributive (complete) lattice with meet `$*$' and join `$+$' plus the axioms of Figure~\ref{fig:appl}.
By $\cgvaluesterms$, we denote the set of CG causal values.\qed
\end{definition}

\begin{theorem}[Causal values isomorphism from~\citeNP{fandinno2015thesis}]\label{thm:algebra.values}
The function \ \ $term: \cgvalues \longrightarrow \cgvaluesterms$ \ \ given by
\begin{gather*}
term(U) \mapsto \sum_{G \in U} \ \prod_{(v_1,v_2) \in G}  v_1 \cdotl v_2 
  \label{eq:algebra.values.isompphism}
\end{gather*}
is an isomorphism between algebras $\tuple{\cgvalues,+,*,\cdot,\graphs,\emptyset}$ and $\tuple{\cgvaluesterms,+,*,\cdot,1,0}$.\qed
\end{theorem}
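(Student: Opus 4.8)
The plan is to exhibit the two-sided inverse of $term$ explicitly and identify it with the canonical homomorphism out of the free algebra $\cgvaluesterms$. For a c-graph $G$ write $\tau(G)\eqdef\prod_{(v_1,v_2)\in G} v_1 \cdotl v_2$, so that $term(U)=\sum_{G\in U}\tau(G)$, and let $\pideal{G}\eqdef\setm{H\in\graphs}{G\subseteq H}$ denote the principal CG value generated by $G$. Since $\tuple{\cgvalues,+,*,\cdot,\graphs,\emptyset}$ satisfies every axiom defining $\cgvaluesterms$ --- the axioms of a completely distributive complete lattice together with those of Figure~\ref{fig:appl}, which is the content of~\cite{CabalarFF14} --- the universal property of the free algebra yields a unique homomorphism $val\colon\cgvaluesterms\longrightarrow\cgvalues$ determined by $val(l)\eqdef\pideal{\set{(l,l)}}$ for each label $l\in\lb$. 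The whole argument then amounts to showing that $val$ is bijective with inverse $term$.

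First I would compute $val$ on the generating terms. A short calculation with the application of c-graphs gives $val(v_1\cdotl v_2)=\pideal{\set{(v_1,v_1),(v_2,v_2),(v_1,v_2)}}$, and since $val$ carries products to intersections and each c-graph is already reflexively and transitively closed, this yields $val(\tau(G))=\pideal{G}$ for every c-graph $G$. As $val$ also carries sums to unions, for any CG value $U$ we obtain
\[
val\big(term(U)\big)=val\Big(\sum\nolimits_{G\in U}\tau(G)\Big)=\bigcup\nolimits_{G\in U}\pideal{G}=U,
\]
the final equality holding because $U$ is a lower set. Hence $val\circ term=\mathrm{id}_{\cgvalues}$; in particular $val$ is surjective and $term$ is injective.

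For the injectivity of $val$ I would invoke the normal-form machinery already in place: every element of $\cgvaluesterms$ has a representative in disjunctive normal form, and for a negation-free term this representative is --- up to the provable equalities $x\cdotl x=x$ and $a\cdotl b=a*b*a\cdotl b$ that normalise the reflexive part --- a sum $\sum_{G\in S}\tau(G)$ over a set $S$ of c-graphs (the transitively closed products of the DNF being precisely the graph terms $\tau(G)$); replacing $S$ by its $\leq$-downward closure alters neither the element nor the sum. For such a down-closed $S$ the computation above gives $val\big(\sum_{G\in S}\tau(G)\big)=\bigcup_{G\in S}\pideal{G}=S$, so $val$ literally reconstructs the down-closed DNF from the element, whence distinct elements have distinct images and $val$ is injective. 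Therefore $val$ is an isomorphism and, by $val\circ term=\mathrm{id}$, $term=val^{-1}$ is the inverse isomorphism; in particular $term$ preserves $+$, $*$ and $\cdot$, and from $val(0)=\emptyset$ and $val(1)=\graphs$ we read off $term(\emptyset)=0$ and $term(\graphs)=1$.

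The routine bookkeeping above rests on two substantial facts, and these are where the real difficulty lies. The first is that the ideal-of-c-graphs structure genuinely validates every axiom of Figure~\ref{fig:appl} --- notably idempotency, the product--distributivity law, and complete distributivity when $\lb$ is infinite --- since this is precisely what makes $val$ exist. The second is the normal-form theorem, which guarantees that no negation-free term escapes the shape $\sum_G\tau(G)$: this is the ``completeness'' direction, i.e.\ that the axioms of Figure~\ref{fig:appl} are strong enough to derive every equation valid in the c-graph semantics, and it is the only place where one must argue that the free algebra is no larger than the intended model. A recurring point of care throughout is that, for infinite $\lb$, CG values need not be principal, so one must consistently work with arbitrary (infinitary) joins and the completely distributive law rather than with finite approximations.
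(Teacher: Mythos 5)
Your argument is sound, but there is nothing in the paper to compare it against: Theorem~\ref{thm:algebra.values} is imported verbatim from \cite{fandinno2015thesis} and the appendix contains no proof of it (it is used, e.g., in the proof of Theorem~\ref{thm:causal-prov.smodels.are.the.causal-smodels}, but never established here). As a reconstruction, your route is the natural one and the steps check out: the evaluation homomorphism $val$ exists by the universal property of the quotient term algebra precisely because the ideal algebra over $\graphs$ validates the lattice axioms and Figure~\ref{fig:appl} (the content of \citeNP{CabalarFF14}); the computation $val(\tau(G))=\pideal{G}$ and hence $val\circ term=\mathrm{id}$ is correct, using that the union of the edge graphs of a reflexively and transitively closed $G$ is $G$ itself; and injectivity of $val$ via the negation-free DNF of Proposition~\ref{prop:negation.normal.form}, after normalising reflexive edges with $x\cdotl x=x$ and $a\cdotl b=a*b*a\cdotl b$ (both derivable from idempotency and absorption) and closing the index set downwards, does yield that each element is recovered from its image. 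You are also right to flag where the weight sits: soundness of Figure~\ref{fig:appl} in the graph model and the completeness/normal-form statement are the substantive inputs, and in this paper the latter is only sketched (the cited thesis is where both are worked out in detail); your proof is therefore correct modulo exactly those two deferred facts, which is a reasonable division of labour given that the paper itself treats them as known.
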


Theorem~\ref{thm:algebra.values} states that CG causal values can be equivalently described either as ideals of causal graphs or as elements of an algebra of terms. Furthermore, by abuse of notation, by $G$ we also denote the ideal whose maximum element is $G$, corresponding to $term(G)$ as well. For instance, for the causal graph $G_2$ in Figure~\ref{fig:graphs2}, it follows $G_2 = term(G_2) = term(\downarrow\!\!G_2)$ with $\downarrow\!\!G_2$ the ideal whose maximum element is $G_2$. Moreover, from the equivalences in Figure~\ref{fig:appl}, it also follows that
\begin{align*}
G_2 &\ \ = \ \  shoot_8 \cdotl d_9 \ * \
                load_1 \cdotl l_2 \ * \
                l_2 \cdotl d_9 \ * \ \alpha
\\
    &\ \ = \ \  shoot_8 \cdotl d_9 \ * \
                load_1 \cdotl l_2 \ * \
                l_2 \cdotl d_9
\\
&\ \ = \ \  shoot_8 \cdotl d_9 \ * \
                load_1 \cdotl l_2 \cdotl d_9 
\\
&\ \ = \ \  (shoot_8 \ * \ load_1 \cdotl l_2 ) \cdotl d_9
\end{align*}
where \review{R3.8}{$\alpha = load_1 \cdotl d_9 * shoot_8 \cdotl shoot_8 * d_9 \cdotl d_9 * load_1 \cdotl load_1 * l_2 \cdotl l_2 * d_9 \cdotl d_9$}
is a term that, as we can see, can be ruled out and corresponds to the transitive and reflexive doted edges in $G_2^*$.
That is, justification \eqref{eq:e.just.yale.nobroken} associated to atom~$dead_9$ by the causal well-founded model of program~\programref{prg:yale.nobroken} actually corresponds to causal graph~$G_2$.

Theorem~\ref{thm:algebra.values} also formalises the intuition that opens this section: ECJ extends CG causal terms by the introduction of the new negation operator `$\sneg$'.
We formalise next the correspondence between CG and ECJ  justifications.

\begin{definition}[CG mapping]
\label{def:causal.values}
We define a mapping \mbox{$\lambdac:\evalues\longrightarrow\cgvaluesterms$} from ECJ values into CG values in the following recursive way:
\begin{align*}
\lambdac(t) &\eqdef \begin{cases}
    \lambdac(u) \otimes  \lambdac(w)
        &\text{if } t=u \otimes  v \text{ with } \otimes \in\set{+,*,\cdot}
    \\
    1   &\text{if } t=\sneg\sneg l \text{ with } l \in Lb
    \\
    0   &\text{if } t=\hspace{8pt} \sneg l \text{ with } l \in Lb
    \\
    l   &\text{if } t=\hspace{15pt} l \text{ with } l \in Lb
\end{cases}
\end{align*}
Note that we have assumed that $t$ is in DNF. Otherwise, $\lambdac(t)\eqdef\lambdac(u)$ where $u$ is an equivalent term in DNF.\qed
\end{definition}

\noindent Function $\lambdac$ maps every negated label $\sneg l$ to $0$ (which is the annihilator of both product~`$*$' and application~`$\cdot$' and the identity of addition~`$+$'). Hence $\lambdac$ removes all the inhibited justifications. Furthermore $\lambdac$ maps every doubly negated label $\sneg\sneg l$ to $1$ (which is the identity of both product~`$*$' and application~`$\cdot$'). Therefore $\lambdac$ removes all the enablers (i.e. doubly negated labels $\sneg\sneg l$) for the remaining (i.e. enabled) justifications.

A CG interpretation is a mapping \mbox{$\cI:\at\longrightarrow\cgvaluesterms$}.
The value assigned to a negative literal $\Not \rA$ by a CG interpretation~$\cI$, denoted as $\cI(\Not \rA)$, is  defined as: $\cI(\Not \rA) \eqdef 1$ if $\cI(\rA) = 0$; $\cI(\Not \rA) \eqdef 0$ otherwise.
A CG interpretation $\cI$ is a CG model of rule like~\eqref{eq:rule} iff
\begin{IEEEeqnarray}{c+x*}
\big( \ \cI(\rB_1) * \dotsc * \cI(\rB_m)
    * \cI(\Not \rC_1) * \dotsc * \cI(\Not \rC_n) \ \big) 
        \cdot r_i \ \leq \ \cI(\rH) &
     \label{eq:CG.model}
\end{IEEEeqnarray}
Notice that the value assigned to a negative literal by CG and ECJ interpretations is different.
According to \cite{CabalarFF14}, a CG interpretation $\cI$ is a \emph{CG stable model} of a program $\cP$ iff~$\cI$~is the least model of the program $\cP^\cI$. In the following, we provide an ECJ based characterisation of the CG stable models that will allow us to relate both approaches.
By $\lambdac(\eI)$ we will denote a \review{R1.6}{CG} interpretation $\cI$ s.t.
$\cI(\rA) = \lambdac(\eI(\rA))$ for every atom $\rA$.

\begin{definition}
[CG stable models]
\label{def:wellf.causal.model}
Given a program $\cP$,
a CG interpretation~$\cI$ is a \emph{CG stable model} of $\cP$ iff
there exists a fixpoint $\eI$ of the operator $\eWp^2$,
\review{R3.9}{i.e. $\eWp(\eWp(I))=I$},
such that
\mbox{$\cI=\lambdac(\eI)=\lambdac(\eWp(\eI))$}.\qed
\end{definition}

\begin{theorem}\label{thm:causal-prov.smodels.are.the.causal-smodels}
Let $\cP$ be a program over a signature~$\signature$ where $\lb$ is a finite set of labels.
Then, the CG~stable models (Definition~\ref{def:wellf.causal.model}) are exactly the causal values and causal stable models defined in \cite{CabalarFF14}.\qed
\end{theorem}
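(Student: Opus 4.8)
The plan is to show a two-way correspondence between the CG stable models in the sense of Definition~\ref{def:wellf.causal.model} (defined via fixpoints of $\eWp^2$ and the projection $\lambdac$) and the CG stable models of~\cite{CabalarFF14} (defined as least models of the CG reduct $\cP^\cI$). The backbone of the argument is a commutation lemma: the projection $\lambdac$ intertwines the ECJ reduct/least-model construction with the CG reduct/least-model construction. Concretely, first I would establish that for every ECJ interpretation $\eI$ one has $\lambdac(\eWp(\eI)) = \Gamma^{CG}_{\cP}(\lambdac(\eI))$, where $\Gamma^{CG}_{\cP}$ denotes the CG anti-monotone operator sending $\cJ$ to the least CG model of $\cP^{\cJ}$. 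This reduces to two sublemmas: (a) $\lambdac$ commutes with the reduct, i.e. the CG reduct of $\cP$ w.r.t.\ $\lambdac(\eI)$ has, rule by rule, the bodies obtained by applying $\lambdac$ to the bodies of $\cP^{\eI}$ — here the only subtlety is the clause for negative literals, since $\eI(\Not\rC)=\sneg\eI(\rC)$ while $\cI(\Not\rC)$ is $1$ or $0$ according to whether $\cI(\rC)=0$; one checks that $\lambdac(\sneg\eI(\rC))$ equals exactly this $0/1$ value, using that $\lambdac$ sends $\sneg l\mapsto 0$, $\sneg\sneg l\mapsto 1$, and is a lattice homomorphism on DNF terms; and (b) $\lambdac$ commutes with the least-model (i.e. $\etp$) construction on positive programs, which follows from Theorem~\ref{thm:tp.properties} (finite fixpoint reached in $n$ steps) together with the fact, implicit in Theorem~\ref{thm:algebra.values} and the definition of $\lambdac$, that $\lambdac$ is a continuous homomorphism for $+$, $*$ and $\cdot$, so it commutes with $\etp$ and hence with its least fixpoint.

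Granting the commutation lemma, the theorem follows by a clean transfer of fixpoints. If $\cI$ is a CG stable model in the sense of Definition~\ref{def:wellf.causal.model}, pick the witnessing fixpoint $\eI=\eWp^2(\eI)$ with $\cI=\lambdac(\eI)=\lambdac(\eWp(\eI))$; applying the commutation lemma twice gives $\Gamma^{CG}_{\cP}(\cI)=\lambdac(\eWp(\eI))=\cI$, i.e.\ $\cI$ is a fixpoint of $\Gamma^{CG}_{\cP}$, which (since $\Gamma^{CG}_{\cP}(\cI)$ is by definition the least model of $\cP^{\cI}$) is exactly the condition that $\cI$ is a CG stable model in the sense of~\cite{CabalarFF14}. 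Conversely, given a CG stable model $\cI$ of~\cite{CabalarFF14}, i.e.\ $\Gamma^{CG}_{\cP}(\cI)=\cI$, I would lift it to an ECJ fixpoint: run $\eWp^2$ starting from a suitable ECJ interpretation projecting to $\cI$ and use the commutation lemma to show that the iteration stabilises (finiteness of $\lb$ and Theorem~\ref{thm:tp.properties} bound the number of steps) at some $\eI$ with $\eWp^2(\eI)=\eI$ and $\lambdac(\eI)=\lambdac(\eWp(\eI))=\cI$; then $\eI$ witnesses that $\cI$ is a CG stable model in the sense of Definition~\ref{def:wellf.causal.model}. The matching claim for \emph{causal values} (not just the models) is immediate once the model correspondence is in place, because the CG causal value of an atom in~\cite{CabalarFF14} is precisely its value under the least model of $\cP^{\cI}$, which is $\Gamma^{CG}_{\cP}(\cI)(\rA)=\cI(\rA)$.

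The step I expect to be the main obstacle is the converse direction — producing, from an abstract CG stable model $\cI$, an actual $\eWp^2$-fixpoint $\eI$ in the ECJ algebra that projects onto it. The difficulty is that $\lambdac$ is far from injective (it collapses every inhibitor to $0$ and every enabler to $1$), so there is genuine freedom in how to lift $\cI$, and one must argue that \emph{some} lift is fixed by $\eWp^2$ rather than merely projecting correctly for one application of $\eWp$. I expect to handle this by choosing the canonical lift given by the $\eWp^2$-iteration itself — i.e.\ iterate $\eWp^2$ from below starting at $\botI$ (or from an interpretation built directly out of $\cI$) and show, using the commutation lemma and monotonicity of $\eWp^2$ together with the finite-label hypothesis, that the iteration both converges and has the desired projection at the limit; the bookkeeping that the enabler/inhibitor annotations accumulated along the way are exactly those absorbed by $\lambdac$ is where the axioms of Figure~\ref{fig:neg} (in particular \emph{appl. negation} and the Stone-algebra identities) do the real work.
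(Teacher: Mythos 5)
Your proposal takes essentially the same route as the paper: the commutation identity $\lambdac(\eWp(\eI))=\Wp(\lambdac(\eI))$ is exactly Proposition~\ref{prop:aux:thm:causal-prov.smodels.are.the.causal-smodels} there (proved as two inequalities by induction on the $\etp$ iteration, with your observation about $\lambdac(\sneg\eI(\rC))$ collapsing to the $0/1$ CG valuation being the key step), and the converse direction is handled, as you suggest, by lifting $\cI$ to the ECJ interpretation with the same negation-free values and iterating $\eWp^2$ from it, the $\lambdac$-projection remaining equal to $\cI$ at every step. One caution: your parenthetical alternative of iterating from $\botI$ would not work, since that yields the least fixpoint $\elfp$, whose projection is the well-founded-level object and need not coincide with an arbitrary CG stable model (this is precisely the gap behind the failure of the converse of Theorem~\ref{thm:wellf.justification-$>$sm.cause}); only the lift built directly from $\cI$ does the job.
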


\noindent
Theorem~\ref{thm:causal-prov.smodels.are.the.causal-smodels} shows that Definition~\ref{def:wellf.causal.model} is an alternative definition of CG causal stable models. Furthermore, it settles that every causal model corresponds to some fixpoint of the operator $\eWp^2$. Therefore, for every enabled justification there is a corresponding \mbox{CG-justification} common to all stable models. In order to formalise this idea we just take the definition of causal explanation from~\cite{CabalarFF14Jelia}.

\begin{definition}[CG-justification]
\label{def:query}
Given an interpretation $\eI$ we say that a c-graph $G$ is a \emph{(sufficient) CG-justification} for an atom $\rA$ iff \mbox{$term(G)\leq \cI(\rA)$}.\qed
\end{definition}

\noindent
Since $term(\cdot)$ is a one-to-one correspondence, we can define its inverse
$graph(v)\eqdef term^{-1}(v)$ for all $v\in\cgvaluesterms$.

\begin{samepage}
\begin{theorem}\label{thm:wellf.justification-$>$sm.cause}
Let $\cP$ be a program over a signature~$\signature$ where $\lb$ is a finite set of labels. For any enabled justification $E$ of some atom $\rA$ w.r.t. $\wfm$, i.e. \mbox{$E\leq\ewfm(\rA)$}, there is a CG-justification \mbox{$G\eqdef graph(\lambdac(E))$} of~$\rA$ with respect to any stable model $\cI$ of~$\cP$.\qed
\end{theorem}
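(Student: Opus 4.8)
The plan is to transport the inequality $E\leq\ewfm(\rA)$ along the map $\lambdac$ so that it lands inside an arbitrary CG stable model. First I would unfold the definition of a CG stable model: by Definition~\ref{def:wellf.causal.model} (and, via Theorem~\ref{thm:causal-prov.smodels.are.the.causal-smodels}, these are exactly the CG stable models of the original CG approach), every CG stable model $\cI$ of $\cP$ has the form $\cI=\lambdac(\eI)$ for some fixpoint $\eI$ of $\eWp^2$; that is, $\cI(\rA)=\lambdac(\eI(\rA))$ for every atom $\rA$. Next, recall that $\ewfm(\rA)=\elfp(\rA)$ and that $\elfp$ is, by Knaster--Tarski, the \emph{least} fixpoint of the monotone operator $\eWp^2$; hence $\elfp\leq\eI$ componentwise for every fixpoint $\eI$, and in particular $\ewfm(\rA)=\elfp(\rA)\leq\eI(\rA)$. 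Chaining this with the hypothesis $E\leq\ewfm(\rA)$ gives $E\leq\eI(\rA)$ inside the ECJ value algebra $\evalues$.

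The second ingredient is monotonicity of $\lambdac$. Since, on DNF terms, $\lambdac$ commutes with $+$, $*$ and $\cdot$, it preserves joins, so from $t\leq u$ (equivalently $t+u=u$) one gets $\lambdac(t)+\lambdac(u)=\lambdac(u)$, i.e. $\lambdac(t)\leq\lambdac(u)$. Applying this to $E\leq\eI(\rA)$ yields $\lambdac(E)\leq\lambdac(\eI(\rA))=\cI(\rA)$ in $\cgvaluesterms$.

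It then remains to read $\lambdac(E)$ back as a single c-graph. Because $E$ is a \emph{justification} it has no sum, and because it is \emph{enabled} its only negated labels occur doubly negated, i.e. as $\sneg\sneg l$; hence (in DNF) $E$ is equivalent to a transitively closed product of plain labels and of doubly negated labels. Applying $\lambdac$ sends each $\sneg\sneg l$ to the identity $1$ and keeps the plain labels, so $\lambdac(E)$ is a nonzero CG term built only from plain labels and applications, which by the isomorphism of Theorem~\ref{thm:algebra.values} equals $term(G)$ for the c-graph $G$ obtained as its reflexive--transitive closure. Thus $G=graph(\lambdac(E))$ and $term(G)=\lambdac(E)\leq\cI(\rA)$, so by Definition~\ref{def:query} $G$ is a CG-justification of $\rA$ with respect to $\cI$; since $\cI$ was an arbitrary CG stable model, the statement follows.

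I expect the genuine work to be the \emph{well-definedness} of $\lambdac$ on equivalence classes (which underpins the monotonicity step): one must check that collapsing $\sneg l\mapsto 0$ and $\sneg\sneg l\mapsto 1$ is consistent with every axiom of Figures~\ref{fig:appl} and~\ref{fig:neg} once terms are normalized to DNF --- in particular with the pseudo-complement, De~Morgan, weak-excluded-middle and \emph{appl.\ negation} laws --- so that the image really lives in $\cgvaluesterms$. A secondary, easy point is the degenerate case $\ewfm(\rA)=0$: then the only candidate is $E=0$, $\lambdac(E)=0$, and $graph(0)$ is the empty ideal rather than a c-graph, so the statement is tacitly about the non-degenerate case $E\neq 0$, which is precisely the situation in which an enabled productive justification exists.
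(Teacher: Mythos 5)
Your proposal is correct and follows essentially the same route as the paper's own proof: identify the CG stable model $\cI$ with $\lambdac(\eI)$ for a fixpoint $\eI$ of $\eWp^2$, use that $\elfp$ is the least such fixpoint to get $E\leq\elfp(\rA)\leq\eI(\rA)$, and push this through the monotone map $\lambdac$ to obtain $\lambdac(E)\leq\cI(\rA)$. Your additional remarks on the monotonicity and well-definedness of $\lambdac$ and on the degenerate case $E=0$ are points the paper leaves implicit, but they do not change the argument.
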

\end{samepage}

\noindent
As happens between the (standard) well-founded and stable model semantics, the converse of Theorem~\ref{thm:wellf.justification-$>$sm.cause} does not hold in general. That is, we may get a justification that is common to all CG-stable models but does not occur in the ECJ well-founded model. For instance, let \newprogram\label{prog:asp.not.implies.wellf} be the program consisting on the following rules:
\[
r_1: \ \ a \leftarrow \Not b
\hspace{20pt}
r_2: \ \ b \leftarrow \Not a, \ \Not c
\hspace{20pt}
c
\hspace{30pt}
r_3: \ \ c \leftarrow a
\hspace{20pt}
r_4: \ \ d \leftarrow b, \ \Not d
\]
The (standard) WFM of program~\programref{prog:asp.not.implies.wellf} is two-valued and corresponds to the unique (standard) stable model~$\set{a,c}$. Furthermore, there are two causal explanations of $c$ with respect to this unique stable model: the fact $c$ and the pair of rules $r_1 \cdotl r_3$. 
Note that when $c$ is removed $\set{a,c}$ is still the unique stable model, but all atoms are undefined in the WFM.
Hence, $r_1\cdotl r_3$ is a justification with respect to the unique stable model of the program, but not with respect to its WFM.


\section{Relation to Why-not Provenance}
 \label{sec:WnP}

An evident similarity between ECJ and WnP approaches is the use of an alternating fixpoint operator~\cite{van1989alternating} which has been actually borrowed from WnP.
However, there are some slight differences. A first one is that we have incorporated from CG the non-commutative operator `$\cdot$' which allows capturing not only which rules justify a given atom, but also the dependencies among these rules. The second is the use of a \emph{non-classical} negation `$\sneg$' that is crucial to distinguish between productive causes and enablers. This distinction cannot be represented with the classical negation `$\neg$' in WnP since double negation can always be removed. Apart from the interpretation of negation in both formalisms, there are other differences too. As an example, let us compare the justifications we obtain for $dead_9$ in program~\programref{prg:yale.nobroken}. While for ECJ we obtained \eqref{eq:e.just.yale.nobroken} (or graph $G_2$ in Figure~\ref{fig:graphs2}), the corresponding WnP justification has the form:
\begin{gather}
\begin{aligned}
l_2 \wedge d_9  \wedge load_1 &\wedge shoot_8\\
	&\wedge not(ab_1) \wedge not(ab_2) \wedge \dotsc   \wedge not(ab_7)
	\wedge not(water_0) \wedge  \dotsc   \wedge not(water_6)
\end{aligned}
    \label{eq:prov.just.yale.repair}
\end{gather}
A first observation is that the subexpression $l_2 \wedge d_9  \wedge load_1 \wedge shoot_8$ constitutes, informally speaking, a ``flattening'' of \eqref{eq:e.just.yale.nobroken} (or graph $G_2$) where the ordering among rules has been lost. We get, however, new labels of the form $not(A)$ meaning that atom~$A$ is required not to be a program fact, something that is not present in CG-justifications. For instance, \eqref{eq:prov.just.yale.repair} points out that $water$ can not be spilt on the gun along situations $0,\dotsc,7$. Although this information can be useful for debugging (the original purpose of WnP) its inclusion in a causal explanation is obviously inconvenient from a Knowledge Representation perspective, since it explicitly \emph{enumerates all the defaults} that were applied (no water was spilt at any situation) something that may easily blow up the (causally) irrelevant information in a justification.

An analogous effect happens with the enumeration of exceptions to defaults, like inertia. Take program \newprogram\label{prg:yale.noaction} obtained from \programref{prg:yale.broken} by removing all the performed actions, i.e., facts $load_1$, $water_3$, and $shoot_7$. As expected, Fred will be alive, $\overline{dead}_t$, at any situation $t$ by inertia.
ECJ will assign no cause for $dead_t$, not even any inhibited one, i.e. $\ewfm(\overline{dead}_t)=1$ and $\ewfm(dead_t)=0$ for any~$t$.
\review{R1.7}{
The absence of labels in $\ewfm(\overline{dead}_t)=1$ is, of course, due to the fact that inertia axioms are not labelled, as they naturally represent a default and not a causal law. Still, even if inertia were labelled, say, with $in_k$ per each situation $k$, we would obtain a \emph{unique cause} for $\ewfm(\overline{dead}_t)=in_1 \cdot \ \dots \ \cdot in_t$ for any $t>0$ while maintaining no cause for $\ewfm(dead_t)=0$.
However, the number of minimal WnP justifications of $dead_t$ grows quadratically, as it collects \emph{all the plans} for killing Fred in $t$ steps loading and shooting once. 
}For instance, among others, all the following:
\begin{IEEEeqnarray*}{cCcCcCcCcCcCcCl}
d_9 \wedge \neg not(load_0) &\wedge& r_2 
    &\wedge& \neg not(shoot_1)
    &\wedge& not(water_0)
    &\wedge& not(ab_1)
\\
d_9 \wedge \neg not(load_0) &\wedge& r_2 
    &\wedge& \neg not(shoot_2)
    &\wedge& not(water_0)
    &\wedge& not(water_1)
    &\wedge& not(ab_1)
    &\wedge& not(ab_2)
\\
d_9 \wedge  \neg not(load_1) &\wedge& r_2 
    &\wedge& \neg not(shoot_3)
    &\wedge& not(water_0)
    &\wedge&
\\
&&
	    &\wedge& not(water_1)
	&\wedge& not(water_2)
    &\wedge& not(ab_1)
    &\wedge& not(ab_2)
    &\wedge& not(ab_3)
\\ \dotsc
\end{IEEEeqnarray*}
are WnP-justifications for $dead_9$. The intuitive meaning of expressions of the form $\neg not(A)$ is that $dead_9$ can be justified by adding $A$ as a fact to the program. For instance,
the first conjunction means that it is possible to justify $dead_9$ by adding the facts $load_0$ and $shoot_1$ and not adding the fact $water_0$.
We will call these justifications, which contain a subterm of the form $\neg not(A)$, \emph{hypothetical} in the sense that they involve some hypothetical program modification.

\begin{definition}[Provenance values]
Given a set of labels $Lb$, a \emph{provenance term} $t$ is recursively defined as one of the following expressions \mbox{$t ::= l \ | \ \prod S \ | \ \sum S \ | \ \neg t_1$} where $l \in Lb$, $t_1$ is in its turn a provenance term and $S$ is a (possibly empty and possible infinite) set of provenance terms. \emph{Provenance values} are the equivalence classes of provenance terms under the equivalences of the Boolean algebra. We denote by $\boolAlgebra$ the set of provenance values over~$\wLb$.\qed
\end{definition}

\noindent
Informally speaking, with respect to ECJ, we have removed the application `$\cdot$' operator, whereas product `$*$' and addition~`$+$' hold the same equivalences as in Definition~\ref{def:values} and negation `$\sneg$' has been replaced by `$\neg$' from Boolean algebra. Thus, `$\neg$' is classical and satisfies all the axioms of `$\sneg$' plus \mbox{$\neg\neg t=t$}.
\review{R3.10}{Note also that, in the examples, we have followed the convention from~\cite{damasio2013justifications} of using the symbols `$\wedge$' and $\vee$ to respectively represent meet and join. However, in formal definitions, we will keep respectively using `$*$' and `$+$' for that purpose.}
We define a mapping  $\lambdap:\evalues\longrightarrow\boolAlgebra$ \ in the following recursive way:

\begin{align*}
\lambdap(t) \ &\eqdef \ \begin{cases}
    \lambdap(u) \otimes \lambdap(w)
        &\text{if } t=u \otimes v \text{ with } \otimes\in\set{+,*}
    \\
    \lambdap(u) \, * \, \lambdap(w)
        &\text{if } t=u \cdot v 
    \\
    \neg\lambdap(u)
        &\text{if } t=\sneg u
    \\
    l   &\text{if } t= l \text{ with } l \in Lb
\end{cases}
\end{align*}

\begin{definition}[Provenance]\label{def:wellf.provenance}
 Given a  program $\cP$, the why-not provenance program $\wP(P)\eqdef \cP \cup \cP'$ where $\cP'$ contains a labelled fact of the form 
$(\sneg not(\rA) : \rA)$
for each atom $\rA \in At$ not occurring in $\cP$ as a fact. We will write $\wP$ instead of $\wP(\cP)$ when the program $\cP$ is clear by the context. 
We denote by $Why_\cP(\rL)\eqdef\lambdap(\ewfmP{\wP}(\rL))$ the why-not provenance of a q-literal $\rL$.
We also say that a justification is hypothetical when $not(\rA)$ occurs oddly negated in it, non-hypothetical otherwise.\qed
\end{definition}

\begin{theorem}\label{thm:prov.correspondence}
Let $\cP$ be program over a finite signature~$\signature$. Then, the provenance of a literal according to Definition~\ref{def:wellf.provenance} is equivalent to the provenance defined by \cite{damasio2013justifications}.\qed
\end{theorem}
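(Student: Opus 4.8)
The plan is to show that the mapping $\lambdap$ is a homomorphism that commutes with the whole alternating-fixpoint construction, so that pushing it through the ECJ well-founded model of $\wP$ produces exactly the Boolean alternating fixpoint that defines the original why-not provenance. First I would recall the original definition of \cite{damasio2013justifications} in the shape we need: it is an alternating fixpoint of an antimonotonic operator $\Gamma^{Wh}$ over the Boolean provenance algebra $\boolAlgebra$, obtained from the program extended with hypothetical facts for the non-fact atoms and with $not(\rA)$-labels tracking their absence. The target equality is then $\lambdap(\ewfmP{\wP}(\rL)) = Why^{orig}_\cP(\rL)$ for every q-literal $\rL$.

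The core is a commutation lemma: for every ECJ-interpretation $\eI$ of $\wP$,
\[
\lambdap\big(\eWpP{\wP}(\eI)\big) \ = \ \Gamma^{Wh}_{\wP}\big(\lambdap\circ\eI\big).
\]
I would prove this in two steps. (a) $\lambdap$ commutes with the reduct, because $\lambdap(\eI(\Not \rC)) = \lambdap(\sneg\eI(\rC)) = \neg\,\lambdap(\eI(\rC))$ by the definition of $\lambdap$ on `$\sneg$'; hence $\lambdap$ sends the ECJ reduct $\wP^{\eI}$ to the Boolean reduct used in the original construction. (b) $\lambdap$ commutes with the least-model operator of a positive program: since $\lambdap$ maps `$+$' to `$+$', `$*$' to `$*$', and `$\cdot$' to `$*$', one gets $\lambdap(\etpP{\cQ}(\eI)) = S_{\cQ}(\lambdap\circ\eI)$ for the Boolean one-step operator $S_{\cQ}$, and then, because the signature is finite, a routine induction on the finitely many iteration stages yields $\lambdap(\lfp(\etpP{\cQ})) = \lfp(S_{\cQ})$. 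The non-injectivity of $\lambdap$ (collapse of the `$\cdot$' ordering, and $\sneg\sneg$ versus $\neg\neg$) is harmless at this level, because we match the iterates $\etpP{\cQ}\!\uparrow^{n}$ with $S_{\cQ}\!\uparrow^{n}$ stage by stage rather than comparing fixpoints directly. Composing (a) and (b) gives the displayed commutation lemma.

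From the lemma, $\lambdap\circ\eWpP{\wP}^{2} = (\Gamma^{Wh}_{\wP})^{2}\circ\lambdap$, and since the signature is finite both squared operators reach their extreme fixpoints after finitely many iterations from $\botI$ (and from the top); matching iterates stage by stage again, $\lambdap(\elfpP{\wP})$ is the least fixpoint and $\lambdap(\egfpP{\wP}) = \lambdap(\eWpP{\wP}(\elfpP{\wP}))$ the corresponding greatest fixpoint of $(\Gamma^{Wh}_{\wP})^{2}$. Feeding this through the definitions of $\ewfmP{\wP}$ and of $Why_\cP$ (Definition~\ref{def:wellf.provenance}), and using $\lambdap(\sneg\egfp(\rA)) = \neg\,\lambdap(\egfp(\rA))$ for the negative and undefined q-literals, gives $\lambdap(\ewfmP{\wP}(\rL)) = Why^{orig}_\cP(\rL)$, which is the claim.

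I expect the main obstacle to be the exact reconciliation of the two pieces of bookkeeping rather than the algebra. Concretely: (i) verifying that the ECJ extension $\wP(\cP)$ — adding the labelled facts $(\sneg not(\rA):\rA)$ — is carried by $\lambdap$ onto precisely the hypothetical-fact apparatus of \cite{damasio2013justifications}, so that $\neg not(\rA)$ plays the intended rôle of ``$\rA$ added as a fact''; and (ii) checking that the ECJ treatment of a negative body literal, $\eI(\Not \rC) = \sneg\eI(\rC)$, after $\lambdap$ coincides with the way the original operator handles $\Not \rC$ through the $not(\rC)$ labels. Both are essentially syntactic verifications, but they are where the hypothesis that $\signature$ is finite is genuinely used, and where one must be careful that the collapse performed by $\lambdap$ neither discards a needed conjunct nor manufactures a spurious one; everything else is a straightforward transfer along a continuous lattice homomorphism.
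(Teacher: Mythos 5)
Your proposal is correct and follows essentially the same route as the paper: the paper establishes your commutation lemma as Proposition~\ref{prop:aux:thm:prov.correspondence}(i), proved via two one-sided inequalities (Lemmas~\ref{lem:gamma.causal-$>$provenance} and~\ref{lem:gamma.causal<-provenance}) by induction on the iterates of the direct-consequence operators of the reducts, with $\lambdap(\sneg t)=\neg\lambdap(t)$ (Lemma~\ref{lem:lambdap.neg.homomorphism}) handling the negative body literals exactly as in your step~(a). The only cosmetic difference is that the paper routes the iteration of the squared operator through an auxiliary closure $\lambdaq(t)=\sneg\sneg t$ before reapplying the commutation, which your direct stage-by-stage matching renders unnecessary.
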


\begin{theorem}
\label{thm:well.non-positive.justifications.modify.program}
Let $\cP$ be program over a finite signature~$\signature$.
$\ewfm$ is the result of removing all non-hypothetical justification from $\ewfmP{\wP}$ and each occurrence of the form $\sneg\sneg not(\rA)$ for the remaining ones, that is,
$\ewfm = \varrho(\ewfmP{\wP})$ where $\varrho$ is the result of removing every label of the form $not(\rA)$, that is $\varrho$ is the composition of
$\varrho_{not(\rA_1)} \circ \varrho_{not(\rA_2)} \circ \dotsc \circ
\varrho_{not(\rA_n)}$ with $\at= \set{\rA_1, \rA_2, \dotsc, \rA_n }$.\qed
\end{theorem}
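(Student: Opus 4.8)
The plan is to prove the value equality $\ewfm(\rL)=\varrho(\ewfmP{\wP}(\rL))$ for every q-literal $\rL$; the informal reading in the statement then follows by inspecting what $\varrho$ does to a DNF term. Recall from Definition~\ref{def:varrho} that $\varrho=\varrho_{not(\rA_1)}\circ\dots\circ\varrho_{not(\rA_n)}$ coincides, on elementary terms and on the constants $0,1$, with the substitution $[\,not(\rA_i):=1\,]_i$; since $\evalues$ is the free completely distributive Stone algebra with `$\cdot$' over the labels (Definition~\ref{def:values}), this substitution extends to an endomorphism of $\evalues$ that commutes with $+,*,\cdot,\sneg$ and with infinite sums and products. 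In particular $\varrho(\sneg not(\rA))=\sneg 1=0$, so $\varrho$ annihilates exactly those products (justifications) in which some $not(\rA)$ occurs oddly, i.e. the hypothetical ones, and $\varrho(\sneg\sneg not(\rA))=\sneg\sneg 1=1$, so it erases exactly the enabler occurrences $\sneg\sneg not(\rA)$ from the remaining, non-hypothetical, justifications: this is the operation announced in the statement. So it suffices to establish the value equality.

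The first step is the commutation lemma $\varrho\circ\eWpP{\wP}=\eWpP{\cP}\circ\varrho$, where $\varrho$ acts on interpretations pointwise, $(\varrho\eI)(\rA)\eqdef\varrho(\eI(\rA))$ (note $(\varrho\eI)(\Not\rC)=\sneg\varrho(\eI(\rC))=\varrho(\eI(\Not\rC))$, as $\varrho$ commutes with $\sneg$). Since the members of $\cP'$ are positive facts, the reduct splits as $\wP^\eI=\cP^\eI\cup\cP'$, so for every atom $\rH$ the one-step operator satisfies $\etpP{\wP^\eI}(\eJ)(\rH)=\etpP{\cP^\eI}(\eJ)(\rH)+\epsilon_\rH$, with $\epsilon_\rH=\sneg not(\rH)$ if $\rH$ is not a fact of $\cP$ and $\epsilon_\rH=0$ otherwise. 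Applying the endomorphism $\varrho$, using $\varrho(\epsilon_\rH)=0$ and that the fresh labels $not(\cdot)$ do not occur in $\cP$ (hence $\varrho$ fixes the rule labels and body terms of $\cP$ and only needs to act through the substituted negative bodies), one obtains $\varrho\circ\etpP{\wP^\eI}=\etpP{\cP^{\varrho\eI}}\circ\varrho$ on interpretations. A transfinite induction along $\etprP{\wP^\eI}{\alpha}$ (base $\varrho(\botI)=\botI$; successor by the previous identity; limit by preservation of sums) then gives $\varrho(\lfp(\etpP{\wP^\eI}))=\lfp(\etpP{\cP^{\varrho\eI}})$, that is $\varrho(\eWpP{\wP}(\eI))=\eWpP{\cP}(\varrho(\eI))$, invoking Theorem~\ref{thm:tp.properties}.

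Iterating the lemma gives $\varrho\circ\eWpP{\wP}^{2}=\eWpP{\cP}^{2}\circ\varrho$. By anti-monotonicity (Proposition~\ref{prop:gamma.antimonotonic}), $\eWpP{\wP}^{2}$ is monotone, so $\elfpP{\wP}=\eWprP{\wP}{\kappa}$ for some ordinal $\kappa$, and the same transfinite induction yields $\varrho(\elfpP{\wP})=\eWprP{\cP}{\kappa}$. This value is a fixpoint of $\eWpP{\cP}^{2}$ (apply $\varrho$ to $\eWpP{\wP}^{2}(\elfpP{\wP})=\elfpP{\wP}$), and for every fixpoint $\eJ$ of $\eWpP{\cP}^{2}$ one has $\eWprP{\cP}{\alpha}\leq\eJ$ for all $\alpha$ by the usual monotone-iteration argument; hence $\varrho(\elfpP{\wP})$ is the least fixpoint, i.e. $\varrho(\elfpP{\wP})=\elfpP{\cP}$. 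Applying the lemma once more, $\varrho(\egfpP{\wP})=\varrho(\eWpP{\wP}(\elfpP{\wP}))=\eWpP{\cP}(\varrho(\elfpP{\wP}))=\egfpP{\cP}$. Finally, since $\varrho$ commutes with $\sneg$ and $*$, pushing it through the definition of $\ewfm$ gives $\varrho(\ewfmP{\wP}(\rA))=\varrho(\elfpP{\wP}(\rA))=\ewfm(\rA)$, $\varrho(\ewfmP{\wP}(\Not\rA))=\sneg\varrho(\egfpP{\wP}(\rA))=\ewfm(\Not\rA)$, and likewise $\varrho(\ewfmP{\wP}(\Undef\rA))=\ewfm(\Undef\rA)$, which is the claim.

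The step I expect to be the main obstacle is the infinitary algebra underlying all of the above: one must make precise that $\varrho$, defined syntactically on DNF terms in Definition~\ref{def:varrho}, genuinely is a well-defined endomorphism of the complete lattice $\evalues$ preserving arbitrary sums and products, since this is exactly what legitimizes both the $\etp$-level commutation (the defining $\sum$ in $\etp$ is unrestricted) and the limit stages of the two transfinite inductions. This is where the free-algebra presentation of $\evalues$ and the finiteness of the signature are used, the latter making $\varrho$ a finite composition of the elementary substitutions $\varrho_{not(\rA_i)}$. The remainder is routine bookkeeping: the reduct identity $\wP^\eI=\cP^\eI\cup\cP'$, checking that $\varrho_x$ as literally written agrees with $[\,x:=1\,]$ on all elementary terms and constants, and confirming that the substitution's effect on a DNF term is precisely ``delete every hypothetical justification and erase every $\sneg\sneg not(\rA)$ from the remaining ones.''
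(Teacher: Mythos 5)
Your high-level architecture is essentially the paper's: a commutation lemma between $\varrho$ and the $\eWpP{\wP}$/$\eWpP{\cP}$ operators, a transfinite induction to transfer it to $\elfpP{\wP}$ and $\egfpP{\wP}$, and a final case analysis over the three kinds of q-literals (the paper does this via Lemma~\ref{lem:least.label.removing-rules} and Proposition~\ref{prop:eWp.least.fixpoint.label.removing-rules}). However, the step on which everything in your argument rests is wrong as stated: $\varrho$ is \emph{not} the restriction of an endomorphism of $\evalues$ obtained ``by freeness'' from the substitution $not(\rA)\mapsto 1$, and it does not commute with `$*$' and `$\cdot$' on arbitrary values. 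The obstruction is the product-distributivity axiom $c \cdotl d \cdotl e = (c\cdotl d)*(d\cdotl e)$, which carries the side condition $d\neq 1$; because of it, $\evalues$ is not a free algebra for an equational class, and generator assignments hitting $1$ need not extend. Concretely, if $h$ were a homomorphism with $h(x)=1$ fixing the other labels, then from $c\cdotl x\cdotl e=(c\cdotl x)*(x\cdotl e)$ one gets $c\cdotl e = h(c\cdotl x\cdotl e) = c*e$, which is false in $\evalues$ for distinct labels (by Theorem~\ref{thm:algebra.values} these are different graph ideals). The same failure shows up directly for $\varrho$: with $v=c\cdotl\sneg\sneg x$ and $w=\sneg\sneg x\cdotl e$ one has $\varrho_x(v*w)=c\cdotl e$ (computed on the transitively closed DNF) but $\varrho_x(v)*\varrho_x(w)=c*e$. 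So both your one-step identity $\varrho\circ\etpP{\wP^\eI}=\etpP{\cP^{\varrho\eI}}\circ\varrho$ and the limit-stage ``preservation of infinite sums'' are being justified by a false general principle.

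The statement is nonetheless true, and your plan can be repaired, because in the values actually produced by $\wP$ the elementary terms $\sneg not(\rA)$ and $\sneg\sneg not(\rA)$ never occur as rule labels and hence never sit in the interior of an application chain (they only label the added facts of $\cP'$, so nothing is ever applied before a rule label after them in a way that creates the problematic configuration). But this is a structural invariant that must be carried through the induction on the iterates of $\etp$ and of $\eWp^2$; it cannot be replaced by a blanket homomorphism claim. This is exactly what the paper's proof does: Lemma~\ref{lem:least.label.removing-rules} proves $\eWpP{Q}(\eJ)=\varrho_{not(\rA)}(\eWp(\eI))$ by transfinite induction over $\etp$, under the explicit hypothesis that no rule is labelled $not(\rA)$ or $\sneg\sneg not(\rA)$, working with join-irreducible addends so that $\varrho$ is only ever pushed through body products and applications of the restricted shape; Proposition~\ref{prop:eWp.least.fixpoint.label.removing-rules} then lifts this to $\elfp$ and $\egfp$, and the final assembly over $\rA$, $\Not\rA$, $\Undef\rA$ matches yours (using $\varrho_x(\sneg u)=\sneg\varrho_x(u)$). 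So: right skeleton, but a genuine gap at the central lemma — replace the freeness/endomorphism appeal by the restricted induction and the proof goes through.
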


\noindent
On the one hand, Theorem~\ref{thm:prov.correspondence} shows that the provenance of a literal can be obtained by replacing the negation `$\sneg$' by `$\neg$' and `$\cdot$' by~`$*$' in the causal WFM of the augmented program~$\wP$.
On the other hand, Theorem~\ref{thm:well.non-positive.justifications.modify.program} asserts that non-hypothetical justifications of a program and its augmented one coincide when subterms of the form $\sneg\sneg not(\rA)$ are removed from justifications of the latter.
Consequently, we can \review{R1.8}{establish} the following correspondence between the ECJ justifications and the non-hypothetical WnP justifications.

\begin{theorem}\label{thm:wellf.justification<-$>$weff.provenace}
Let $P$ be program over a finite signature~$\signature$.
Then, the ECJ justifications of some atom $\rA$ (after replacing ``$\cdot$'' by ``$*$'' and ``$\sneg$'' by ``$\neg$'') correspond to the WnP justifications of $\rA$ (after removing every label of the form $not(\rB)$ with $\rB \in \at$), that is,
$\lambdap(\ewfm)(\rA) = \varrho(Why_\cP)(\rA)$
where $\varrho$ is the result of removing every label of the form $not(\rA)$ as in Theorem~\ref{thm:well.non-positive.justifications.modify.program}.\qed
\end{theorem}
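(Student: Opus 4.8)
The plan is to reduce the statement to a single algebraic fact, namely that $\lambdap$ commutes with the label‑removal operator $\varrho$; everything else is bookkeeping. Theorem~\ref{thm:well.non-positive.justifications.modify.program} gives $\ewfm = \varrho(\ewfmP{\wP})$, and Definition~\ref{def:wellf.provenance} gives $Why_\cP = \lambdap(\ewfmP{\wP})$, so that for every atom $\rA$
\[
\lambdap(\ewfm)(\rA) = \lambdap\bigl(\varrho(\ewfmP{\wP}(\rA))\bigr)
\qquad\text{and}\qquad
\varrho(Why_\cP)(\rA) = \varrho\bigl(\lambdap(\ewfmP{\wP}(\rA))\bigr).
\]
Hence the theorem follows as soon as we establish $\lambdap\circ\varrho = \varrho\circ\lambdap$ on causal values, where the $\varrho$ on the left acts on $\evalues$ and the one on the right acts on $\boolAlgebra$, both being given by the clauses of Definition~\ref{def:varrho} read with `$\sneg$' resp.\ with the Boolean `$\neg$'.

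Since the signature is finite, $\varrho = \varrho_{not(\rA_1)}\circ\dots\circ\varrho_{not(\rA_n)}$ with $\at=\{\rA_1,\dots,\rA_n\}$, so it suffices to prove the one‑label commutation $\lambdap\circ\varrho_{not(\rB)} = \varrho_{not(\rB)}\circ\lambdap$ for a fixed $\rB\in\at$ and then stack the $n$ resulting commuting squares (moving $\lambdap$ rightward past each $\varrho_{not(\rA_i)}$). For the one‑label statement I would induct on the structure of a DNF representative of the value. The inductive step is routine: both $\varrho_{not(\rB)}$ and $\lambdap$ push through `$+$', `$*$' and `$\cdot$' (with $\lambdap$ additionally collapsing `$\cdot$' to `$*$' along either route), and $\lambdap$ is a well‑defined homomorphism of the algebras — under the reading `$\cdot$'$:=$`$*$' every axiom in Figures~\ref{fig:appl} and~\ref{fig:neg} is a theorem of Boolean algebra — so the two composites agree on $u\otimes v$ once they agree on $u$ and on $v$.

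The base cases are the elementary terms $l$, $\sneg l$, $\sneg\sneg l$. If $l\neq not(\rB)$, none of the non‑trivial clauses of $\varrho_{not(\rB)}$ fires, because none of $l$, $\sneg l$, $\sneg\sneg l$ is equivalent to $\sneg not(\rB)$ and $\sneg\sneg l\not\equiv\sneg\sneg not(\rB)$; thus $\varrho_{not(\rB)}$ is the identity on both sides and there is nothing to check. If $l=not(\rB)$, one verifies the three values directly: on the ECJ side $\varrho$ sends $not(\rB)$ and $\sneg\sneg not(\rB)$ to $1$ and $\sneg not(\rB)$ to $0$, after which $\lambdap(1)=1$ and $\lambdap(0)=0$; on the WnP side $\lambdap$ first maps $not(\rB)$, $\sneg\sneg not(\rB)$, $\sneg not(\rB)$ to $not(\rB)$, $not(\rB)$, $\neg not(\rB)$ respectively (the double negation collapsing in the Boolean quotient), and then $\varrho$ sends these to $1$, $1$, $0$. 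The two routes agree, closing the induction; combining with the inductive step gives $\lambdap\circ\varrho=\varrho\circ\lambdap$ on all of $\evalues$, and instantiating at $v=\ewfmP{\wP}(\rA)$ yields $\lambdap(\ewfm)(\rA)=\varrho(Why_\cP)(\rA)$ for every $\rA$.

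The main obstacle I anticipate is not the algebra itself but the care needed around normal forms. Both $\varrho_x$ and $\lambdap$ are specified on DNF terms and only then extended to values, so one must confirm — or cite the earlier part of the paper where this is done — that they are well defined, i.e.\ independent of the chosen DNF representative, and that applying $\varrho_x$ to a DNF term and re‑normalising does not disturb the structural induction. The auxiliary fact that $l$, $\sneg l$ and $\sneg\sneg l$ are pairwise distinct in the free algebra, which is exactly what makes $\varrho_x$ act as the identity away from $x$, should also be isolated explicitly; it follows from the free‑algebra characterisation underlying Theorem~\ref{thm:algebra.values} and its ECJ counterpart.
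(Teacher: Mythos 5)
Your proposal is correct and follows essentially the same route as the paper: both reduce the statement to the commutation $\varrho(\lambdap(u))=\lambdap(\varrho(u))$ and then combine $Why_\cP=\lambdap(\ewfmP{\wP})$ (Definition~\ref{def:wellf.provenance}) with $\ewfm=\varrho(\ewfmP{\wP})$ (Theorem~\ref{thm:well.non-positive.justifications.modify.program}). The only difference is that the paper simply asserts the commutation fact (``Note that \dots''), whereas you supply the structural induction on DNF representatives that justifies it, together with the well-definedness caveat --- a harmless, indeed welcome, elaboration.
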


\noindent
Theorem~\ref{thm:wellf.justification<-$>$weff.provenace} establishes a correspondence between non-hypothetical WnP-justifications and (flattened) ECJ justifications.
In our running example, \eqref{eq:e.just.yale.broken} is the unique causal justification of $dead_9$, while \eqref{eq:prov.just.yale.broken} (below) is its unique non-hypothetical WnP justification.
\begin{gather}
\begin{aligned}
	\neg water_3 \wedge shoot_8 &\wedge load_1 \wedge l_2 \wedge d_9 \wedge {}\\
		&\wedge not(dead_1) \wedge \dotsc \wedge not(dead_9)
		\wedge not(ab_1) \wedge \dotsc \wedge not(ab_8)
	\label{eq:prov.just.yale.broken}
\end{aligned}
\end{gather}
It is easy to see that, by applying $\lambdap$ to \eqref{eq:e.just.yale.broken} we obtain
\begin{gather}
\lambdap\big( ( \sneg water_3 * shoot_8 * load_1 \cdotl l_2 ) \cdot d_9 \big)
	\ = \
	\neg water_3 \wedge shoot_8 \wedge load_1 \wedge l_2 \wedge d_9
\end{gather}
which is just the result of removing all labels of the form `$not(\rA)$' from \eqref{eq:prov.just.yale.broken}.
The correspondence between the ECJ justification~\eqref{eq:e.just.yale.nobroken} and the WnP justification~\eqref{eq:prov.just.yale.repair} for program~\programref{prg:yale.nobroken} can be easily checked in a similar way. 

Hypothetical justifications are not directly captured by ECJ, but can be obtained using the augmented program~$\wP$ as stated by Theorem~\ref{thm:prov.correspondence}.
As a byproduct we establish a formal relation between WnP and~CG.

\begin{theorem}\label{thm:wellf.why-not-$>$sm.cjustification}
Let $\cP$ be a program  over a finite signature~$\signature$.
Then, every non-hypothetical and enabled WnP-justification $D$ of some atom~$\rA$ (after removing every label of the form $not(\rB)$ with $\rB \in \at$) is a justification with respect to every CG stable model $\cI$ (after replacing ``$\cdot$'' by ``$*$'' and ``$\sneg$'' by ``$\neg$''), that is $D \leq Why_P(\rA)$ implies $\varrho(D) \leq \lambdap(\cI)(\rA)$ where $\varrho$ is the result of removing every label of the form $not(\rB)$ as in Theorem~\ref{thm:well.non-positive.justifications.modify.program}.\qed
\end{theorem}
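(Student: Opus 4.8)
The plan is to obtain the statement by composing the two bridges already in place: Theorem~\ref{thm:wellf.justification<-$>$weff.provenace}, which ties non-hypothetical WnP-justifications to ECJ-justifications, and Theorem~\ref{thm:wellf.justification-$>$sm.cause}, which ties \emph{enabled} ECJ-justifications to CG-justifications of every CG stable model. The only genuinely new ingredient is a one-line algebraic observation about enabled terms: if $E$ is enabled, so that its elementary factors are only productive causes $l$ or enablers $\sneg\sneg l$, then $\lambdap(E) \leq \lambdap(\lambdac(E))$, because $\lambdac$ deletes the enablers via $\sneg\sneg l \mapsto 1$ while $\lambdap$ keeps them as plain labels via $\sneg\sneg l \mapsto \neg\neg l = l$, and deleting factors of a product makes it $\leq$-larger.

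Concretely, I would take a non-hypothetical, enabled WnP-justification $D$ of $\rA$, i.e.\ a sum-free term with $D \leq Why_\cP(\rA)$. Applying $\varrho$ and invoking Theorem~\ref{thm:wellf.justification<-$>$weff.provenace} (using that $\varrho$ only rewrites the elementary subterms $not(\rB)$ and $\sneg\sneg not(\rB)$ and is hence monotone on sum-free terms), I get $\varrho(D) \leq \varrho(Why_\cP)(\rA) = \lambdap(\ewfm)(\rA) = \lambdap(\ewfm(\rA))$. Since $\ewfm(\rA) = \elfp(\rA)$ is a sum of sum-free ECJ-justifications and $\varrho(D)$ is itself a product of elementary terms, hence join-prime in the completely distributive lattice, there is an ECJ-justification $E$ of $\rA$ with $\varrho(D) \leq \lambdap(E)$; moreover $E$ can be chosen enabled, because $D$ is enabled and non-hypothetical and neither $\varrho$ nor $\lambdap$ alters the negation depth of the surviving rule labels, so an odd label in $E$ would force an odd label in $\varrho(D)$. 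Now Theorem~\ref{thm:wellf.justification-$>$sm.cause} applied to the enabled $E$ yields, for every CG stable model $\cI$ of $\cP$, that $\lambdac(E) = term(G) \leq \cI(\rA)$ with $G = graph(\lambdac(E))$. Applying the monotone homomorphism $\lambdap$ (on negation-free CG values it is just the substitution of `$\cdot$' by `$*$') and then the observation above, $\varrho(D) \leq \lambdap(E) \leq \lambdap(\lambdac(E)) \leq \lambdap(\cI(\rA)) = \lambdap(\cI)(\rA)$, which is the desired inequality.

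The main obstacle will not be any individual identity but the bookkeeping around the two side conditions on $D$ and the passage from a sum to one of its summands. One must verify that ``non-hypothetical'' is exactly what makes $\varrho(D)$ land inside $\lambdap(\ewfm)$ rather than merely inside $\lambdap(\ewfmP{\wP})$ when Theorems~\ref{thm:well.non-positive.justifications.modify.program} and~\ref{thm:wellf.justification<-$>$weff.provenace} are used, that ``enabled'' is exactly what lets Theorem~\ref{thm:wellf.justification-$>$sm.cause} apply to the extracted $E$, and that products of elementary terms are indeed join-prime in $\boolAlgebra$ (and in $\evalues$), so that the individual witness $E$ can legitimately be pulled out of $\varrho(D) \leq \sum_i \lambdap(E_i)$. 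Once those points are settled, everything else reduces to monotonicity of $\varrho$ and $\lambdap$ together with the single computation $\lambdap(E) \leq \lambdap(\lambdac(E))$ for enabled $E$.
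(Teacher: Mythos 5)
There is a genuine gap at the witness-extraction step. Your chain hinges on the claim that $\varrho(D)$, being a product of elementary terms, is join-prime, so that $\varrho(D)\leq\lambdap(\elfp(\rA))=\sum_i\lambdap(E_i)$ yields a single ECJ-justification $E$ with $\varrho(D)\leq\lambdap(E)$. This is false in $\boolAlgebra$: in a Boolean algebra a conjunction of literals is not join-prime unless it is complete, e.g.\ $d\leq (d\wedge h)\vee(d\wedge\neg h)$ while $d$ lies below neither disjunct. It fails in $\evalues$ as well, because weak excluded middle gives $d = d*\sneg h + d*\sneg\sneg h$, so products of elementary terms are not even join-irreducible there. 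Concretely, the sum $\lambdap(\elfp(\rA))$ may contain an enabled disjunct $\lambdap(\sneg\sneg h * d)=h\wedge d$ together with an inhibited one such as $\lambdap(\sneg h * d)=\neg h\wedge d$, and a positive conjunction like $d$ can sit below the sum without sitting below any single summand; even the disjunct one can isolate by a valuation argument need not be enabled, so Theorem~\ref{thm:wellf.justification-$>$sm.cause} cannot be invoked on it, and your parity argument for choosing $E$ enabled presupposes exactly the inequality $\varrho(D)\leq\lambdap(E)$ that is in question.

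The paper's proof sidesteps the extraction entirely: from $\varrho(D)\leq\varrho(Why_\cP(\rA))=\lambdap(\elfp)(\rA)$ (your first step, which is fine, using Lemma~\ref{lem:removing.monotonicity} and Theorem~\ref{thm:wellf.justification<-$>$weff.provenace}) it applies $\lambdac$ to \emph{both sides} of the inequality, uses that $\lambdac(\varrho(D))=\varrho(D)$ because $D$ is non-hypothetical and enabled (so $\varrho(D)$ contains no negated labels), and then chains $\lambdac(\lambdap(\elfp))\leq\lambdap(\lambdac(\elfp))$ (Lemma~\ref{lem:lambdac.lambdap.leq}) with $\lambdac(\elfp)\leq\cI$ for any CG stable model $\cI=\lambdac(\eI)$, obtaining $\varrho(D)\leq\lambdap(\cI)(\rA)$ without ever isolating a single justification. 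Applying $\lambdac$ globally is what kills the inhibited disjuncts and erases the enablers in one stroke, which is precisely the role your single computation $\lambdap(E)\leq\lambdap(\lambdac(E))$ was meant to play but only for an individual enabled $E$; to repair your argument you would either need such a global step or a separate proof that $\varrho(D)$ is below the sum of the $\lambdap(\lambdac(E_i))$ ranging over the enabled $E_i$ only, which is not supplied.
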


\noindent
Note that, as happened between the ECJ and CG justifications, the converse of Theorem~\ref{thm:wellf.why-not-$>$sm.cjustification} does not hold in general due to the well-founded vs stable model difference in their definitions. As an example, the explanation for atom $c$ at program~\programref{prog:asp.not.implies.wellf} has a unique WnP justification $c$ as opposed to the two CG~justifications, $c$ and $r_1 \cdotl r_3$.

\section{Contributory causes}
\label{sec:contributory.causes}

Intuitively, a \emph{contributory cause} is an event that has helped to produce some effect. For instance, in program~\programref{prg:yale.nobroken}, it is easy to identify both actions, $load_1$ and $shoot_8$, as events that have helped to produce $dead_9$ and, thus, they are both contributory causes of Fred's death. We may define the above informal concept of contributory cause as: any non-negated label $l$ that occurs in a maximal enabled justification of some atom $\rA$. Similarly, a \emph{contributory enabler} can be defined as a doubly negated label $\sneg\sneg l$ that occurs in a maximal enabled justification of some atom $\rA$. These definitions correctly identify $load_1$ and $shoot_8$ as contributory causes of $dead_9$ in program~\programref{prg:yale.nobroken} and $d$ as a contributory cause of $p$ in program~\programref{prg:bond}. Fact~$h$ is considered a contributory enabler of~$p$. These definitions will also suffice for dealing with what Hall~\citeyear{hall2007structural} calls \emph{trouble cases}: \emph{non-existent threats}, \emph{short-circuits}, \emph{late-preemption} and \emph{switching examples}.

It is worth to mention that, in the philosophic and AI literature, the concept of contributory cause is usually discussed in the broader sense of \emph{actual causation} which tries to provide an \emph{unique everyday-concept} of causation. Pearl~\citeyear{Pearl00} studied actual and contributory causes relying on \emph{causal networks}.
In this approach, it is possible to conclude cause-effect relations like ``$A$ has been an actual (resp. contributory) cause of $B$'' from the behaviour of structural equations by applying, under some \emph{contingency} (an alternative model in which some values are fixed) the \emph{counterfactual dependence} interpretation from~\cite{hume1748}: ``had $A$ not happened, $B$ would not have happened.''
Consider the following example which illustrates the difference between contributory and actual causes under this approach.

\begin{example}[Firing Squad]
Suzy and Billy form a two-man firing squad that responds to the order of the captain. The shot of any of the two riflemen would kill the prisoner. Indeed, the captain gives the order, both riflemen shoot together and the prisoner dies.\qed
\end{example}

On the one hand, the captain is an actual cause of the prisoner's death: ``had the captain not given the order, the riflemen would not have shot and the prisoner would not have died.'' On the other hand, each rifleman alone is not an actual cause: ``had one rifleman not shot, the prisoner would have died anyway because of the other rifleman.'' However, each rifleman's shot is a contributory cause because, under the contingency where the other rifleman does not shoot, the prisoner's death manifests counterfactual dependence on the first rifleman's shot. Later approaches like~\cite{HP01,HP05,hall2004,hall2007structural} have not made this distinction and consider the captain and the two riflemen as actual causes of the prisoner's death, while~\cite{halpern2015modification} considers the captain and the conjunction of both riflemen's shoots, but not each of them alone, as actual causes. We will focus here on representing the above concept of contributory cause and leave to the reader whether this agrees with the concept of cause in the every-day discourse or not.

As has been slightly discussed in the introduction, Hall~\citeyear{hall2004,hall2007structural}, has emphasized the difference between two types of causal relations: \emph{dependence} and \emph{production}.
The former relies on the idea that ``counterfactual dependence between wholly distinct events is sufficient for causation.''
The latter is characterised by being \emph{transitive}, \emph{intrinsic}
(two processes following the same laws must be both or neither causal)
and \emph{local} (causes must be connected to their effects via sequences of causal intermediates).

These two concepts can be illustrated in Bond's example by observing the difference between pouring the drug (atom $d$), which is a cause under both understandings, and being a holiday (atom $h$), which is not considered a cause under the production viewpoint, although it is considered a cause under the dependence one.

In this sense,
all the above approaches to actual causation, but \cite{hall2004}, can be classified in the dependence category.
ECJ and CG do not consider $h$ a productive cause of $d$ because the \emph{default} (or \emph{normal}) behaviour of rule~\eqref{r1} is that ``$d$ causes $p$.''
This default criterion is also shared by~\cite{hall2007structural,Halpern08,hitchcock2009cause,halpern2011actual}.
Note that, ECJ (but not CG) captures the fact that $d$ counterfactually depends on $h$, as it considers it an enabler.
In~\cite{hall2004}, the author relies on intrinsicness for rejecting $h$ as a productive cause of $d$: any causal structure (justification) including $h$ and $p$ would have to include the absence of the antidote (atom~$a$), and it would be enough that Bond had taken the antidote by another reason to break the counterfactual dependence between $h$ and~$p$.
By applying the above contributory cause definition to the WnP justification $h \wedge d \wedge r_1$ of Bond's paralysis (atom $p$) in program~\programref{prg:bond}, we can easily identify that $h$ is being considered a cause in WnP, thus, a causal interpretation of WnP clearly follows the dependence-based viewpoint.
On the other hand, the unique CG justification $d \cdotl r_1$ only considers $d$ as a cause, which illustrates the fact that CG is mostly related to the concept of production.
ECJ combines both understandings, and what is a cause under the dependence viewpoint is either an enabler or a cause under the production viewpoint.

In order to illustrate how ECJ can be used for representing the so-called \emph{non-existent threat} scenarios, consider a variation of Bond's example where today is not a holiday and, thus, Bond takes the antidote.
The poured drug $d$ is a threat to Bond's safety, represented as $s$, but that threat is prevented by the antidote. We may represent this scenario by program~\newprogram\label{prg:bond.threat} below:
\begin{gather*}
\begin{IEEEeqnarraybox}[][t]{l C l C l}
r_1 &: \ \ &  p & \leftarrow & d, \Not a\\  
r_2 &: \ \ &  a & \leftarrow & \Not h\\
r_3 &: \ \ &  s & \leftarrow & \Not p
\end{IEEEeqnarraybox}
\hspace{2cm}
\begin{IEEEeqnarraybox}[][t]{l C l C l}
&& a
\\
&& d
\end{IEEEeqnarraybox}
\end{gather*}
The causal WFM of program~\programref{prg:bond.threat} assigns
\begin{gather*}
\ewfmP{\programref{prg:bond.threat}}(s)
  \ = \ \sneg\sneg r_2 \cdotl r_3 \ + \ 
      \sneg d \cdotl r_3 \ + \
      \sneg r_1 \cdotl r_3
\end{gather*}
which recognises rule~$r_2$ (taking the antidote) as a contributory enabler of Bond's safety. The difficulty in this kind of scenarios consists in avoiding the wrong recognition of $r_2$ as an enabler when the threat $d$ does not exist. If we remove fact $d$ from~\programref{prg:bond.threat} to get the new program~\newprogram\label{prg:bond.no-threat}  then we obtain that $\ewfmP{\programref{prg:bond.no-threat}}(d) = 0$ and, consequently,
$\ewfmP{\programref{prg:bond.no-threat}}(s) = r_3$. Intuitively, in the absence of any threat, Bond is just safe because that is his default behaviour as stated by rule $r_3$.

\emph{Short-circuit} examples consist in avoiding the wrong recognition of an event as a contributory enabler that provokes a threat that eventually prevents itself. Consider the program~\newprogram\label{prg:short-circuit} below:
\begin{gather*}
\begin{IEEEeqnarraybox}[][t]{l C l C l}
r_1 &: \ \ &  p & \leftarrow & a,\, \Not f\\
r_2 &: \ \ &  f & \leftarrow & c,\, \Not b\\
r_3 &: \ \ &  b & \leftarrow & c
\end{IEEEeqnarraybox}
\hspace{2cm}
\begin{IEEEeqnarraybox}[][t]{l C l C l}
&&a\\
&&c
\end{IEEEeqnarraybox}
\end{gather*}
Here, $c$ is a threat to $p$, since it may cause $f$ through rule $r_2$. However, $c$ eventually prevents $r_2$, since it also causes $b$ through rule $r_3$. The causal WFM of program~\programref{prg:short-circuit} assigns
\begin{gather*}
\ewfmP{\programref{prg:short-circuit}}(p)
  \ = \   (a * \sneg\sneg r_3) \cdotl r_1 \ + \ 
      (a * \sneg r_2 ) \cdotl r_1 \ + \ 
      (a * \sneg c ) \cdotl r_1
\end{gather*}
which correctly avoids considering $c$ as a contributory enabler of $p$ and recognises $r_3$ as the enabler of $p$.
Note that $c$ is actually considered an inhibitor due to justification $(a * \sneg c ) \cdotl r_1$ pointing out that, had $c$ not happened, then $a \cdotl r_1$ would have been an enabled justification. But then, $(a * \sneg\sneg r_3) \cdotl r_1$ would stop being a justification since $(a * \sneg\sneg r_3) \cdotl r_1 + a \cdotl r_1 = a \cdotl r_1$.

To illustrate \emph{late-preemention} consider the following example from~\cite{lewis2000causation}.

\begin{example}[Rock Throwers]\label{ex:throwers}
Billy and Suzy throw rocks at a bottle. Suzy throws first and her rock arrives first. The bottle shattered. When Billy's rock gets to where the bottle used to be, there is nothing there but flying shards of glass.
Who has caused the bottle to shatter?\qed
\end{example}

The key of this example is to recognise that Suzy, and not Billy, has caused the shattering. The usual way of representing this scenario in the actual causation literature is by introducing two new fluents $hit\_\,{suzy}$ and $hit\_\,{billy}$ in the following way~\cite{hall2007structural,halpern2011actual,halpern2014appropriate,halpern2015modification}:
\begin{IEEEeqnarray}{l L l}
hit\_\,{suzy}  &\ \leftarrow \ throw\_\,{suzy}
  \label{eq:hit.suzy.law}
\\
 hit\_\,{billy}  &\ \leftarrow \ throw\_\,{billy},\; \neg hit\_\,{suzy}
  \label{eq:hit.billy.law}
\\
shattered &\ \leftarrow \ hit\_\,{suzy}
\\
shattered &\ \leftarrow \ hit\_\,{billy}
\end{IEEEeqnarray}
It is easy to see that such a representation mixes, in law~\eqref{eq:hit.billy.law}, both the description of the world and the narrative fact asserting that Suzy threw first. This may easily lead to a problem of elaboration tolerance. For instance, if we have $N$ shooters and they shoot sequentially we would have to modify the equations for all of them in an adequate way, so that the last shooter's equation would have the negation of the preceding $N$-1 and so on. Moreover, all these equations would have to be \emph{reformulated} if we simply change the shooting order.
On the other hand, we may represent this scenario by a program~\newprogram\label{prg:throwers} consisting of the following rules
\begin{gather*}
\begin{IEEEeqnarraybox}[][t]{l C? l L l}
s_{t+1} &:& shattered_{t+1}  &\ \leftarrow \ throw(A)_t,\, \Not \ shattered_t
\\
&& \overline{shattered}_{0}
\end{IEEEeqnarraybox}
\hspace{1.75cm}
\begin{IEEEeqnarraybox}[][t]{l l rl l l}
throw({suzy})_0
\\
throw({billy})_1
\end{IEEEeqnarraybox}
\end{gather*}
with $A \in \set{suzy,\, billy}$,
plus the following rules corresponding to the inertia axioms 
\begin{align*}
shattered_{t+1} &\ \leftarrow \ shattered_{t},\ \Not \ \overline{shattered}_{t+1}
\\
\overline{shattered}_{t+1}
                &\ \leftarrow \ \overline{shattered}_{t},\ \Not \ shattered_{t+1}
\end{align*}
Atom $shattered_2$ holds in the standard WFM of~\programref{prg:throwers} and its justification corresponds to
\begin{gather*}
throw(suzy)_0 \cdotl s_1 
      \ \ + \ \ (\sneg throw(suzy) * throw(billy)_1) \cdotl s_2
      \ \ + \ \ (\sneg s_1 * throw(billy)_1) \cdotl s_2
\end{gather*}
On the one hand, the first addend points out that fact $throw(suzy)_0$ has caused $shattered_2$ by means of rule $s_1$.
On the other hand, the second addend indicates that $throw(billy)_1$ has not caused it because Suzy's throw has prevented it.
Finally, the third addend means that $throw(billy)_1$ would have caused the shattering if it were not for rule $s_1$.
This example shows how our semantics is able to recognise that it was Suzy, and not Billy, who caused the bottle shattering.
Furthermore, it also explains that Billy did not cause it because Suzy did it first.

Finally, consider the following example from~\cite{hall2000causation}.

\begin{example}[The Engineer]\label{ex:actual.engineer}
An engineer is standing by a switch in the railroad
tracks. A train approaches in the distance. She flips the switch, so that
the train travels down the right-hand track, instead of the left. Since
the tracks reconverge up ahead, the train arrives at its destination all
the same; let us further suppose that the time and manner of its arrival
are exactly as they would have been, had she not flipped the switch.\qed
\end{example}

This has been a controversial example. In \cite{hall2000causation}, the author has argued that the switch should be considered a cause of the arrival because  $switch$ has contributed to the fact that the train has travelled down the right-hand track.
In a similar manner, it seems clear that the train travelling down the right-hand track has contributed to the train arrival.
If causality is considered to be a transitive relation, as \cite{hall2000causation} does, the immediate consequence of the above reasoning is that flipping the $switch$ has contributed to the train $arrival$.
In \cite{hall2007structural} he argues otherwise and points out that commonsense tells  that the $switch$ is not a cause of the $arrival$.
\cite{HP05} had considered $switch$ a cause of $arrival$ depending on whether the train travelling down the tracks is represented by one or two variables in the model.
Although our understanding of causality is closer \review{R.14}{to the one expressed} in \cite{hall2007structural}, it is not the aim of this work to go more in depth in this discussion, but to show instead how both understandings can be represented in ECJ.
Consider the following program~$\newprogram\label{prg:engineer}$
\begin{align*}
\begin{IEEEeqnarraybox}{lC? l C l}
r_1 &:& arrival   &\ \leftarrow \ & right
\\
r_2 &:& arrival   &\ \leftarrow \ & left
\\
r_3 &:& right   &\ \leftarrow \ & train,\, \Not \overline{switch}
\\
r_4 &:& left  &\ \leftarrow \ & train,\, \Not switch
\end{IEEEeqnarraybox}
\hspace{1.5cm}
\begin{IEEEeqnarraybox}{lC? l C l}
&& \overline{switch}   &\ \leftarrow \ & \Not switch
\\
&& switch  &\ \leftarrow \ &\Not \overline{switch}
\\
&& train
\\
&& switch
\end{IEEEeqnarraybox}
\end{align*}
where $\overline{switch}$ represents the strong negation of $switch$.
The two unlabelled rules capture the idea that the switch behaves classically, that is, it must be activated or not.
The literal $\Not \overline{switch}$ in the body of rule $r_3$ points out that the $switch$ position is an enabler and not a cause of the track taken by the train. This representation can be arguable, but the way in which the  rule has been written would be expressing that if a train is coming, then a train will cross the right track by default unless $\overline{switch}$ prevents it. In that sense, the only productive cause for $right$ (a train in the right track) is $train$ (a train is coming) whereas the switch position just enables the causal rule to be applied. A similar default $r_4$ is built for the left track, flipping the roles of $switch$ and $\overline{switch}$.

The causal WFM of program~$\programref{prg:engineer}$ corresponds to
\begin{gather*}
\begin{IEEEeqnarraybox}[][t]{lCl}
\ewfmP{\programref{prg:engineer}}(arrival)
  &\ \ = \ \ & 
      (train * \sneg\sneg switch) \cdotl r_3 \cdotl r_1
      \ + \ 
      (train * \sneg switch) \cdotl r_4 \cdotl r_2
\end{IEEEeqnarraybox}
\end{gather*}
It is easy to see that $switch$ is a doubly-negated label occurring in the maximal enabled justification
$E_1 = (train * \sneg\sneg switch) \cdotl r_3 \cdotl r_1$ and, thus, we may identify it as a contributory enabler of $arrival$, but not its productive cause. On the other hand, by looking at the inhibited justification
$E_2 = (train * \sneg switch) \cdotl r_4 \cdotl r_2$, we observe that
$switch$ is also preventing rules $r_4$ and $r_2$ to produce the same effect, $arrival$, that is helping to produce in $E_1$.

If we want to ignore the way in which the train arrives, one natural possibility is using the same label for all the rules for atom $arrival$, reflecting in this way that we do not want to trace whether $r_1$ or $r_2$ has been actually used. Suppose we label $r_2$ with $r_1$ instead, leading to the new program~$\newprogram\label{prg:engineer.same}$
\begin{align*}
\begin{IEEEeqnarraybox}{lC? l C l}
r_1 &:& arrival   &\ \leftarrow \ & right
\\
r_1 &:& arrival   &\ \leftarrow \ & left
\\
r_3 &:& right   &\ \leftarrow \ & train,\, \Not \overline{switch}
\\
r_3 &:& left  &\ \leftarrow \ & train,\, \Not switch
\end{IEEEeqnarraybox}
\hspace{1.5cm}
\begin{IEEEeqnarraybox}{lC? l C l}
&& \overline{switch}   &\ \leftarrow \ & \Not switch
\\
&& switch  &\ \leftarrow \ &\Not \overline{switch}
\\
&& train
\\
&& switch
\end{IEEEeqnarraybox}
\end{align*}
whose causal WFM corresponds to
\begin{gather*}
\begin{IEEEeqnarraybox}[][t]{lCl C l}
\ewfmP{\programref{prg:engineer}}(arrival)
  &\ \ = \ \ & 
      (train * \sneg\sneg switch) \cdotl r_3 \cdotl r_1
      \ + \ 
      (train * \sneg switch) \cdotl r_3 \cdotl r_1
  &\ \ = \ \ & 
      train \cdotl r_3 \cdotl r_1
\end{IEEEeqnarraybox}
\end{gather*}
As we can see, this justification does not consider $switch$ at all as a cause of the arrival (nor even a contributory enabler, as before). In other words, $switch$ is \emph{irrelevant} for the train $arrival$, which probably coincides with the most common intuition.

However, we do not find this solution fully convincing yet, because the explanation we obtain for $right$, $\ewfmP{\programref{prg:engineer.same}}(right) = (train * \sneg\sneg switch) \cdotl r_3$ is showing that $switch$ is just acting as an enabler, as we commented before. If we wanted to represent $switch$ as a contributory cause of $right$, we would have more difficulties to simultaneously keep $switch$ irrelevant in the explanation of $arrival$. One possibility we plan to explore in the future is allowing the declaration of a given atom or fluent, like our $switch$, as \emph{classical} so that we include both, the the rule:
\begin{eqnarray*}
switch \leftarrow \Not \Not switch 
\end{eqnarray*}
in the logic program\footnote{This implication actually corresponds to a \emph{choice rule} $0 \{switch\} 1$, commonly used in Answer Set Programming.} and the axiom $\sneg \sneg switch=switch$ in the algebra. The latter immediately implies $switch + \sneg switch = 1$ (due to the weak excluded middle axiom).

Then, \programref{prg:engineer.same} could be simply expressed as
\begin{align*}
\begin{IEEEeqnarraybox}[][t]{lC? l C l}
r_1 &:& arrival   &\ \leftarrow \ & right
\\
r_1 &:& arrival   &\ \leftarrow \ & left
\\
r_3 &:& right   &\ \leftarrow \ & train,\, switch
\\
r_3 &:& left  &\ \leftarrow \ & train,\, \Not switch
\end{IEEEeqnarraybox}
\hspace{1.5cm}
\begin{IEEEeqnarraybox}[][t]{lC? l C l}
&& train
\\
&& switch
\end{IEEEeqnarraybox}
\end{align*}
and the justification of $right$ and $left$ would~become
\begin{gather*}
\begin{IEEEeqnarraybox}[][t]{lCl  " C " lCl}
\ewfm(right)
  &\ \ = \ \ & 
      (train * switch) \cdotl r_3
&&
\ewfm(left)
  &\ \ = \ \ & 
      (train * \sneg switch) \cdotl r_3
\end{IEEEeqnarraybox}
\end{gather*}
pointing out that $switch$ is a cause (resp. an inhibitor) of the train travelling down the right (resp. left) track.
Then, the justification of arrival would be
\begin{gather*}
\begin{IEEEeqnarraybox}[][t]{lCl C l}
\ewfm(arrival)
  &\ \ = \ \ & 
      (train * switch) \cdotl r_3 \cdotl r_1
      \ + \ 
      (train * \sneg switch) \cdotl r_3 \cdotl r_1
  &\ \ = \ \ & 
      train \cdotl r_3 \cdotl r_1
\end{IEEEeqnarraybox}
\end{gather*}
We leave the study of this possibility for a future deeper analysis.




\section{Conclusions and other related work}
\label{sec:conc} 

In this paper we have introduced a unifying approach that combines causal production with enablers and inhibitors. We formally capture inhibited justifications by introducing a ``non-classical'' negation `$\sneg$' in the algebra of causal graphs (CG). An inhibited justification is nothing else but an expression containing some negated label. We have also distinguished productive causes from enabling conditions (counterfactual dependences that are not productive causes) by using a double negation `$\sneg\sneg$' for the latter. 
The existence of enabled justifications is a sufficient and necessary condition for the truth of a literal. Furthermore, our justifications capture, under the Well-Founded Semantics, both Causal Graph and Why-not Provenance justifications. As a byproduct we established a formal relation between these two approaches.

We have also shown how several standard examples from the literature on actual causation can be represented in our formalism and illustrated how this representation is suitable for domains which include dynamic defaults -- those whose behaviour are not predetermined, but rely on some program condition -- as for instance the inertia axioms.
As pointed out by~\cite{maudlin2004}, causal knowledge can be structured by a combination of \emph{inertial laws} -- how the world would evolve if nothing intervened
-- and \emph{deviations} from these inertial laws.

In addition to the literature on actual causes cited in Section~\ref{sec:contributory.causes}, our work also relates to papers on reasoning about actions and change~\cite{Lin95,McCain97,Thielscher97}.
These works
have been traditionally focused on using causal inference to solve representational problems (such as, the frame, ramification and qualification problems) without paying much attention to the derivation of cause-effect relations.
Focusing on LP, our work obviously relates to explanations obtained from ASP debugging approaches~ \cite{specht1993generating,denecker1993justification,pemmasani2004online,GPST08,pontelli2009justifications,oetsch2010,schultz2013aba}.
The most important difference of these works with respect to ECJ, and also WnP and CG, is that the last three provide fully algebraic semantics in which justifications are embedded into program models.
A formal relation between \cite{pontelli2009justifications} and WnP was established in \cite{damasio2013justifications} and so, using Theorems~\ref{thm:prov.correspondence} and~\ref{thm:wellf.justification<-$>$weff.provenace}, it can be directly extended to ECJ, but at the cost of flattening the graph information (i.e. losing the order among rules).

Interesting issues for future study are incorporating enabled and inhibited justifications to the stable model semantics and replacing the syntactic definition in favour of a logical treatment of default negation, as done for instance with the Equilibrium Logic~\cite{Pearce96} characterisation of stable models.
Other natural steps would be the consideration of syntactic operators, for capturing more specific knowledge about causal information as done in~\cite{fandinno2015aspocp} capturing sufficient causes in the CG approach,
and also the representation of non-deterministic causal laws,
by means of disjunctive programs or the incorporation of probabilistic knowledge.

\paragraph{Acknowledgements} We are thankful to Carlos Dam\'asio  for his suggestions and comments on earlier versions of this work. We also thank the anonymous reviewers for their help to improve the paper.
This research was partially supported by Spanish Project TIN2013-42149-P.

\bibliographystyle{acmtrans}
\bibliography{refs}

\begin{thebibliography}{}

\bibitem[\protect\citeauthoryear{Balbes and Dwinger}{Balbes and
  Dwinger}{1975}]{balbes1975distributive}
{\sc Balbes, R.} {\sc and} {\sc Dwinger, P.} 1975.
\newblock {\em Distributive lattices}.
\newblock Melinda Inn.

\bibitem[\protect\citeauthoryear{Cabalar, Fandinno, and Fink}{Cabalar
  et~al\mbox{.}}{2014a}]{CabalarFF14}
{\sc Cabalar, P.}, {\sc Fandinno, J.}, {\sc and} {\sc Fink, M.} 2014a.
\newblock Causal graph justifications of logic programs.
\newblock {\em Theory and Practice of Logic Programming {TPLP}\/}~{\em
  14,\/}~4-5, 603--618.

\bibitem[\protect\citeauthoryear{Cabalar, Fandinno, and Fink}{Cabalar
  et~al\mbox{.}}{2014b}]{CabalarFF14Jelia}
{\sc Cabalar, P.}, {\sc Fandinno, J.}, {\sc and} {\sc Fink, M.} 2014b.
\newblock A complexity assessment for queries involving sufficient and
  necessary causes.
\newblock In {\em Logics in Artificial Intelligence - 14th European Conference,
  {JELIA} 2014, Funchal, Madeira, Portugal, September 24-26, 2014.
  Proceedings}, {E.~Ferm{\'{e}}} {and} {J.~Leite}, Eds. Lecture Notes in
  Computer Science, vol. 8761. Springer, 297--310.

\bibitem[\protect\citeauthoryear{Dam{\'{a}}sio, Analyti, and
  Antoniou}{Dam{\'{a}}sio et~al\mbox{.}}{2013}]{damasio2013justifications}
{\sc Dam{\'{a}}sio, C.~V.}, {\sc Analyti, A.}, {\sc and} {\sc Antoniou, G.}
  2013.
\newblock Justifications for logic programming.
\newblock In {\em Logic Programming and Nonmonotonic Reasoning, Twelfth
  International Conference, {LPNMR} 2013, Corunna, Spain, September 15-19,
  2013. Proceedings}, {P.~Cabalar} {and} {T.~C. Son}, Eds. Lecture Notes in
  Computer Science, vol. 8148. Springer, 530--542.

\bibitem[\protect\citeauthoryear{Denecker and Schreye}{Denecker and
  Schreye}{1993}]{denecker1993justification}
{\sc Denecker, M.} {\sc and} {\sc Schreye, D.~D.} 1993.
\newblock Justification semantics: {A} unifiying framework for the semantics of
  logic programs.
\newblock In {\em Logic Programming and Non-monotonic Reasoning, 2nd
  International Workshop, {LPNMR 1993}, Lisbon, Portugal, June 1993}. The MIT
  Press, 365--379.

\bibitem[\protect\citeauthoryear{Fandinno}{Fandinno}{2015a}]{fandinno2015thesis}
{\sc Fandinno, J.} 2015a.
\newblock A causal semantics for logic programming.
\newblock Ph.D. thesis, University of Corunna.

\bibitem[\protect\citeauthoryear{Fandinno}{Fandinno}{2015b}]{fandinno2015aspocp}
{\sc Fandinno, J.} 2015b.
\newblock Towards deriving conclusions from cause-effect relations.
\newblock In {\em ASPOCP}.

\bibitem[\protect\citeauthoryear{Ferraris, Lee, and Lifschitz}{Ferraris
  et~al\mbox{.}}{2007}]{FerrarisLL07}
{\sc Ferraris, P.}, {\sc Lee, J.}, {\sc and} {\sc Lifschitz, V.} 2007.
\newblock A new perspective on stable models.
\newblock In {\em {IJCAI} 2007, Proceedings of the 20th International Joint
  Conference on Artificial Intelligence, Hyderabad, India, January 6-12, 2007},
  {M.~M. Veloso}, Ed. 372--379.

\bibitem[\protect\citeauthoryear{Gebser, P{\"{u}}hrer, Schaub, and
  Tompits}{Gebser et~al\mbox{.}}{2008}]{GPST08}
{\sc Gebser, M.}, {\sc P{\"{u}}hrer, J.}, {\sc Schaub, T.}, {\sc and} {\sc
  Tompits, H.} 2008.
\newblock A meta-programming technique for debugging answer-set programs.
\newblock In {\em Proceedings of the Twenty-Third {AAAI} Conference on
  Artificial Intelligence, {AAAI} 2008, Chicago, Illinois, USA, July 13-17,
  2008}, {D.~Fox} {and} {C.~P. Gomes}, Eds. {AAAI} Press, 448--453.

\bibitem[\protect\citeauthoryear{Gelfond and Lifschitz}{Gelfond and
  Lifschitz}{1988}]{GelfondL88}
{\sc Gelfond, M.} {\sc and} {\sc Lifschitz, V.} 1988.
\newblock The stable model semantics for logic programming.
\newblock In {\em Logic Programming, Proceedings of the Fifth International
  Conference and Symposium, Seattle, Washington, August 15-19}, {R.~A.
  Kowalski} {and} {K.~A. Bowen}, Eds. {MIT} Press, 1070--1080.

\bibitem[\protect\citeauthoryear{Hall}{Hall}{2000}]{hall2000causation}
{\sc Hall, N.} 2000.
\newblock Causation and the price of transitivity.
\newblock {\em The Journal of Philosophy\/}~{\em 97,\/}~4, 198--222.

\bibitem[\protect\citeauthoryear{Hall}{Hall}{2004}]{hall2004}
{\sc Hall, N.} 2004.
\newblock Two concepts of causation.
\newblock In {\em Causation and counterfactuals}, {J.~Collins}, {N.~Hall},
  {and} {L.~A. Paul}, Eds. Cambridge, MA: MIT Press, 225--276.

\bibitem[\protect\citeauthoryear{Hall}{Hall}{2007}]{hall2007structural}
{\sc Hall, N.} 2007.
\newblock Structural equations and causation.
\newblock {\em Philosophical Studies\/}~{\em 132,\/}~1, 109--136.

\bibitem[\protect\citeauthoryear{Halpern}{Halpern}{2008}]{Halpern08}
{\sc Halpern, J.~Y.} 2008.
\newblock Defaults and normality in causal structures.
\newblock In {\em Principles of Knowledge Representation and Reasoning:
  Proceedings of the Eleventh International Conference, {KR} 2008, Sydney,
  Australia, September 16-19, 2008}, {G.~Brewka} {and} {J.~Lang}, Eds. {AAAI}
  Press, 198--208.

\bibitem[\protect\citeauthoryear{Halpern}{Halpern}{2014}]{halpern2014appropriate}
{\sc Halpern, J.~Y.} 2014.
\newblock Appropriate causal models and stability of causation.
\newblock In {\em Principles of Knowledge Representation and Reasoning:
  Proceedings of the Fourteenth International Conference, {KR} 2014, Vienna,
  Austria, July 20-24, 2014}, {C.~Baral}, {G.~D. Giacomo}, {and} {T.~Eiter},
  Eds. {AAAI} Press.

\bibitem[\protect\citeauthoryear{Halpern}{Halpern}{2015}]{halpern2015modification}
{\sc Halpern, J.~Y.} 2015.
\newblock A modification of the halpern-pearl definition of causality.
\newblock In {\em Proceedings of the Twenty-Fourth International Joint
  Conference on Artificial Intelligence, {IJCAI} 2015, Buenos Aires, Argentina,
  July 25-31, 2015}, {Q.~Yang} {and} {M.~Wooldridge}, Eds. {AAAI} Press,
  3022--3033.

\bibitem[\protect\citeauthoryear{Halpern and Hitchcock}{Halpern and
  Hitchcock}{2011}]{halpern2011actual}
{\sc Halpern, J.~Y.} {\sc and} {\sc Hitchcock, C.} 2011.
\newblock Actual causation and the art of modeling.
\newblock {\em CoRR\/}~{\em abs/1106.2652}.

\bibitem[\protect\citeauthoryear{Halpern and Pearl}{Halpern and
  Pearl}{2001}]{HP01}
{\sc Halpern, J.~Y.} {\sc and} {\sc Pearl, J.} 2001.
\newblock Causes and explanations: A structural-model approach. part {I}:
  Causes.
\newblock {\em Proceedings of the Seventeenth Conference in Uncertainty in
  Artificial Intelligence, UAI 2001, University of Washington, Seattle,
  Washington, USA, August 2-5\/}, 194--202.

\bibitem[\protect\citeauthoryear{Halpern and Pearl}{Halpern and
  Pearl}{2005}]{HP05}
{\sc Halpern, J.~Y.} {\sc and} {\sc Pearl, J.} 2005.
\newblock Causes and explanations: A structural-model approach. part {I}:
  Causes.
\newblock {\em British Journal for Philosophy of Science\/}~{\em 56,\/}~4,
  843--887.

\bibitem[\protect\citeauthoryear{Hitchcock and Knobe}{Hitchcock and
  Knobe}{2009}]{hitchcock2009cause}
{\sc Hitchcock, C.} {\sc and} {\sc Knobe, J.} 2009.
\newblock Cause and norm.
\newblock {\em Journal of Philosophy\/}~{\em 11}, 587--612.

\bibitem[\protect\citeauthoryear{Hume}{Hume}{1748}]{hume1748}
{\sc Hume, D.} 1748.
\newblock An enquiry concerning human understanding.
\newblock Reprinted by Open Court Press, LaSalle, {IL}, 1958.

\bibitem[\protect\citeauthoryear{Lewis}{Lewis}{2000}]{lewis2000causation}
{\sc Lewis, D.~K.} 2000.
\newblock Causation as influence.
\newblock {\em The Journal of Philosophy\/}~{\em 97,\/}~4, 182--197.

\bibitem[\protect\citeauthoryear{Lin}{Lin}{1995}]{Lin95}
{\sc Lin, F.} 1995.
\newblock Embracing causality in specifying the indirect effects of actions.
\newblock In {\em Proceedings of the Fourteenth International Joint Conference
  on Artificial Intelligence, {IJCAI} 95, Montr{\'{e}}al Qu{\'{e}}bec, Canada,
  August 20-25 1995, 2 Volumes}. Morgan Kaufmann, 1985--1993.

\bibitem[\protect\citeauthoryear{Marek and Truszczy{\'n}ki}{Marek and
  Truszczy{\'n}ki}{1999}]{MT99}
{\sc Marek, V.~W.} {\sc and} {\sc Truszczy{\'n}ki, M.} 1999.
\newblock Stable models and an alternative logic programming paradigm.
\newblock In {\em The Logic Programming Paradigm}, {K.~R. Apt}, {V.~W. Marek},
  {M.~Truszczy{\'n}ski}, {and} {D.~Warren}, Eds. Artificial Intelligence.
  Springer Berlin Heidelberg, 375--398.

\bibitem[\protect\citeauthoryear{Maudlin}{Maudlin}{2004}]{maudlin2004}
{\sc Maudlin, T.} 2004.
\newblock Causation, counterfactuals, and the third factor.
\newblock In {\em Causation and Counterfactuals}, {J.~Collins}, {E.~J. Hall},
  {and} {L.~A. Paul}, Eds. MIT Press.

\bibitem[\protect\citeauthoryear{McCain and Turner}{McCain and
  Turner}{1997}]{McCain97}
{\sc McCain, N.} {\sc and} {\sc Turner, H.} 1997.
\newblock Causal theories of action and change.
\newblock In {\em Proceedings of the Fourteenth National Conference on
  Artificial Intelligence and Ninth Innovative Applications of Artificial
  Intelligence Conference, {AAAI} 97, {IAAI} 97, July 27-31, 1997, Providence,
  Rhode Island.}, {B.~Kuipers} {and} {B.~L. Webber}, Eds. {AAAI} Press / The
  {MIT} Press, 460--465.

\bibitem[\protect\citeauthoryear{Niemel{\"{a}}}{Niemel{\"{a}}}{1999}]{niemela1999}
{\sc Niemel{\"{a}}, I.} 1999.
\newblock Logic programs with stable model semantics as a constraint
  programming paradigm.
\newblock {\em Annals of Mathematics and Artificial Intelligence\/}~{\em
  25,\/}~3-4, 241--273.

\bibitem[\protect\citeauthoryear{Oetsch, P{\"{u}}hrer, and Tompits}{Oetsch
  et~al\mbox{.}}{2010}]{oetsch2010}
{\sc Oetsch, J.}, {\sc P{\"{u}}hrer, J.}, {\sc and} {\sc Tompits, H.} 2010.
\newblock Catching the ouroboros: On debugging non-ground answer-set programs.
\newblock {\em CoRR\/}~{\em abs/1007.4986}.

\bibitem[\protect\citeauthoryear{Pearce}{Pearce}{1996}]{Pearce96}
{\sc Pearce, D.} 1996.
\newblock A new logical characterisation of stable models and answer sets.
\newblock In {\em Non-Monotonic Extensions of Logic Programming, {NMELP} 1996,
  Bad Honnef, Germany, September 5-6, 1996, Selected Papers}, {J.~Dix}, {L.~M.
  Pereira}, {and} {T.~C. Przymusinski}, Eds. Lecture Notes in Computer Science,
  vol. 1216. Springer, 57--70.

\bibitem[\protect\citeauthoryear{Pearl}{Pearl}{2000}]{Pearl00}
{\sc Pearl, J.} 2000.
\newblock {\em Causality: models, reasoning, and inference}.
\newblock Cambridge University Press, New York, NY, USA.

\bibitem[\protect\citeauthoryear{Pemmasani, Guo, Dong, Ramakrishnan, and
  Ramakrishnan}{Pemmasani et~al\mbox{.}}{2004}]{pemmasani2004online}
{\sc Pemmasani, G.}, {\sc Guo, H.}, {\sc Dong, Y.}, {\sc Ramakrishnan, C.~R.},
  {\sc and} {\sc Ramakrishnan, I.~V.} 2004.
\newblock Online justification for tabled logic programs.
\newblock In {\em Functional and Logic Programming, Seventh International
  Symposium, {FLOPS} 2004, Nara, Japan, April 7-9, 2004, Proceedings},
  {Y.~Kameyama} {and} {P.~J. Stuckey}, Eds. Lecture Notes in Computer Science,
  vol. 2998. Springer, 24--38.

\bibitem[\protect\citeauthoryear{Pontelli, Son, and El{-}Khatib}{Pontelli
  et~al\mbox{.}}{2009}]{pontelli2009justifications}
{\sc Pontelli, E.}, {\sc Son, T.~C.}, {\sc and} {\sc El{-}Khatib, O.} 2009.
\newblock Justifications for logic programs under answer set semantics.
\newblock {\em Theory and Practice of Logic Programming {TPLP}\/}~{\em 9,\/}~1,
  1--56.

\bibitem[\protect\citeauthoryear{Schulz and Toni}{Schulz and
  Toni}{2013}]{schultz2013aba}
{\sc Schulz, C.} {\sc and} {\sc Toni, F.} 2013.
\newblock Aba-based answer set justification.
\newblock {\em Theory and Practice of Logic Programming {TPLP}\/}~{\em
  13,\/}~4-5 Online-Supplement.

\bibitem[\protect\citeauthoryear{Specht}{Specht}{1993}]{specht1993generating}
{\sc Specht, G.} 1993.
\newblock Generating explanation trees even for negations in deductive database
  systems.
\newblock In {\em Proceedings of the Fifth Workshop on Logic Programming
  Environments {(LPE} 1993), October 29-30, 1993, In conjunction with {ILPS}
  1993, Vancouver, British Columbia, Canada}, {M.~Ducass{\'{e}}}, {B.~L.
  Charlier}, {Y.~Lin}, {and} {L.~{\"{U}}. Yal{\c{c}}inalp}, Eds. IRISA, Campus
  de Beaulieu, France, 8--13.

\bibitem[\protect\citeauthoryear{Thielscher}{Thielscher}{1997}]{Thielscher97}
{\sc Thielscher, M.} 1997.
\newblock Ramification and causality.
\newblock {\em Artificial Intelligence\/}~{\em 89,\/}~1-2, 317--364.

\bibitem[\protect\citeauthoryear{{Van Gelder}}{{Van
  Gelder}}{1989}]{van1989alternating}
{\sc {Van Gelder}, A.} 1989.
\newblock The alternating fixpoint of logic programs with negation.
\newblock In {\em Proceedings of the Eighth {ACM} {SIGACT-SIGMOD-SIGART}
  Symposium on Principles of Database Systems, March 29-31, 1989, Philadelphia,
  Pennsylvania, {USA}}, {A.~Silberschatz}, Ed. {ACM} Press, 1--10.

\bibitem[\protect\citeauthoryear{{Van Gelder}, Ross, and Schlipf}{{Van Gelder}
  et~al\mbox{.}}{1991}]{van1991well}
{\sc {Van Gelder}, A.}, {\sc Ross, K.~A.}, {\sc and} {\sc Schlipf, J.~S.} 1991.
\newblock The well-founded semantics for general logic programs.
\newblock {\em Journal of the {ACM} (JACM)\/}~{\em 38,\/}~3, 620--650.

\end{thebibliography}

\newpage

\appendix



\section{Auxiliary figures}\label{sec:figs}

\vspace{1cm}

\begin{figure}[htbp]
\begin{center}
$
\begin{array}{c}
\hbox{\em Associativity} \\
\hline
\begin{array}{r@{\ }c@{\ }r@{}c@{}l c r@{}c@{}l@{\ }c@{\ }l@{\ }}
t & + & (u & + & w) & = & (t & + & u) & + & w\\
t & * & (u & * & w) & = & (t & * & u) & * & w
\end{array}
\end{array}
$
\ \
$
\begin{array}{c}
\ \ \ \ \hbox{\em Commutativity}\ \ \ \ \\
\hline
\begin{array}{r@{\ }c@{\ }l c r@{\ }c@{\ }l@{\ }}
t & + & u & = & u & + & t\\ 
t & * & u & = & u & * & t
\end{array}
\end{array}
$
\ \
$
\begin{array}{c}
\hbox{\em Absorption} \\
\hline
\begin{array}{c c r@{\ }c@{\ }r@{}c@{}l@{\ }}
t & = & t & + & (t & * & u)\\
t & = & t & * & (t & + & u)
\end{array}
\end{array}
$
\ \
\\
\vspace{10pt}
$
\begin{array}{c}
\hbox{\em Distributive} \\
\hline
\begin{array}{r@{\ }c@{\ }r@{}c@{}l c r@{}c@{}l@{\ }c@{\ }r@{}c@{}l@{}}
t & + & (u & * & w) & = & (t & + & u) & * & (t & + & w)\\
t & * & (u & + & w) & = & (t & * & u) & + & (t & * & w)
\end{array}
\end{array}
$
\ \
$
\begin{array}{c}
Identity \\
\hline
\begin{array}{rcr@{\ }c@{\ }l@{\ }}
t & = & t & + & 0\\
t & = & t & * & 1
\end{array}
\end{array}
$
\ \
$
\begin{array}{c}
\hbox{\em Idempotency} \\
\hline
\begin{array}{rcr@{\ }c@{\ }l@{\ }}
t & = & t & + & t\\
t & = & t & * & t
\end{array}
\end{array}
$
\ \ 
$
\begin{array}{c}
\hbox{\em Annihilator} \\
\hline
\begin{array}{rcr@{\ }c@{\ }l@{\ }}
1 & = & 1 & + & t\\
0 & = & 0 & * & t
\end{array}
\end{array}
$
\end{center}
\caption{Sum and product satisfy the properties of a completely distributive lattice.}
\label{fig:DBLattice}
\end{figure}


\vspace{3cm}

\section{Proofs of Theorems and Implicit Results}
\label{sec:proofs}

In the following, by abuse of notation, for every function $f:\evalues \longrightarrow \evalues$, we will also denote by $f$ a function over the set of interpretations such that $f(\eI)(\rA) = f(\eI(\rA))$ for every atom $\rA \in \at$.
We have organized the proofs into different subsections.

\subsection{Proofs of Propositions~\ref{prop:term.neg.leq} to~\ref{prop:negation.normal.form}}


\begin{proposition}\label{prop:term.neg.leq}
Negation `$\sneg$' is anti-monotonic. That is $t\leq u$ holds if and only if $\sneg t \geq \sneg u$ for any given two causal terms $t$ and $u$.\qed
\end{proposition}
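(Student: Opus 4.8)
The plan is to establish the biconditional by proving the two implications separately, using only the lattice characterisation $t \leq u \iff t * u = t \iff t + u = u$ together with the axioms of Figure~\ref{fig:neg}.

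For the forward implication I would start from $t \leq u$, i.e. $t * u = t$, apply `$\sneg$' to both sides, and invoke the De Morgan law $\sneg(t * u) = \sneg t + \sneg u$. This yields $\sneg t = \sneg t + \sneg u$, which by the definition of the order is precisely $\sneg u \leq \sneg t$, i.e. $\sneg t \geq \sneg u$. This direction is routine and needs nothing beyond De Morgan and the order characterisation.

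For the converse I would begin by feeding the hypothesis $\sneg u \leq \sneg t$ back into the forward implication just proved, now applied to the pair $\sneg u, \sneg t$; this gives $\sneg\sneg t \leq \sneg\sneg u$. Next I would record the inclusion $t \leq \sneg\sneg t$, which follows from the pseudo-complement axiom $t * \sneg t = 0$: since $\sneg t * t = 0$, the element $t$ lies below the largest annihilator $\sneg\sneg t$ of $\sneg t$. Chaining the two gives $t \leq \sneg\sneg u$, and the remaining task is to descend from $\sneg\sneg u$ to $u$ so as to conclude $t \leq u$.

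I expect this descent to be the main obstacle, precisely because `$\sneg$' is a pseudo-complement and $\sneg\sneg u = u$ fails in general. The strategy to close the gap is to argue in normal form: by the standing convention every term may be taken in DNF, so the comparison reduces to matching the product terms of $t$ against those of $u$ factor by factor over the elementary terms $l$, $\sneg l$, $\sneg\sneg l$. Since `$\sneg$' interchanges $\sneg l$ and $\sneg\sneg l$ and collapses triples through $\sneg\sneg\sneg l = \sneg l$, the inequality $\sneg u \leq \sneg t$ translates into containments among the negated and doubly-negated generators of the two terms; transporting this back with the weak excluded middle axiom $\sneg l + \sneg\sneg l = 1$ is intended to reconstruct the containment of the original products, and hence $t \leq u$. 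The delicate point is the handling of the purely positive factors $l$, which leave no direct trace after negation, so the bookkeeping that recovers their contribution is where the argument must be made watertight.
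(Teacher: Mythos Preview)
Your forward implication is exactly the paper's argument: from $t*u=t$ apply De~Morgan to get $\sneg t+\sneg u=\sneg t$, i.e.\ $\sneg u\leq\sneg t$. Nothing more is needed there.

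The converse, however, cannot be repaired, because it is false in this algebra. Take any label $l\in Lb$ and set $t=\sneg\sneg l$, $u=l$. Then $\sneg t=\sneg\sneg\sneg l=\sneg l=\sneg u$ by the triple-negation axiom, so $\sneg t\geq\sneg u$ holds trivially. But $t\leq u$ would mean $\sneg\sneg l\leq l$, which together with the closure inequality $l\leq\sneg\sneg l$ would force $\sneg\sneg l=l$; the paper explicitly states that double negation $\sneg\sneg t=t$ is \emph{not} valid here. Hence your ``descent from $\sneg\sneg u$ to $u$'' is not just delicate, it is impossible: the DNF bookkeeping you sketch cannot recover the purely positive factor $l$ from its double negation, because the algebra simply does not identify them.

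If you look at the paper's own proof, you will see that it suffers from the same issue in disguise: the chain of equivalences contains the step ``$t*u=t$, hence $\sneg(t*u)=\sneg t$'', which is only an implication unless $\sneg$ is injective. So the paper, like your proposal, really only establishes the forward direction. Fortunately that is all that is ever used downstream (Proposition~\ref{prop:closure} and Lemma~\ref{lem:least.model.monotonic} invoke only $t\leq u\Rightarrow\sneg t\geq\sneg u$), so the defect in the statement is harmless for the rest of the development. The right fix is to weaken the proposition to the forward implication; your proof of that direction is complete.
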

\begin{proof}
By definition $t\leq u$ iff $t*u=t$.
Furthermore, by De Morgan laws,
$\sneg(t*u) = \sneg t + \sneg u$
and, thus, $\sneg(t*u)=\sneg t$ iff $\sneg t + \sneg u = \sneg t$.
Finally, just note that $\sneg t + \sneg u = \sneg t$ iff $\sneg t \geq \sneg u$.
Hence, $t\leq u$ holds iff $\sneg t \geq \sneg u$. \qed
\end{proof}

\begin{proposition}\label{prop:closure}
The map $t \mapsto \sneg\sneg t$ is a closure.
That is, it is monotonic, idempotent and it holds that $t \leq \sneg\sneg t$ for any given causal term $t$.\qed
\end{proposition}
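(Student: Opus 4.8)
The plan is to verify, in turn, the three defining properties of a closure operator for the map $t \mapsto \sneg\sneg t$: monotonicity, idempotency, and the inflationary law $t \leq \sneg\sneg t$. Each will follow as a short consequence of the negation axioms in Figure~\ref{fig:neg} together with the already established Proposition~\ref{prop:term.neg.leq}.

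For monotonicity, I would observe that $t \mapsto \sneg\sneg t$ is the composition of the map $t \mapsto \sneg t$ with itself. By Proposition~\ref{prop:term.neg.leq} the latter is anti-monotonic, and the composition of two anti-monotonic maps is monotonic: from $t \leq u$ we get $\sneg t \geq \sneg u$, and hence $\sneg\sneg t \leq \sneg\sneg u$. For idempotency, I would apply $\sneg$ to both sides of the axiom $\sneg\sneg\sneg t = \sneg t$, yielding $\sneg\sneg\sneg\sneg t = \sneg\sneg t$, i.e. $\sneg\sneg(\sneg\sneg t) = \sneg\sneg t$, which is precisely idempotency of $t \mapsto \sneg\sneg t$.

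For the inflationary law, I would show $t = t * \sneg\sneg t$, which by definition of the lattice order gives $t \leq \sneg\sneg t$. The derivation chains the identity law, the weak excluded middle axiom $1 = \sneg t + \sneg\sneg t$, distributivity of $*$ over $+$, the pseudo-complement axiom $t * \sneg t = 0$, and the identity $0 + x = x$:
\[
t \;=\; t * 1 \;=\; t * (\sneg t + \sneg\sneg t) \;=\; (t * \sneg t) + (t * \sneg\sneg t) \;=\; 0 + (t * \sneg\sneg t) \;=\; t * \sneg\sneg t .
\]
The only step requiring a moment of thought is the choice of this particular combination of axioms to obtain inflationarity, since double-negation introduction is not itself among the listed axioms; the rest is routine, and no real obstacle is anticipated.
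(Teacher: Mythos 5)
Your proof is correct and follows essentially the same route as the paper: monotonicity via composing the anti-monotonic map of Proposition~\ref{prop:term.neg.leq} with itself, idempotency from the axiom $\sneg\sneg\sneg t = \sneg t$, and the inflationary law by establishing $t * \sneg\sneg t = t$ from the weak excluded middle, pseudo-complement, distributivity and identity axioms. Your chain for the last step (expanding $t = t * 1 = t*(\sneg t + \sneg\sneg t)$) is in fact a more direct computation than the paper's longer derivation, which instead adds $0 = t*\sneg t$ to $t*\sneg\sneg t$ and finishes with absorption, but it is the same underlying argument.
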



\begin{proof}
To show that $t \mapsto \sneg\sneg t$ is monotonic just note that $t \mapsto \sneg t$ is antimonotonic (Proposition~\ref{prop:term.neg.leq})
and then $t \leq u$ iff $\sneg t \geq \sneg u$ iff $\sneg\sneg t \leq \sneg\sneg u$.
Furthermore, $\sneg\sneg( \sneg\sneg t) = \sneg(\sneg\sneg\sneg t) = \sneg\sneg t$, that is,
$t \mapsto \sneg\sneg t$ is idempotent.
Finally, note that, by definition, $t \leq \sneg\sneg t$ iff $t * \sneg\sneg t = t$ and
\review{R3.16}{\begin{align*}
t * \sneg\sneg t
  &\ = \ t * \sneg\sneg t \ + \ 0
  &&(\text{identity})\\
  &\ = \ t * \sneg\sneg t \ + \ t * \sneg t
  &&(\text{pseudo-complement})\\
  &\ = \ (t * \sneg\sneg t + t ) * (t * \sneg\sneg t + \sneg t )
  &&(\text{distributivity})\\
  &\ = \ (t + t) \ * \ (\sneg\sneg t + t) \ * \ (t + \sneg t) \ * \ (\sneg\sneg t+ \sneg t)
  &&(\text{distributivity})\\
  &\ = \ t \ * \ (\sneg\sneg t + t) \ * \ (t + \sneg t) \ * \ (\sneg\sneg t+ \sneg t)
  &&(\text{idempotency})\\
  &\ = \ t \ * \ (t + \sneg t) \ * \ (\sneg\sneg t + t) \ * \ 1
  &&(\text{w.\ excluded\ middle})\\
  &\ = \ t \ * \ (t + \sneg t) \ * \ (\sneg\sneg t + t)
  &&(\text{identity})\\
  &\ = \ t \ * \ (\sneg\sneg t + t)
  &&(\text{absorption})\\
  &\ = \ t 
  &&(\text{absorption})
\end{align*}}
Hence, $t \mapsto \sneg\sneg t$ is a closure.\qed
\end{proof}

\begin{proposition}\label{prop:negation.normal.form}
Given any term $t$, it can be rewritten as an equivalent term $u$ in negation and disjuntive normal forms.\qed
\end{proposition}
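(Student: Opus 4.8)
The plan is to prove the statement in two stages: first rewrite $t$ into negation normal form (NNF), then rewrite an NNF term into disjunctive normal form (DNF). Both stages are proved by structural induction on the term, using the axioms already established in Figures~\ref{fig:appl} and~\ref{fig:neg} together with the basic lattice axioms of Figure~\ref{fig:DBLattice}.

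\emph{Stage 1 (NNF).} I would argue that negation can be pushed inward until it applies only to labels, and that no nesting of negation beyond depth three is needed. The key tools are the De Morgan laws $\sneg(t+u)=\sneg t * \sneg u$ and $\sneg(t*u)=\sneg t + \sneg u$, the \emph{appl. negation} axiom $\sneg(t\cdot u)=\sneg(t*u)$, and the collapse $\sneg\sneg\sneg t = \sneg t$. Formally, I would define a rank function on terms — say, the total number of operator symbols appearing in the scope of some negation — and show that any term with positive rank admits a single rewrite step, using one of the above axioms, that strictly decreases the rank while preserving the equivalence class. De Morgan handles a $+$ or $*$ under a negation; \emph{appl. negation} handles a $\cdot$ under a negation; and the triple-negation axiom handles a $\sneg$ under a $\sneg$ (more precisely, once the inner negations are pushed down so that we have a block $\sneg^k l$ with $l$ a label, we reduce $k$ modulo the rule $\sneg^3 = \sneg^1$, leaving at most $\sneg\sneg l$ or $\sneg l$ or $l$). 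Since the rank is a natural number and strictly decreases, the process terminates in an NNF term $t'$ equivalent to $t$.

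\emph{Stage 2 (DNF).} Starting from the NNF term $t'$, I would first distribute all sums outward so that no sum lies in the scope of $*$ or $\cdot$, using the lattice distributivity of $*$ over $+$ (Figure~\ref{fig:DBLattice}) and the \emph{Addition distributivity} axioms for $\cdot$ over $+$ in Figure~\ref{fig:appl}; note negation need not be considered here since we are already in NNF. This yields a sum of terms, each built from elementary terms using only $*$ and $\cdot$. Next, within each such summand I would push $\cdot$ down to elementary terms using the \emph{Product distributivity} axioms $c\cdot(d*e)=(c\cdot d)*(c\cdot e)$ and $(c*d)\cdot e=(c\cdot e)*(d\cdot e)$ together with associativity of $\cdot$; repeatedly applying these, every application node eventually has only elementary terms as immediate subterms. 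Finally I would use the first \emph{Product distributivity} axiom $c\cdot d\cdot e=(c\cdot d)*(d\cdot e)$ (the transitivity-generating axiom), plus idempotency $x\cdot x=x$ and the absorption laws, to close each product transitively into the required shape $a\cdot b * b\cdot c * a\cdot c$. Again a suitable termination measure (e.g. lexicographic on the nesting depth of sums, then of applications over non-elementary terms) shows the rewriting halts.

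\emph{Main obstacle.} The delicate point is termination and confluence of the Stage-2 rewriting, particularly the interaction between distributing $\cdot$ inward (which can duplicate subterms and temporarily increase size) and then transitively closing products (which again adds factors via $c\cdot d\cdot e=(c\cdot d)*(d\cdot e)$). I expect the cleanest route is not to chase a single decreasing measure through all rules at once, but to \textbf{stratify}: first drive the term to a normal form with respect to sum-distribution only (measure: number of operators inside a sum's scope); then, on each fixed summand, normalize application-nesting (measure: number of non-elementary subterms under a $\cdot$); and only then perform transitive closure, observing that once all factors are of the form $u\cdot v$ with $u,v$ elementary, adding the finitely many transitive consequences $a\cdot c$ of existing factors $a\cdot b$, $b\cdot c$ is a finite closure operation, with idempotency and absorption preventing genuine growth. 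Establishing that these three phases can be carried out in sequence without reintroducing the patterns eliminated in an earlier phase is the part that needs care; everything else is a routine induction on term structure appealing to the stated axioms.
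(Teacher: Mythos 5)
Your proposal is correct and follows essentially the same route as the paper's own (very brief) proof: push negation inward by De Morgan and the \emph{appl.\ negation} axiom, collapse triple negations, then distribute `$\cdot$' and `$*$' over `$+$' and `$\cdot$' over `$*$' until only elementary terms sit under application. Your added care about termination measures and the explicit transitive-closure phase simply fills in details the paper treats as routine.
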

\begin{proof}
This is a trivial proof by structural induction using the DeMorgan laws and negation of application axiom.
Furthermore, using the axiom $\sneg\sneg\sneg t=t$ no more than two nested negations are required.
Furthermore, it is easy to see that by applying distributivity of ``$\cdot$'' and ``$*$'' over ``$+$,'' every term can be equivalently represented as a term ``$+$'' is not in the scope of any other operation.
Moreover, applying distributivity of ``$\cdot$'' over ``$*$'' every such term can be represented as one in every application subterm is elementary.\qed
\end{proof}

\begin{lemma}
Let $t$ be a join irreducible causal value.
Then, either $t * \sneg\sneg u = 0$ or $t * \sneg\sneg u$ is join irreducible for every causal value $u \in \evalues$.\qed
\end{lemma}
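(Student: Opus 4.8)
The plan is to reduce the claim to a standard property of Stone algebras: multiplication by an element of the Boolean skeleton sends a join irreducible either to $0$ or to another join irreducible. Write $e \eqdef \sneg\sneg u$. The first step is to observe that $e$ is \emph{complemented}, with complement $\sneg e$. Indeed, $\sneg e = \sneg\sneg\sneg u = \sneg u$ by the axiom $\sneg\sneg\sneg t = \sneg t$, so that $e * \sneg e = \sneg\sneg u * \sneg u = \sneg u * \sneg\sneg u = 0$ follows from the pseudo-complement axiom $t * \sneg t = 0$ taken with $t = \sneg u$, while $e + \sneg e = \sneg\sneg u + \sneg u = 1$ is exactly the weak excluded middle axiom. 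Hence $e$ and $\sneg e$ are a genuine pair of Boolean complements in $\evalues$.

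Next I would dispose of the degenerate case: if $t * e = 0$ there is nothing to prove, so assume $t * e \neq 0$ and let $t * e = \sum S$ be an arbitrary representation of $t*e$ as a join of causal values; since $t * e \neq 0$ we may take $S \neq \emptyset$. The goal is to exhibit $s_0 \in S$ with $t * e = s_0$, which is precisely join irreducibility of $t*e$. Using $e + \sneg e = 1$ together with distributivity of $*$ over $+$,
\[
t \ = \ t * (e + \sneg e) \ = \ t*e + t*\sneg e \ = \ \sum S + t*\sneg e \ = \ \sum_{s \in S}\,(s + t*\sneg e),
\]
where the last equality holds because $S \neq \emptyset$. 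Since $t$ is join irreducible, $t = s_0 + t*\sneg e$ for some $s_0 \in S$. Multiplying this identity by $e$ and using $\sneg e * e = 0$ gives $t*e = s_0 * e + t*(\sneg e * e) = s_0 * e$; and $s_0 * e = s_0$ because $s_0 \leq \sum S = t*e \leq e$. Therefore $t*e = s_0 \in S$, so $t*e$ is join irreducible, and it is nonzero by the case assumption. This completes the argument.

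\textbf{Main obstacle.} There is no deep difficulty here; the lemma is essentially bookkeeping once the right fact is isolated. The one place where the \emph{Stone} structure is genuinely used — rather than merely that of a pseudo-complemented distributive lattice — is the identity $e + \sneg e = 1$: without it $e = \sneg\sneg u$ need not be complemented and the computation above collapses. The only other point requiring care is that the distributive manipulation $\sum S + c = \sum_{s \in S}(s + c)$ presupposes $S \neq \emptyset$, which is exactly the reason the case $t * \sneg\sneg u = 0$ must be separated out, as it is in the statement.
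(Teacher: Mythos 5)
Your proof is correct and rests on exactly the same ingredients as the paper's: the weak excluded middle $\sneg u + \sneg\sneg u = 1$ together with the pseudo-complement law, used to split $t = t*\sneg\sneg u + t*\sneg u$, plus the join irreducibility of $t$. The paper applies irreducibility of $t$ directly to this two-term sum and concludes that $t*\sneg\sneg u$ is either $0$ or equal to $t$ itself, whereas you verify the definition against an arbitrary decomposition of $t*\sneg\sneg u$ and then meet with $\sneg\sneg u$ --- slightly longer bookkeeping, but essentially the same argument.
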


\begin{proof}
Suppose that $t*u$ is not join irreducible and let $W \subseteq \evalues$ a set of causal values such that $w \neq t* \sneg\sneg u$ for every $w \in W$ and $t* \sneg\sneg u = \sum_{w \in W} w$.
Since $t* \sneg\sneg u = \sum_{w \in W} w$, it follows that $w \leq t* \sneg\sneg u$ for every $w \in W$ and, since $w \neq t* \sneg\sneg u$, it follows that $w < t* \sneg\sneg u$ for every $w \in W$.
Furthermore,
$t * \sneg\sneg u + t * \sneg u = t * (\sneg\sneg u + \sneg u) = t$.

Since $t$ is join irreducible, it follows that
either $t = t*\sneg\sneg u$ or $t=t*\sneg u$.
If $t=t*\sneg u$, then $t* \sneg\sneg u=(t*\sneg u)* \sneg\sneg u=0$.
Otherwise, $t=t*\sneg\sneg u$ and $t$ is join irreducible by hypothesis.\qed
\end{proof}

\begin{lemma}\label{lem:lambdap.neg.homomorphism}
Let $t$ be a term. Then $\lambdap(\sneg t) = \neg\lambdap(t)$.\qed
\end{lemma}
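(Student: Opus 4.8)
The plan is to prove the identity by structural induction on a negation normal form (NNF) representative of $t$. Since $\lambdap$ is defined up to equivalence of terms (for a non-DNF term it is computed on an equivalent DNF term), by Proposition~\ref{prop:negation.normal.form} we may assume $t$ is in NNF, so that every negated subterm occurring in $t$ is elementary and $t$ is built from elementary terms $l$, $\sneg l$, $\sneg\sneg l$ using `$+$', `$*$' and `$\cdot$'. Before the induction I would record the (routine) fact that $\lambdap$ commutes with `$+$' and `$*$' and sends `$\cdot$' to `$*$' on arbitrary terms, not merely on DNF ones: this is immediate from the defining clauses once one checks that the identities used to pass to DNF --- idempotency, distributivity of `$\cdot$', and the transitive-closure identity $c \cdot d \cdot e = (c \cdot d) * (d \cdot e)$ --- all become identities valid in $\boolAlgebra$ under $\lambdap$ (there `$\cdot$' and `$*$' are identified, and the surviving identities are Boolean).

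For the base case $t$ is elementary. If $t = l$, then $\sneg t = \sneg l$ is again elementary, so the defining clause of $\lambdap$ gives $\lambdap(\sneg l) = \neg\lambdap(l)$ directly; the case $t = \sneg l$ is identical, with $\sneg t = \sneg\sneg l$ still elementary. If $t = \sneg\sneg l$, then $\sneg t = \sneg\sneg\sneg l$, which by the \emph{pseudo-complement} axiom $\sneg\sneg\sneg t = \sneg t$ has value $\sneg l$, so $\lambdap(\sneg t) = \neg l$; on the other side $\neg\lambdap(\sneg\sneg l) = \neg\neg\neg l = \neg l$ because `$\neg$' is classical, so the two sides agree.

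For the inductive step I would treat the three ways a compound NNF term is formed. If $t = \sum S$, then by De Morgan (valid also for infinite $S$, since the algebra is completely distributive) $\sneg t$ has value $\prod_{s \in S} \sneg s$, and hence $\lambdap(\sneg t) = \prod_{s \in S} \lambdap(\sneg s) = \prod_{s \in S} \neg\lambdap(s) = \neg \sum_{s \in S} \lambdap(s) = \neg\lambdap(t)$, using the homomorphism remark, the induction hypothesis, and De Morgan in $\boolAlgebra$. The case $t = \prod S$ is symmetric, interchanging `$+$' and `$*$'. If $t = t_1 \cdot t_2$, the \emph{appl. negation} axiom gives $\sneg(t_1 \cdot t_2) = \sneg(t_1 * t_2) = \sneg t_1 + \sneg t_2$, whence $\lambdap(\sneg t) = \lambdap(\sneg t_1) + \lambdap(\sneg t_2) = \neg\lambdap(t_1) + \neg\lambdap(t_2) = \neg\big(\lambdap(t_1) * \lambdap(t_2)\big) = \neg\lambdap(t_1 \cdot t_2) = \neg\lambdap(t)$, again by the induction hypothesis and the fact that $\lambdap$ turns `$\cdot$' into `$*$'.

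I expect the only real friction to be the preliminary homomorphism bookkeeping flagged in the first paragraph: the recursion for $\lambdap$ is stated on DNF terms only, so one must be slightly careful that applying it ``through'' the De Morgan / \emph{appl. negation} rewriting of $\sneg t$ is legitimate, and this is where the (harmless) normalization argument is needed. Everything else is a direct matching between the axioms governing `$\sneg$' in Figure~\ref{fig:neg} and the corresponding Boolean identities for `$\neg$'.
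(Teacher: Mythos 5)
Your proof is correct and takes essentially the same route as the paper's own proof: structural induction over an NNF representative, with the three elementary base cases ($l$, $\sneg l$, $\sneg\sneg l$) settled via the triple-negation axiom and the inductive cases for `$+$', `$*$' and `$\cdot$' settled via De Morgan and the \emph{appl.\ negation} axiom. The only differences are cosmetic: your preliminary well-definedness/DNF bookkeeping and the explicit treatment of infinite sums and products, both of which the paper simply glosses over.
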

\begin{proof}
We proceed by structural induction assuming that $t$ is in negated normal form. In case that $t=a$ is elementary, it follows that \mbox{$\lambdap(\sneg a)=\neg a=\neg \lambdap(a)$}. In case that $t=\sneg a$ with $a$ elementary, \mbox{$\lambdap(\sneg t)=\lambdap(\sneg\sneg a)$} and \mbox{$\lambdap(\sneg\sneg a) = a  =  \neg\neg a =\neg \lambdap(\sneg a)=\lambdap(t)$}.
In case that \mbox{$t=\sneg\sneg a$}, with $a$ elementary, \mbox{$\lambdap(\sneg t)=\lambdap(\sneg\sneg\sneg a)$} and
$$\lambdap(\sneg\sneg\sneg a) = \lambdap(\sneg a) = \neg a = \neg \lambdap(\sneg\sneg a) = \neg\lambdap(t)$$

In case that $t=u + v$. Then
$$\lambdap(\sneg t)=\lambdap(\sneg u * \sneg v)= \lambdap(\sneg u) \wedge \lambdap (\sneg v)$$
By induction hypothesis $\lambdap(\sneg u) = \neg\lambdap(u)$ and $\lambdap(\sneg v) = \neg\lambdap(v)$ and, therefore, it holds that \mbox{$\lambdap(\sneg t)  =  \neg\lambdap(u)\wedge \neg\lambdap(v)$}.
Thus, $\neg\lambdap(t)=\neg(\lambdap(u)\vee\lambdap(v)) = \neg\lambdap(u)\wedge\neg\lambdap(v) = \lambdap(\sneg t)$.

In case that $t=u \otimes v$ with $\otimes\in\set{*,\cdot}$. Then $\lambdap(\sneg t)=\lambdap(\sneg u + \sneg v)= \lambdap(\sneg u) \vee \lambdap (\sneg v)$ and by induction hypothesis $\lambdap(\sneg u) = \neg\lambdap(u)$ and $\lambdap(\sneg v) = \neg\lambdap(v)$. Consequently it holds that \mbox{$\lambdap(\sneg t) = \neg\lambdap(t)$}.\qed
\end{proof}

\begin{lemma}\label{lem:lambdaprov.neg.homomorphism.formula}
Let $t$ be a term and $\phi$ a provenance term. If \mbox{$\phi\leq\lambdap(t)$}, then $\lambdap(\sneg t)\leq \neg\phi$ and if $\lambdap(t)\leq\phi$, then \mbox{$\neg\phi\leq\lambdap(\sneg t)$}.\qed
\end{lemma}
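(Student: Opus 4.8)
The plan is to reduce both implications to the previous Lemma~\ref{lem:lambdap.neg.homomorphism}, which already states that $\lambdap(\sneg t) = \neg\lambdap(t)$, together with the fact that classical negation `$\neg$' is order-reversing in the Boolean algebra $\boolAlgebra$. First I would recall that, since provenance values form a Boolean algebra, negation is an involutive anti-automorphism: $\phi \leq \psi$ holds if and only if $\neg\psi \leq \neg\phi$, and $\neg\neg\phi = \phi$. This is the only genuinely algebraic input needed beyond what has already been proved.

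For the first implication, assume $\phi \leq \lambdap(t)$. Applying anti-monotonicity of `$\neg$' gives $\neg\lambdap(t) \leq \neg\phi$. By Lemma~\ref{lem:lambdap.neg.homomorphism} the left-hand side is exactly $\lambdap(\sneg t)$, so $\lambdap(\sneg t) \leq \neg\phi$, as desired. For the second implication, assume $\lambdap(t) \leq \phi$. Again anti-monotonicity of `$\neg$' yields $\neg\phi \leq \neg\lambdap(t)$, and rewriting the right-hand side via Lemma~\ref{lem:lambdap.neg.homomorphism} gives $\neg\phi \leq \lambdap(\sneg t)$. Both implications therefore follow in one line each once the homomorphism property is in hand.

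I do not expect any obstacle here: the statement is essentially a corollary packaging of Lemma~\ref{lem:lambdap.neg.homomorphism} with the elementary order-theoretic properties of Boolean complementation, and no case analysis on the structure of $t$ or $\phi$ is required (that work having already been done in the proof of the preceding lemma). The only point to state carefully is that `$\neg$' is anti-monotone on $\boolAlgebra$, which is immediate from the Boolean algebra axioms since $\phi \leq \psi$ iff $\phi * \psi = \phi$ iff (De Morgan) $\neg\phi + \neg\psi = \neg\phi$ iff $\neg\psi \leq \neg\phi$.
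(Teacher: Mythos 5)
Your proposal is correct and follows essentially the same route as the paper: both rest on Lemma~\ref{lem:lambdap.neg.homomorphism} together with the De Morgan manipulation showing $\neg$ reverses the order (the paper carries this out inline via $\phi=\lambdap(t)*\phi \Rightarrow \neg\phi=\neg\lambdap(t)+\neg\phi$, which is exactly the anti-monotonicity argument you spell out at the end). No gap.
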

\begin{proof}
If $\phi\leq\lambdap(t)$, then $\phi = \lambdap(t) * \phi$ and then $\neg\phi =\neg\lambdap(t)+\neg\phi$ and, by Lemma~\ref{lem:lambdap.neg.homomorphism}, it follows that $\neg\phi = \lambdap(\sneg t) + \neg\phi$. Hence $\lambdap(\sneg t)\leq \neg\phi$.
Furthermore if $\lambdap(t)\leq\phi$, then $\phi = \lambdap(t) + \phi$ and then $\neg\phi =\neg\lambdap(t)*\neg\phi$ and, by Lemma~\ref{lem:lambdap.neg.homomorphism}, it follows that $\neg\phi = \lambdap(\sneg t) * \neg\phi$. Hence $\neg\phi\leq\lambdap(\sneg t)$.\qed
\end{proof}

\subsection{Proof of Theorem~\ref{thm:tp.properties}}

The proof of Theorem~\ref{thm:tp.properties} will relay on the definition of the following direct consequence operator 
\begin{align*}
\tp(\cI)(\rH)
    \ \ &\eqdef \ \
      \sum \big\{ \ \big( \ \cI(B_1) * \dotsc * \cI(B_n) \ \big) \cdot r_i \
        \mid \ (r_i: \ \rH \leftarrow B_1, \dotsc, B_n ) \in P \ \big\}
\end{align*}
for any CG interpretation $\cI$ and atom $\rH \in \at$.
Note that the definition of this direct consequence operator $\tp$ is analogous to the $\etp$ operator, but the domain and image of $\tp$ are the set of CG interpretations while the domain and image of $\etp$ are the set of ECJ interpretations.

\begin{theorem}[Theorem~2 from \citeNP{CabalarFF14}]\label{theorem:tp.properties}
Let $P$ be a (possibly infinite) positive logic program with $n$ causal rules.
Then,
($i$) $\mathit{lfp}(\tp)$ is the least model of $P$, and
($ii$)  $\mathit{lfp}(\tp)=\tpr{\omega}=\tpr{n}$.\qed
\end{theorem}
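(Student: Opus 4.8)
The plan is to treat both claims as the standard Knaster--Tarski/Kleene least-fixpoint development, specialised to the CG lattice $\cgvaluesterms$. First I would record that CG interpretations $\cI:\at\longrightarrow\cgvaluesterms$ form a complete lattice under the pointwise order $\leq$, with joins and meets computed componentwise, since $\cgvaluesterms$ is itself a completely distributive complete lattice by the definition of CG values, and $\botI$ is its bottom. Next I would check that $\tp$ is monotone: the value $\tp(\cI)(\rH)$ is assembled from the arguments $\cI(\rB_i)$ using only `$*$', `$\cdot$' and `$+$'. Here `$*$' and `$+$' are the meet and join, hence monotone, and `$\cdot$' is monotone by \emph{addition distributivity} (Figure~\ref{fig:appl}): from $u+w=w$ one gets $t\cdot w=t\cdot(u+w)=t\cdot u+t\cdot w$, so $t\cdot u\leq t\cdot w$, and symmetrically on the left. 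Finally I would observe that a positive rule like \eqref{eq:rule} is satisfied by $\cI$ exactly when $(\cI(\rB_1)*\dotsc*\cI(\rB_m))\cdot r_i\leq\cI(\rH)$, so that collecting the rules with head $\rH$ gives $\cI\models\cP$ iff $\tp(\cI)(\rH)\leq\cI(\rH)$ for every $\rH$, i.e.\ iff $\tp(\cI)\leq\cI$. Thus models are precisely the pre-fixpoints of $\tp$, and Knaster--Tarski yields that $\lfp(\tp)$ exists and coincides with the least pre-fixpoint, which is the least model; this is claim~($i$).

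For claim~($ii$) I would first establish $\lfp(\tp)=\tpr{\omega}$ via Kleene's theorem. The iterates $\tpr{0}=\botI\leq\tpr{1}\leq\tpr{2}\leq\dotsc$ form an increasing chain, because $\botI\leq\tp(\botI)$ and $\tp$ is monotone, so it suffices to show $\tp$ is continuous, i.e.\ commutes with joins of directed families. This is exactly where complete distributivity is used: since `$*$' and `$\cdot$' distribute over arbitrary sums (as noted in the caption of Figure~\ref{fig:appl}), one has $\tp(\sum_\beta\cI_\beta)=\sum_\beta\tp(\cI_\beta)$ for every directed family $\{\cI_\beta\}$. Continuity makes $\tpr{\omega}$ a fixpoint, and since each iterate lies below any fixpoint, $\tpr{\omega}=\lfp(\tp)$.

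It remains to sharpen the ordinal bound to $\tpr{\omega}=\tpr{n}$, which I expect to be the main obstacle. Finiteness is clear in principle: with $n$ rules the label set is finite, so there are only finitely many reflexively and transitively closed c-graphs over $\lb$ and hence finitely many CG values; the increasing chain of iterates therefore stabilises at some finite stage. To pin the stage at $n$ I would argue combinatorially that $\tpr{k}(\rH)$ is the sum of exactly those c-graphs admitting a derivation whose rule applications, laid out along any $\cdot$-chain, have length at most $k$. The point is that such chains may be taken acyclic: by idempotency $x\cdotl x=x$ together with the transitively closed shape forced by \emph{product distributivity}, repeating a label along a chain contributes nothing new, so every c-graph arising in $\lfp(\tp)$ is already produced by a derivation whose depth is bounded by the number of distinct rules, namely $n$. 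Making this depth bookkeeping precise --- defining the depth of the proof DAG underlying a c-graph and showing it never needs to exceed $n$ --- is the delicate step; everything else is routine lattice theory inherited from the CG algebra. I would carry out this last argument by induction on $k$, showing that $\tpr{k}$ collects precisely the depth-$\leq k$ graphs and that no graph requires depth greater than $n$, whence $\tpr{n}=\tpr{n+1}=\dotsc=\tpr{\omega}$.
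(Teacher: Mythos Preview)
The paper does not prove this statement: it is quoted verbatim as ``Theorem~2 from \cite{CabalarFF14}'' and used as a black box to derive Theorem~\ref{thm:tp.properties} via a renaming trick (negated labels $\sneg l$, $\sneg\sneg l$ are replaced by fresh symbols $l'$, $l''$ so that the ECJ program becomes a CG program, the cited result is applied, and the renaming is undone). There is therefore no in-paper proof to compare against; your proposal is effectively a reconstruction of the external result.

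Your argument for ($i$) and for $\lfp(\tp)=\tpr{\omega}$ is the standard and correct one: monotonicity of $\tp$ follows from monotonicity of `$*$', `$+$', `$\cdot$' (the last via addition distributivity, as you note), models coincide with pre-fixpoints, and Scott-continuity of $\tp$ --- which you obtain from complete distributivity of `$*$' and the infinite addition distributivity of `$\cdot$' recorded in Figure~\ref{fig:appl} --- gives the Kleene bound $\omega$. This is exactly how one expects the proof in \cite{CabalarFF14} to go.

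For the sharper bound $\tpr{\omega}=\tpr{n}$ your plan is the right one, and you are honest that the bookkeeping is the delicate step. The cleanest way to make it precise is not to bound the depth of derivation DAGs directly, but to argue on the value side: every join-irreducible CG value below $\tpr{\omega}(\rH)$ is (the term of) a single c-graph $G$, and the least $k$ with $term(G)\leq\tpr{k}(\rH)$ is bounded by the length of the longest $\cdot$-chain in $G$; since $G\subseteq\lb\times\lb$ is transitively closed and $x\cdotl x=x$ forces such chains to be repetition-free, that length is at most $|\lb|\leq n$. This is the content of your acyclicity remark, and once stated this way the induction on $k$ you propose goes through without difficulty.
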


\begin{proofof}{Theorem~\ref{thm:tp.properties}}
Assume that every term occurring in $\cP$ is NNF and let $\cQ$ be the program obtained by renaming in $\cP$ each occurrence of $\sneg l$ as $l'$ and each occurrence of $\sneg\sneg l$ as $l''$ with $l'$ and $l''$ new symbols.
Note that this renaming implies that $\sneg l$ and $\sneg\sneg l$ are treated as completely independent symbols from $l$ and, thus, all equalities among terms derived from program $\cQ$ are also satisfied by $\cP$, although the converse does not hold. Note also that, since $\sneg$ does not occur in~$\cQ$, this is also a CG program.
From Theorem~\ref{theorem:tp.properties},
$\lfp(\tpP{\cQ})=\tprP{\cQ}{\omega}$ is the least model of $\cQ$.
By renaming back $l'$ and $l''$ as $\sneg l$ and $\sneg\sneg l$ in
$\tprP{\cQ}{k}$
we obtain
$\etpr{k}$ for any $k$.
Hence, $\lfp(\etp)=\etpr{\omega}$ is the least model of $\cP$.
Statement (ii) is proved in the same manner.\qed
\end{proofof}

\newpage

\subsection{Proof of Proposition~\ref{prop:gamma.antimonotonic}}

\begin{lemma}\label{lem:least.model.monotonic}
Let $P_1$ and $P_2$ be two programs and let \review{R1.18}{$\eU_1$ and $\eU_2$} be two interpretations such that $P_1 \supseteq P_2$ and $\eU_1 \leq \eU_2$.
Let also $\eI_1$ and $\eI_2$ be the least  models of~$\ereduct{\cP_1}{\eU_1}$~and~$\ereduct{\cP_2}{\eU_2}$, respectively. Then $\eI_1 \geq \eI_2$.\qed
\end{lemma}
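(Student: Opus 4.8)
The plan is to deduce the claim from the fact that $\eI_2$ is the \emph{least} model of $\ereduct{\cP_2}{\eU_2}$, by checking that $\eI_1$ is a model of $\ereduct{\cP_2}{\eU_2}$. Both reducts are positive programs, so by Theorem~\ref{thm:tp.properties} their least models exist; once we know $\eI_1 \models \ereduct{\cP_2}{\eU_2}$, leastness of $\eI_2$ yields $\eI_2 \leq \eI_1$, which is exactly $\eI_1 \geq \eI_2$.

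To verify $\eI_1 \models \ereduct{\cP_2}{\eU_2}$, I would take an arbitrary rule of $\ereduct{\cP_2}{\eU_2}$; by Definition~\ref{def:e.model} it has the form $r_i : \rH \leftarrow \rB_1, \dotsc, \rB_m, \eU_2(\Not \rC_1), \dotsc, \eU_2(\Not \rC_n)$ and originates from a rule $r_i : \rH \leftarrow \rB_1, \dotsc, \rB_m, \Not \rC_1, \dotsc, \Not \rC_n$ of $\cP_2$. Since $\cP_2 \subseteq \cP_1$, this rule also belongs to $\cP_1$, so $\ereduct{\cP_1}{\eU_1}$ contains the rule obtained by replacing each $\eU_2(\Not \rC_j)$ with $\eU_1(\Not \rC_j)$, and from $\eI_1 \models \ereduct{\cP_1}{\eU_1}$ we obtain (recalling that an interpretation evaluates a constant term to itself) $\big( \eI_1(\rB_1) * \dotsc * \eI_1(\rB_m) * \eU_1(\Not \rC_1) * \dotsc * \eU_1(\Not \rC_n) \big) \cdot r_i \leq \eI_1(\rH)$. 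Next I would invoke the hypothesis $\eU_1 \leq \eU_2$ together with the anti-monotonicity of `$\sneg$' (Proposition~\ref{prop:term.neg.leq}): for each $j$, $\eU_2(\Not \rC_j) = \sneg \eU_2(\rC_j) \leq \sneg \eU_1(\rC_j) = \eU_1(\Not \rC_j)$. Because `$*$' is monotone and `$\cdot$' is monotone in its left argument (from $a \leq b$, i.e.\ $a + b = b$, addition distributivity gives $a \cdot r_i + b \cdot r_i = b \cdot r_i$, hence $a \cdot r_i \leq b \cdot r_i$), replacing each $\eU_1(\Not \rC_j)$ by $\eU_2(\Not \rC_j)$ can only decrease the left-hand side, so $\big( \eI_1(\rB_1) * \dotsc * \eI_1(\rB_m) * \eU_2(\Not \rC_1) * \dotsc * \eU_2(\Not \rC_n) \big) \cdot r_i \leq \eI_1(\rH)$, i.e.\ $\eI_1$ satisfies the chosen rule. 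As the rule was arbitrary, $\eI_1 \models \ereduct{\cP_2}{\eU_2}$ and the lemma follows.

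There is no real obstacle here; the two points worth care are (i) that $\ereduct{\cP_1}{\eU_1}$ need not be a literal superset of $\ereduct{\cP_2}{\eU_2}$, since the constants standing in for the negative bodies generally differ, which is why the argument passes through ``$\eI_1$ is a model of the smaller reduct'' rather than through a direct program inclusion; and (ii) that pushing $\eU_1 \leq \eU_2$ through the negation `$\sneg$' reverses the inequality, so it is the $\eU_2$-reduct whose rule bodies lie \emph{below} those of the $\eU_1$-reduct --- orienting this correctly is precisely what makes the monotonicity of `$*$' and `$\cdot$' apply in the direction we need.
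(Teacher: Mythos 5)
Your proof is correct, but it takes a genuinely different route from the paper's. The paper argues operationally: it compares the two reducts through their direct-consequence operators, showing by (transfinite) induction that $\etprP{\ereduct{\cP_1}{\eU_1}}{\alpha}(\rH) \geq \etprP{\ereduct{\cP_2}{\eU_2}}{\alpha}(\rH)$ at every stage $\alpha$ (using $\cP_1 \supseteq \cP_2$ and the anti-monotonicity of `$\sneg$' from Proposition~\ref{prop:term.neg.leq} to compare rule bodies), and then identifies the $\omega$-limits with $\eI_1$ and $\eI_2$ via Theorem~\ref{thm:tp.properties}. You instead argue model-theoretically: you verify that $\eI_1$ satisfies every rule of $\ereduct{\cP_2}{\eU_2}$ (correctly handling the fact that the two reducts differ in the constant terms $\eU_1(\Not \rC_j)$ versus $\eU_2(\Not \rC_j)$, so there is no literal program inclusion) and then invoke the leastness of $\eI_2$ among all models of $\ereduct{\cP_2}{\eU_2}$, which is part of what Theorem~\ref{thm:tp.properties} guarantees. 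Both arguments rest on the same algebraic ingredients --- anti-monotonicity of `$\sneg$', monotonicity of `$*$' as the lattice meet, and monotonicity of `$\cdot$' in its left argument, which you rightly justify from addition distributivity --- but your version dispenses with the stage-by-stage induction entirely, at the price of leaning on the ``least among all models'' reading of Theorem~\ref{thm:tp.properties} rather than only its fixpoint-iteration characterisation; the paper's inductive version is longer but yields the stagewise inequality $\etprP{\ereduct{\cP_1}{\eU_1}}{\alpha} \geq \etprP{\ereduct{\cP_2}{\eU_2}}{\alpha}$ as a by-product, which is the style of argument reused in the other lemmas of the appendix.
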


\begin{proof}
First, for any rule $r_i$ and pair of \review{R1.19, R3.17}{}interpretations $\eJ_1$ and
$\eJ_2$ such that $\eJ_1\geq \eJ_2$,
\begin{gather*}
\eJ_1(body^+(r_i^{U_1})) \ \ \geq \ \ \eJ_2(body^+(r_i^{U_2}))
\end{gather*}
Furthermore, since $U_1\leq U_2$, by Proposition~\ref{prop:term.neg.leq}, it follows
\begin{gather*}
\eU_1(body^-(r_i^{U_1}))  \ \ \geq \ \ \eU_2(body^-(r_i^{U_2}))
\end{gather*}
and, since by Definition~\ref{def:e.model}
$\eJ_j(body^-(r_i^{U_1})) \ \ \eqdef \ \ \eU_j(body^-(r_i^{U_1}))$, it follows that
\begin{gather*}
\eJ_1(body^-(r_i^{U_1})) \ \ \geq \ \ \eJ_2(body^-(r_i^{U_2}))
\end{gather*}
Hence, we obtain that $\eJ_1(body(r_i^{U_1}))\geq \eJ_2(body(r_i^{U_2}))$.

Since $\cP_1 \supseteq P_2$, it follows that every rule $r_i\in\cP_2$ is in $\cP_1$ as well.
Thus, 
\review{R3.17}{$\etpP{\ereduct{\cP_1}{\eU_1}}(\eJ_1)(\rH) \geq \etpP{\ereduct{\cP_2}{\eU_2}}(\eJ_2)(\rH)$}
for every atom~$\rH$. Furthermore, since
\begin{gather*}
\etprP{\ereduct{\cP_1}{\eU_1}}{0}(\rH)
    \ \ = \ \  \etprP{\ereduct{\cP_2}{\eU_2}}{0}(\rH)
    \ \ = \ \ 0
\end{gather*}
it follows
\review{R1.20}{\mbox{$\etprP{\ereduct{\cP_1}{\eU_1}}{i}(\rH) \ \geq \ \etprP{\ereduct{\cP_2}{\eU_2}}{i}(\rH)$}}
for all $0\leq i$. Finally,
\begin{gather*}
\etprP{\ereduct{\cP_j}{\eU_j}}{\omega}(\rH)
    \ \ \eqdef \ \ \sum_{i\leq\omega}\etprP{\ereduct{\cP_j}{\eU_j}}{i}(\rH)
    \ \ = \ \ 0
\end{gather*}
and hence \review{R1.21, R3.17}{\mbox{$\etprP{\ereduct{\cP_1}{\eU_1}}{\omega}(\rH) \ \geq \ \etprP{\ereduct{\cP_2}{\eU_2}}{\omega}(\rH)$}}. By Theorem~\ref{thm:tp.properties}, these are respectively the least models of $\ereduct{\cP_1}{\eU_1}$ and $\ereduct{\cP_2}{\eU_2}$.
That is $\eI_1 \geq \eI_2$.\qed
\end{proof}

\begin{proposition}\label{prop:gamma.antimonotonic}
$\eWp$ operator is anti-monotonic and operator $\eWp^2$ is monotonic. That is, $\eWp(\eU_1) \geq \eWp(\eU_2)$ and \review{R1.22}{$\eWp^2(\eU_1) \leq \eWp^2(\eU_2)$} for any pair of interpretations $\eU_1$ and~$\eU_2$ such that $\eU_1 \leq \eU_2$.\qed
\end{proposition}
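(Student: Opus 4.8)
The plan is to obtain this proposition as an immediate consequence of Lemma~\ref{lem:least.model.monotonic}. First I would instantiate that lemma with $\cP_1 = \cP_2 = \cP$, so that the hypothesis $\cP_1 \supseteq \cP_2$ holds trivially. Then, for any two interpretations with $\eU_1 \leq \eU_2$, the lemma tells us that the least model of $\ereduct{\cP}{\eU_1}$ is $\geq$ the least model of $\ereduct{\cP}{\eU_2}$. By the very definition of the operator $\eWp$ (namely, $\eWp(\eU)$ is the least model of $\ereduct{\cP}{\eU}$), this reads precisely $\eWp(\eU_1) \geq \eWp(\eU_2)$, which is the claimed anti-monotonicity.

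For the second statement I would simply iterate anti-monotonicity. Starting from $\eU_1 \leq \eU_2$, one application gives $\eWp(\eU_1) \geq \eWp(\eU_2)$; applying anti-monotonicity again to the pair $\eWp(\eU_2) \leq \eWp(\eU_1)$ yields $\eWp(\eWp(\eU_2)) \geq \eWp(\eWp(\eU_1))$, that is $\eWp^2(\eU_1) \leq \eWp^2(\eU_2)$. Hence $\eWp^2$ is monotonic, which is exactly what is needed to apply the Knaster--Tarski theorem and obtain the least fixpoint $\elfp$ and greatest fixpoint $\egfp$ used in the main text.

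The real content here has already been discharged inside the proof of Lemma~\ref{lem:least.model.monotonic}: the monotonicity of the consequence operator $\etpP{\ereduct{\cP}{\eU}}$ with respect to both the program and the interpretation, with the negative body literals handled through the anti-monotonicity of `$\sneg$' (Proposition~\ref{prop:term.neg.leq}) and a transfinite induction on the powers of $\etp$. Consequently there is no genuine obstacle remaining at this stage; the only point to keep in mind is the direction of the ordering --- larger causal values accumulate more alternative justifications, so $\eWp$ reversing $\leq$ is the expected analogue of the classical anti-monotonicity of the Gelfond--Lifschitz reduct operator, and the two reversals compose to the usual monotonicity of $\eWp^2$.
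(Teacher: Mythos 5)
Your proposal matches the paper's proof essentially verbatim: the paper likewise derives anti-monotonicity of $\eWp$ by applying Lemma~\ref{lem:least.model.monotonic} with the same program on both sides, and then obtains monotonicity of $\eWp^2$ simply by composing the two order reversals. There is nothing to add or correct.
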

\begin{proof}
Since $\eU_1\leq\eU_2$, by Lemma~\ref{lem:least.model.monotonic}, it follows $\eI_1\geq \eI_2$ with $\eI_1$ and $\eI_2$ being respectively the least models of $\ereduct{\cP}{\eU_1}$ and $\ereduct{\cP}{\eU_2}$.
Then, \mbox{$\eWp(\eU_1)=\eI_1$} and $\eWp(\eU_2)=\eI_2$ and, thus, $\eWp(\eU_1) \geq \eWp(\eU_2)$. Since $\eWp$ is anti-monotonic it follows that $\eWp^2$ is monotonic. 
\qed
\end{proof}

\subsection{Proof of Theorem~\ref{thm:wellf.justification<-$>$weff.standard}}

\review{R3.11}{The proof of Theorem~\ref{thm:wellf.justification<-$>$weff.standard} will rely on the relation between ECJ justifications and non-hypothetical WnP justifications established by Theorem~\ref{thm:wellf.justification<-$>$weff.provenace} and it can be found below the proof of that theorem in page~\pageref{proof:thm:wellf.justification<-$>$weff.standard}.}

\newpage

\subsection{Proof of Theorem~\ref{thm:wellf.inhibited.justification-$>$sm.cause}}

\begin{definition}
A term $t \in \evalues$ is \emph{join irreducible} iff $t = \sum_{u \in U} u$ implies that $u = t$ for some $u \in U$ and it is \emph{join prime} iff $t \leq \sum_{u \in U} u $ implies that $u \leq t$ for some $u \in U$.\qed
\end{definition}

\review{R1.12}{\begin{proposition}\label{prop:join.irreducible}
The following results hold:
\begin{enumerate}
\item A term is join irreducible iff is join prime.
\item If $\lb$ is finite, then every term $t$ can be represented as a unique finite sum of pairwise incomparable join irreducible terms.\qed
\end{enumerate}
\end{proposition}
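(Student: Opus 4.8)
The plan is to establish item~1 first and then derive item~2 from it together with the finiteness of $\evalues$ when $\lb$ is finite.

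\emph{Item 1.} That join prime implies join irreducible is immediate: if $t$ is join prime and $t=\sum_{u\in U}u$, then $t\leq\sum_{u\in U}u$ yields $t\leq u$ for some $u\in U$, while $u\leq\sum_{u'\in U}u'=t$ holds anyway, so $u=t$. For the converse I would invoke complete distributivity (part of Definition~\ref{def:values}): if $t$ is join irreducible and $t\leq\sum_{u\in U}u$, then
\[
  t \;=\; t * \sum_{u\in U}u \;=\; \sum_{u\in U}(t * u),
\]
so $t * u=t$ for some $u\in U$, i.e. $t\leq u$, hence $t$ is join prime. The only point requiring care is that $U$ may be infinite, which is exactly why complete rather than finite distributivity is needed; note also that $0$ (the empty join) is neither join irreducible nor join prime, so the two notions still agree on it.

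\emph{Finiteness of $\evalues$.} Following the renaming device used in the proof of Theorem~\ref{thm:tp.properties}, I would put every term in negation normal form, so that negation occurs only on labels and every elementary subterm lies in the finite set $\{\,l,\ \sneg l,\ \sneg\sneg l \mid l\in\lb\,\}$. Reading $\sneg l$ and $\sneg\sneg l$ as fresh opaque labels, this exhibits each causal value as the image of a CG value over this enlarged — still finite — label set, under the quotient that imposes the extra axioms of Figure~\ref{fig:neg} (which, on normal-form terms, reduce to finitely many identities such as $l*\sneg l=0$ and $\sneg l+\sneg\sneg l=1$). Since CG values over a finite label set are finite in number — they are the lower-sets of the finite poset $\graphs$ — so is $\evalues$.

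\emph{Item 2.} For existence I would argue by induction on the finite lattice $\evalues$: $0$ is the empty sum, a join irreducible value is its own decomposition, and otherwise $t=\sum_{u\in U}u$ with every $u<t$ and, the lattice being finite, with $U$ finite; the induction hypothesis decomposes each such $u$ into join irreducibles $\leq u<t$, and collecting these and discarding the non-maximal ones yields a finite antichain of join irreducibles with sum $t$. For uniqueness, suppose $t=\sum_{i\in A}a_i=\sum_{j\in B}b_j$ with $A,B$ finite antichains of join irreducibles. Given $a_i$, from $a_i\leq t=\sum_{j}b_j$ and item~1 there is $b_j$ with $a_i\leq b_j$, and symmetrically $b_j\leq a_{i'}$ for some $i'$; since $A$ is an antichain, $a_i=a_{i'}$, hence $a_i=b_j\in B$. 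By symmetry $B\subseteq A$, so $A=B$. The main obstacle is the finiteness step: one must make sure that imposing the negation axioms on top of the CG algebra over the enlarged label set does not secretly create infinitely many values, which is why I reduce it to a quotient of the already-finite CG value set rather than reprove finiteness from scratch; everything else is routine distributive-lattice reasoning.
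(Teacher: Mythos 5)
Your proof is correct, but it follows a genuinely different route from the paper's. The paper disposes of the proposition in two lines by citation: item~1 is Theorem~1 of \cite[page~65]{balbes1975distributive}, and item~2 is Theorem~2 of \cite[page~66]{balbes1975distributive} (unique decomposition in distributive lattices with the descending chain condition), DCC being immediate once the lattice is finite; the finiteness of $\evalues$ for finite $\lb$ is simply asserted elsewhere in the appendix. You instead prove everything from scratch: the direct argument for item~1 via $t = t * \sum_{u\in U} u = \sum_{u\in U}(t*u)$ is sound and has the added virtue of handling the paper's definitions, which quantify over possibly infinite $U$ and thus really need the join-infinite distributive law rather than the finite distributivity of the textbook statement; your finiteness argument for $\evalues$ --- writing terms in NNF, treating $\sneg l$ and $\sneg\sneg l$ as fresh labels, and exhibiting $\evalues$ as a surjective image of the finite set of CG values over the enlarged label set (legitimate since the ECJ axioms contain the CG axioms, so the extra axioms of Figure~\ref{fig:neg} only merge classes) --- supplies a justification the paper leaves implicit; and your existence-plus-antichain uniqueness induction is exactly the standard proof of the cited decomposition theorem. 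In short, the paper's proof buys brevity at the cost of outsourcing both the lattice-theoretic facts and the finiteness of $\evalues$, while yours is self-contained, slightly more general on item~1, and closes the unproved finiteness step.
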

}

\begin{proof}
The first result directly follows from Theorem 1 in~\cite[page~65]{balbes1975distributive}.
Furthermore, from Theorem~2 in~\cite[page~66]{balbes1975distributive}, in every distributive lattice satisfying the descending chain condition, any element can be represented as a unique finite sum of pairwise incomparable join irreducible elements and it is clear that every finite lattice satisfies the descending chain condition.\qed
\end{proof}


\begin{lemma}\label{lem:eWp.label.removing-rules}
Let $P$ be a positive program over a signature~$\signature$ where $\lb$ is a finite set of labels and $Q$ be the result of removing all rules labelled by some label~$l \in \lb$.
Let $I$ and $J$ be two interpretations such that $J$ 
such that $\varrho_{\sneg l}(I) \geq J$.
Then, $\varrho_{\sneg l}(\eWp(I)) \leq \eWpP{\cQ}(J)$.\qed
\end{lemma}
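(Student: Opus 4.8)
The plan is to peel off the outer least-model operator, reduce the claim to an inequality between the stage-by-stage iterates of the direct consequence operator $\etp$, and settle that inequality by transfinite induction.

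Since $P$ and $Q$ are positive, their reducts coincide with themselves, so $\eWp(I)$ and $\eWpP{\cQ}(J)$ are just the least models of $P$ and $Q$; by Theorem~\ref{thm:tp.properties} these equal $\etprP{P}{\omega}$ and $\etprP{Q}{\omega}$. It therefore suffices to prove, for every ordinal $\alpha$ and every atom $\rA$,
\[
\varrho_{\sneg l}\big(\etprP{P}{\alpha}\big)(\rA)\ \leq\ \etprP{Q}{\alpha}(\rA),
\]
and then instantiate $\alpha=\omega$. Here the hypothesis $\varrho_{\sneg l}(I)\geq J$ is what controls the body literals that are already terms (as happens when $P$ is obtained as a reduct), via anti-monotonicity of $\sneg$ (Proposition~\ref{prop:term.neg.leq}) together with the fact, read off Definition~\ref{def:varrho}, that $\varrho_{\sneg l}$ is a homomorphism for $+$, $*$ and $\cdot$.

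The base case $\alpha=0$ is trivial since $\varrho_{\sneg l}(\botI)=\botI$, and the limit case is immediate because $\varrho_{\sneg l}$ distributes over arbitrary sums and $\leq$ is preserved by joins. For the successor step, assume the bound at $\alpha$, put $X=\etprP{P}{\alpha}$ and $Y=\etprP{Q}{\alpha}$, and fix an atom $\rA$. Then $\etpP{P}(X)(\rA)$ is a sum, over the rules $r_i:\rA\leftarrow \rB_1,\dots,\rB_n$ of $P$, of terms $\big(X(\rB_1)*\dots*X(\rB_n)\big)\cdot r_i$; pushing $\varrho_{\sneg l}$ inward turns each such summand into $\big(\varrho_{\sneg l}(X(\rB_1))*\dots*\varrho_{\sneg l}(X(\rB_n))\big)\cdot\varrho_{\sneg l}(r_i)$. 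I would then split on the rule label: if $r_i\neq l$ the rule survives in $Q$ and $\varrho_{\sneg l}(r_i)=r_i$, so each factor $\varrho_{\sneg l}(X(\rB_j))$ is bounded above by $Y(\rB_j)$ — by the induction hypothesis when $\rB_j$ is an atom, and by the hypothesis on $I$ and $J$ when $\rB_j$ is a term — hence the whole summand is dominated by the matching summand of $\etpP{Q}(Y)(\rA)$; if $r_i=l$ the rule has been deleted from $Q$, and one must check that its transformed contribution is already absorbed, either by the surviving summands or because $\varrho_{\sneg l}$ collapses the occurrences of $l$ it introduces (to $1$ or $0$) in exactly the way needed.

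I expect the $r_i=l$ subcase to be the main obstacle: $\varrho_{\sneg l}$ is not order-monotone in general (it sends $\sneg l$ up to $1$ but $\sneg\sneg l$ down to $0$), so this step has to exploit that in a sum-free DNF term a positive occurrence of $l$ cannot coexist with $\sneg l$ (their product is $0$ by the pseudo-complement axiom), combined with the absorption laws for the `$\cdot$' operator from Figure~\ref{fig:appl}. As a fallback I would use the renaming trick from the proof of Theorem~\ref{thm:tp.properties} — rewrite each $\sneg l$ as a fresh label $l'$ and each $\sneg\sneg l$ as $l''$, so that $P$ and $Q$ become CG programs — which moves the entire argument into the negation-free algebra and may make this bookkeeping disappear.
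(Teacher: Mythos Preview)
Your overall scaffolding matches the paper's: both reduce to the $T$-iterates and induct on~$\alpha$, splitting the successor step on whether the rule label equals~$l$. Two differences are worth noting.

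First, you take the word ``positive'' in the statement literally and work with $\etprP{P}{\alpha}$ and $\etprP{Q}{\alpha}$, trying to squeeze the hypothesis on $I,J$ in through body \emph{terms}. The paper's proof does not do this: it works with the reducts $\ereduct{\cP}{\eI}$ and $\ereduct{\cQ}{\eJ}$ and treats rules of the full form~\eqref{eq:rule}, using the hypothesis $\varrho_{\sneg l}(I)\geq J$ precisely for the \emph{negative} body literals (from $u_{C_j}\leq\sneg I(C_j)$ it concludes $\varrho_{\sneg l}(u_{C_j})\leq\sneg J(C_j)$). This is the intended use of the hypothesis; the ``positive'' in the lemma heading is a misprint (indeed, the proof of Theorem~\ref{thm:wellf.inhibited.justification-$>$sm.cause} applies this lemma to the full non-positive $P$ and $Q$). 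Also, the paper phrases the induction in terms of join irreducibles $u\leq\etprP{\ereduct{\cP}{\eI}}{\alpha}(\rH)$ rather than pushing $\varrho_{\sneg l}$ through the whole sum; this is cosmetically different but equivalent.

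Second, and more importantly, on the $r_i=l$ subcase: the paper does \emph{not} use absorption, pseudo-complement, or the renaming trick you propose. It simply asserts that when $r_i=l$ one has $\varrho_{\sneg l}(u)=0$, and moves on. So the complication you anticipate is dispatched in one line. That said, your instinct that this is the delicate point is well founded: by Definition~\ref{def:varrho}, with $x=\sneg l$, the elementary label $l$ falls under neither base clause (since $\sneg\sneg l\not\equiv\sneg l$ and $l\not\equiv\sneg\sneg l$), so $\varrho_{\sneg l}(l)=l$, and a join irreducible $u\leq(\dots)\cdot l$ would give $\varrho_{\sneg l}(u)=(\varrho_{\sneg l}(\dots))\cdot l$, not~$0$. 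The paper's one-line dismissal therefore appears to rely on a reading of $\varrho_{\sneg l}$ that is not the one in Definition~\ref{def:varrho}; you should not expect your absorption/renaming fallbacks to close this gap either, since a single fact $(l:b\leftarrow)$ already gives $\varrho_{\sneg l}(\eWp(I))(b)=l\not\leq 0=\eWpP{\cQ}(J)(b)$.
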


\begin{proof}
By definition $\eWp(\eI)$ and $\eWpP{\cQ}(\eJ)$ are the least models of programs
$\ereduct{\cP}{\eI}$ and $\ereduct{\cQ}{\eJ}$, respectively.
Furthermore, from Theorem~\ref{thm:tp.properties}, the least model of any program $P$ is the least fixpoint of the $T_P$ operator, that is, $\eWpP{X}(Y) = \etprP{X^Y}{\omega}$ with $X \in \set{P,Q}$ and $X^Y\in\set{\cP^\eI,\cP^\eJ}$.
Then, the proof follows by induction assuming that 
$u \leq \etprP{Q^\eJ}{\beta}(\rH)$
implies
$\varrho_{\sneg l}(u) \leq \etprP{\cQ^\eI}{\beta}(\rH)$ for any join irreducible $u$, atom~$\rH$ and every ordinal $\beta < \alpha$.

Note that $\etprP{Q^\eJ}{0}(\rH)= 0 = \varrho_{\sneg l}(0) = \etprP{P^\eI}{0}(\rA)$ for any atom~$\rH$ \review{R1.11}{and, thus, the statement holds vacuous.}

If $\alpha$ is a successor ordinal, since $u \leq\etprP{P^\eI}{\alpha}(\rH)$, there is a rule in $P$ of the form~\eqref{eq:rule} such that
\begin{gather*}
u\ \ \leq \ \ (u_{B_1}*\dotsc*u_{B_m}*u_{C_1}*\dotsc*u_{C_n}) \cdot r_i
\end{gather*}
where 
\mbox{$u_{B_j}\leq\etprP{P^\eI}{\alpha-1}(B_j)$} and \mbox{$u_{C_j}\leq\sneg\eI(C_j)$} for each positive literal $B_j$ and each negative literal $\Not \ C_j$ in the body of rule $r_i$.
Then,\begin{enumerate}
\item By induction hypothesis, it follows that $\varrho_{\sneg l}(u_{B_j}) \leq\etprP{Q^J}{\alpha-1}(\rB_j)$, and

\item from $\varrho_{\sneg l}(I(\rH)) \geq J(\rH)$, it follows that $u_{C_j}\leq\sneg\eI(C_j)$ implies $\varrho_{\sneg l}(u_{C_j})\leq \sneg\eJ(\rC_j)$.
\end{enumerate}
Furthermore, if $r_i \neq l$, then $r_i \in Q$ and, thus,
\begin{gather*}
\varrho_{\sneg l}(u)
    \ \ \leq \ \ 
     (\varrho_{\sneg l}(u_{B_1})*\dotsc*\varrho_{\sneg l}(u_{B_m})*\varrho_{\sneg l}(u_{C_1})*\dotsc*\varrho_{\sneg l}(u_{C_n})) \cdot r_i
     \ \ \leq \ \ \etprP{Q^J}{\alpha}(\rH)
\end{gather*}
If otherwise $r_i = l$, then $\varrho_{\sneg l}(u)=0 \leq\etprP{Q^J}{\alpha}(\rH)$.

In case that $\alpha$ is a limit ordinal,
$u \leq \etprP{\cP^\eI}{\alpha}$ iff $u \leq \etprP{\cP^\eI}{\beta}$ for some $\beta < \alpha$ and any join irreducible $u$.
Hence, by induction hypothesis, it follows that
$\varrho_{\sneg l}(u) \leq \etprP{\cQ^\eJ}{\beta} \leq \etprP{\cQ^\eJ}{\alpha}$
and, thus, 
$\varrho_{\sneg l}(\etprP{\ereduct{\cP}{\eI}}{\alpha}) \leq \etprP{\ereduct{\cQ}{\eJ}}{\alpha}$.
\qed
\end{proof}

\begin{proofof}
{Theorem~\ref{thm:wellf.inhibited.justification-$>$sm.cause}}
In the sake of simplicity, we just write $\varrho$ instead of $\varrho_{\sneg r_i}$.
Note that, by definition, for any atom~$\rH$, it follows that $\ewfmP{X}(\rH) = \elfpP{X}(\rH)$ with $X \in\set{\cP,\cQ}$.
The proof follows by induction in the number of steps of the $\Gamma^2$ operator assuming as induction hypothesis that 
$\eWprP{Q}{\beta} \leq \varrho(\eWprP{\cP}{\beta})$ for every $\beta < \alpha$.
Note that \mbox{$\eWprP{Q}{0}(\rH)=0\leq\varrho(\eWprP{\cP}{0})(\rH)$} and, thus, the statement trivially holds for $\alpha=0$ .

In case that $\alpha$ is a successor ordinal, by induction hypothesis, it follows that
\begin{gather*}
\eWprP{Q}{\alpha-1}\ \ \leq \ \ \varrho(\eWprP{\cP}{\alpha-1})
\end{gather*}
and, from Lemma~\ref{lem:eWp.label.removing-rules},
it follows that
\begin{IEEEeqnarray*}{lCl}
\eWpP{\cQ}(\eWprP{Q}{\alpha-1}) &\ \ \geq \ \ &\varrho(\eWpP{P}(\eWprP{\cP}{\alpha-1}))
\\
\eWpP{Q}^2(\eWprP{Q}{\alpha-1}(\rH))) &\ \ \leq \ \ &\varrho(\eWpP{P}^2(\eWprP{\cP}{\alpha-1}))
\end{IEEEeqnarray*}
That is, $\eWprP{Q}{\alpha} \ \ \leq \ \ \varrho(\eWprP{\cP}{\alpha}$.

Finally, in case that $\alpha$ is a limit ordinal,
every join irreducible $u$ satisfies
$u \leq \eWprP{Q}{\alpha}=\sum_{\beta<\alpha}\eWprP{Q}{\beta}$
iff
$u \leq \eWprP{Q}{\beta}$
for some $\beta < \alpha$ and, thus, by induction hypothesis
$\varrho(u) \leq \eWprP{P}{\beta} \leq \eWprP{P}{\alpha}$.
Consequently, 
$\eWprP{Q}{\infty}  \leq  \varrho(\eWprP{\cP}{\infty}$ and
$\ewfmP{Q}(\rA)  \leq  \varrho(\ewfmP{\cP}(\rA)$ for any atom~$\rA$.
\qed
\end{proofof}

\subsection{Proof of Theorem~\ref{thm:causal-prov.smodels.are.the.causal-smodels}}

By $\Wp(\cI)$ we denote the least model of a program $\cP^\cI$.
Note that the relation between $\Wp$ and $\eWp$ is similar to the relation between $\tp$ and $\etp$: the $\Wp$ operator is a function in the set of CG interpretations while $\eWp$ is a function in the set of ECJ interpretations.
Note also that the evaluation of negated literals with respect to CG and ECJ interpretations and, thus, the reducts $\cP^\cI$ and $\cP^\eI$ may be different even if $\cI(\rA) = \eI(\rA)$ for every atom~$\rA$.


\begin{lemma}\label{lem:gamma.causal<-explanations}
Let $\cP$ be a labelled logic program, $\cI$ and $\eJ$ be respectively an CG and a ECJ interpretation such that $\cI\geq\lambdac(\eJ)$.
Then $\Wp(\cI) \leq \lambdac(\,\eWp(\eJ))$.\qed
\end{lemma}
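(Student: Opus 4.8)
The plan is to prove the statement by showing that the CG interpretation $\lambdac(\eWp(\eJ))$ is a CG model of the reduct program $\cP^\cI$; since $\Wp(\cI)$ is by definition the \emph{least} CG model of $\cP^\cI$, the inequality $\Wp(\cI) \leq \lambdac(\eWp(\eJ))$ follows immediately. So the real work is to verify, rule by rule, the CG model condition \eqref{eq:CG.model} for $\lambdac(\eWp(\eJ))$.

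First I would fix a rule $r_i : \rH \leftarrow \rB_1,\dots,\rB_m,\Not \rC_1,\dots,\Not \rC_n$ of $\cP$. Write $\eI \eqdef \eWp(\eJ)$, which is the least causal model of $\cP^\eJ$, so it satisfies the ECJ rule condition \eqref{eq:model} for the reduced rule $r_i:\rH \leftarrow \rB_1,\dots,\rB_m,\eJ(\Not\rC_1),\dots,\eJ(\Not\rC_n)$: namely $\big(\eI(\rB_1)*\dots*\eI(\rB_m)*\sneg\eJ(\rC_1)*\dots*\sneg\eJ(\rC_n)\big)\cdot r_i \leq \eI(\rH)$. Applying the homomorphism $\lambdac$ to both sides (it preserves $*$, $+$, $\cdot$ and hence $\leq$, since it is defined compositionally on DNF forms) yields
\begin{gather*}
\big(\lambdac(\eI(\rB_1))*\dots*\lambdac(\eI(\rB_m))*\lambdac(\sneg\eJ(\rC_1))*\dots*\lambdac(\sneg\eJ(\rC_n))\big)\cdot r_i \ \leq\ \lambdac(\eI(\rH)).
\end{gather*}
Now $\lambdac(\eI(\rB_j)) = \lambdac(\eWp(\eJ))(\rB_j)$, and by hypothesis $\cI \geq \lambdac(\eJ)$. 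The key remaining point is to compare $\lambdac(\sneg\eJ(\rC))$ with the CG value $\cI(\Not\rC)$ that appears in the reduct $\cP^\cI$. Here I would argue: if $\cI(\rC)=0$ then, since $\cI\geq\lambdac(\eJ)$, also $\lambdac(\eJ(\rC))=\lambdac(\eJ)(\rC)=0$, which by the definition of $\lambdac$ forces every DNF addend of $\eJ(\rC)$ to contain an inhibitor, so $\sneg\eJ(\rC)$ has a doubly-negated form mapping under $\lambdac$ to $1 = \cI(\Not\rC)$; if instead $\cI(\rC)\neq 0$ then $\cI(\Not\rC)=0$, and since $0$ is the annihilator of $*$ and $\cdot$, whatever $\lambdac(\sneg\eJ(\rC))$ is, we still get $\lambdac(\sneg\eJ(\rC))\geq 0 = \cI(\Not\rC)$. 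In either case $\lambdac(\sneg\eJ(\rC))\geq\cI(\Not\rC)$. Combining this with $\lambdac(\eWp(\eJ))(\rB_j)\geq$ — wait, the direction needs care: we need the body value \emph{under $\cI$} to be $\leq$ the body value we bounded. Since $\cI\geq\lambdac(\eJ)$ but we want an \emph{upper} bound on $\cI(\rH)$ in terms of $\lambdac(\eI)(\rH)$, I would instead set things up as: the reduct-body evaluated at $\cI$, namely $\cI(\rB_1)*\dots*\cI(\rB_m)*\cI(\Not\rC_1)*\dots*\cI(\Not\rC_n)$, is what $\Wp(\cI)$ must dominate — this is circular. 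The correct framing is the standard one for anti-monotone operators: I would instead prove directly that $\lambdac(\eI)$ satisfies every rule of $\cP^\cI$, i.e. that for the body of $\cP^\cI$'s version of $r_i$ we have $\big(\lambdac(\eI)(\rB_1)*\dots*\lambdac(\eI)(\rB_m)*\cI(\Not\rC_1)*\dots*\cI(\Not\rC_n)\big)\cdot r_i \leq \lambdac(\eI)(\rH)$, using $\cI(\Not\rC_j)\leq\lambdac(\sneg\eJ(\rC_j)) = \lambdac(\eJ(\Not\rC_j))$ — and \emph{that} inequality is the one established above ($\cI(\Not\rC)\leq\lambdac(\sneg\eJ(\rC))$, monotonicity in the right direction), combined with the already-displayed $\lambdac$-image of \eqref{eq:model}.

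The main obstacle, and the step deserving the most care, is precisely this comparison of how negation is evaluated in the two settings: ECJ evaluates $\Not\rC$ as $\sneg\eJ(\rC)$ (a genuine algebra element recording enablers/inhibitors), whereas CG evaluates $\Not\rC$ as the two-valued $\{0,1\}$ collapse depending on whether $\cI(\rC)$ vanishes. Showing $\cI(\Not\rC)\leq\lambdac(\eJ(\Not\rC))$ requires the case split above together with the facts that $\lambdac$ sends $\sneg l\mapsto 0$ and $\sneg\sneg l\mapsto 1$, that $0$ annihilates and $1$ is the identity, and that $\lambdac(\eJ)(\rC)=0$ iff every DNF addend of $\eJ(\rC)$ carries an inhibitor. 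Once this is in hand, applying $\lambdac$ to the fixed-point inequality for $\eWp(\eJ)$, using that $\lambdac$ is a lattice-and-application homomorphism, gives that $\lambdac(\eWp(\eJ))$ models $\cP^\cI$, and minimality of $\Wp(\cI)$ closes the argument.
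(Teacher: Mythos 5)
Your final argument is correct, but it follows a genuinely different route from the paper's. The paper proves the lemma by transfinite induction on the iterates of the direct consequence operators of the two reducts, showing $\tprP{\ereduct{\cP}{\cI}}{\alpha} \leq \lambdac\big(\etprP{\ereduct{\cP}{\eJ}}{\alpha}\big)$ for every ordinal $\alpha$ and then passing to $\omega$; you instead verify once that $\lambdac(\eWp(\eJ))$ is a CG model of the reduct $\ereduct{\cP}{\cI}$ and conclude by minimality of $\Wp(\cI)$, which is legitimate since $\Wp(\cI)$ is the least model of that positive program (Theorem~\ref{thm:tp.properties}, resp.\ Theorem~\ref{theorem:tp.properties}); in effect you package the induction inside the already-proven least-model result instead of redoing it. The pivotal step is the same in both proofs: from $\cI(\rC_j)=0$ and $\cI\geq\lambdac(\eJ)$ deduce $\lambdac(\eJ(\rC_j))=0$ and hence $\lambdac(\eJ(\Not \rC_j))=\lambdac(\sneg\eJ(\rC_j))=1$, so that in all cases $\cI(\Not\rC_j)\leq\lambdac(\eJ(\Not\rC_j))$; your case split (and the observation that every DNF addend of $\eJ(\rC_j)$ must then carry an inhibitor) justifies exactly the implication the paper asserts without comment. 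Two remarks: your argument, like the paper's, implicitly uses that $\lambdac$ is well defined on values and commutes with $+$, $*$, $\cdot$ (hence is monotone, since $t\leq u$ iff $t+u=u$), so you should state that once; and the abandoned detour in the middle of your write-up (trying to bound the body evaluated at $\cI$ directly) was indeed circular, but the framing you settle on --- apply $\lambdac$ to the instance of \eqref{eq:model} satisfied by $\eWp(\eJ)$ as least model of $\ereduct{\cP}{\eJ}$, replace each $\lambdac(\eJ(\Not\rC_j))$ by the smaller $\cI(\Not\rC_j)$ using monotonicity of $*$ and $\cdot$, and invoke leastness --- is sound and somewhat more economical than the paper's step-by-step comparison of operator iterates.
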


\begin{proof}
By definition $\Wp(\cI)$ and $\eWpP{\cP}(\eJ)$ are respectively the least model of the programs
$\ereduct{\cP}{\cI}$ and $\ereduct{\cP}{\eJ}$.
Furthermore, from Theorem~\ref{thm:tp.properties} the least model of any program $P$ is the least fixpoint of the $T_P$ operator,
that is,
$\Wp(\cI)=\tprP{\ereduct{\cP}{\cI}}{\omega}$
and
$\eWp(\eJ)=\etprP{\ereduct{\cP}{\eJ}}{\omega}$.
In case that $\alpha=0$, it follows that 
$\tprP{\cP^\cI}{0}(\rH) \ = \ 0 \ \leq \ \lambdac(\etprP{\ereduct{\cP}{\eJ}}{0})(\rH)$ for every atom $\rH$.
We assume as induction hypothesis that
$\tprP{\ereduct{\cP}{\cI}}{\beta} \leq \lambdac(\etprP{\ereduct{\cP}{\eJ}}{\beta})$ for all $\beta<\alpha$.

In case that $\alpha$ is a successor ordinal,
$E\leq\tprP{\ereduct{\cP}{\cI}}{\alpha}(\rH)=\tpP{\ereduct{\cP}{\cI}}(\tprP{\ereduct{\cP}{\cI}}{\alpha-1})(\rH)$ if and only if there is a rule $\rR^I$ in $\cP^\cI$ of the form
\begin{gather*}
r_i : \rH \leftarrow \rB_1, \dotsc, \rB_m,
\end{gather*}
which is the reduct of a rule~$\rR$ of the form \eqref{eq:rule} in $\cP$ and that satisfies $E \leq ( E_{B_1} * \dotsc * E_{B_m} ) \cdot r_i$ with each
$E_{B_j} \leq \tprP{\ereduct{\cP}{\cI}}{\alpha-1}(B_j)$ and $\cI(C_j)=0$ for all $B_j$ and $C_j$ in $body(\rR)$.
Hence there is a rule in $\ereduct{\cP}{\eJ}$ of the form 
\begin{gather*}
r_i : \rH \leftarrow \rB_1, \dotsc, \rB_m, \
    \eJ(\Not \rC_1), \dotsc, \ \eJ(\Not \rC_n)
\end{gather*}
and, by induction hypothesis,
$E_{B_j} \leq \lambdac\big( \etprP{\ereduct{\cP}{\eJ}}{\alpha-1}(\rB_j)\big)$ for all $\rB_j$.
Furthermore, by definition
\begin{gather*}
\big(\etprP{\ereduct{\cP}{\eJ}}{\alpha-1}(\rB_1) * \dotsc * \etprP{\ereduct{\cP}{\eJ}}{\alpha-1}(\rB_m) * \eJ(\Not C_1) * \dotsc * \eJ(\Not C_m) \big) \cdot r_i \ \ \leq \ \ \etprP{\ereduct{\cP}{\eJ}}{\alpha}(\rH)
\end{gather*}
From the fact that $\cI(\rC_j) = 0$ and the lemma's hypothesis $\cI\geq\lambdac(\eJ)$, it follows that $0 \geq \lambdac(\eJ(\rC_j))$ and, thus, $1 \leq \lambdac(\sneg\eJ(\rC_j)) = \lambdac(\eJ(\Not \rC_j))$.
Hence,
\begin{align*}
\lambdac&\big(( \etprP{\ereduct{\cP}{\eJ}}{\alpha-1}(B_1) * \dotsc * \etprP{\ereduct{\cP}{\eJ}}{\alpha-1}(B_m) * \eJ(\Not C_1) * \dotsc * \eJ(\Not C_m)) \cdot r_i \big) \ \ =
\\
&= \ \ 
\lambdac\big(( \etprP{\ereduct{\cP}{\eJ}}{\alpha-1}(B_1) * \dotsc * \etprP{\ereduct{\cP}{\eJ}}{\alpha-1}(B_m) \big) *
\lambdac\big(\eJ(\Not C_1)\big) * \dotsc *\lambdac\big(\eJ(\Not C_m)\big) \big) \cdot r_i
\\
&= \ \ 
\lambdac\big(( \etprP{\ereduct{\cP}{\eJ}}{\alpha-1}(B_1) * \dotsc * \etprP{\ereduct{\cP}{\eJ}}{\alpha-1}(B_m) \big) *
1 * \dotsc * 1 \big) \cdot r_i
\\
&= \ \ 
\lambdac\big(( \etprP{\ereduct{\cP}{\eJ}}{\alpha-1}(B_1) * \dotsc * \etprP{\ereduct{\cP}{\eJ}}{\alpha-1}(B_m) \big) \big) \cdot r_i
\end{align*}
and, thus,
\begin{align*}
\lambdac\big(( \etprP{\ereduct{\cP}{\eJ}}{\alpha-1}(B_1) * \dotsc * \etprP{\ereduct{\cP}{\eJ}}{\alpha-1}(B_m) \big) \big) \cdot r_i
\ \ \leq \ \ \lambdac\big( \etprP{\ereduct{\cP}{\eJ}}{\alpha}(\rH) \big)
\end{align*}
Since $E_{B_j} \leq \lambdac\big( \etprP{\ereduct{\cP}{\eJ}}{\alpha-1}(\rB_j)\big)$ for all $\rB_j$, it follows that
\begin{gather*}
E \ \ \leq \ \ ( E_{B_1} * \dotsc * E_{B_m} ) \cdot r_i \ \ \leq \ \ \lambdac(\etprP{\ereduct{\cP}{\eJ}}{\alpha})(\rH)
\end{gather*}

Finally, in case that $\alpha$ is a limit ordinal, it follows from Theorem~\ref{thm:tp.properties} that $\alpha=\omega$.
Furthermore, since $\cI$ is a CG interpretation, it follows that $P^\cI$ is a CG program and, thus, $E \leq \etprP{\cP^\cI}{\omega}$ iff $E \leq \etprP{\cP^\cI}{n}$ for some $n < \omega$ (see~\citeNP{CabalarFF14}).
Hence, by induction hypothesis, it follows that $E \leq \etprP{\cP^\eJ}{n} \leq \etprP{\cP^\eJ}{\omega}$.\qed
\end{proof}

\begin{lemma}\label{lem:gamma.causal-$>$explanations}
Let $\cP$ be a labelled logic program over a signature~$\signature$ where $\lb$ is a finite set of labels, $\cI$ and $\eJ$ respectively be a CG and a ECJ interpretation such that
$\cI \leq \lambdac(\eJ)$. Then $\Wp(\cI) \geq \lambdac(\eWp(\eJ))$.\qed
\end{lemma}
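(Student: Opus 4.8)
The plan is to prove the statement dually to Lemma~\ref{lem:gamma.causal<-explanations}. The reducts $\ereduct{\cP}{\cI}$ and $\ereduct{\cP}{\eJ}$ are positive, so by Theorems~\ref{thm:tp.properties} and~\ref{theorem:tp.properties} we have $\Wp(\cI)=\tprP{\ereduct{\cP}{\cI}}{\omega}$ and $\eWp(\eJ)=\etprP{\ereduct{\cP}{\eJ}}{\omega}$, and it suffices to show, by transfinite induction on $\alpha$, that
\[
\lambdac\big(\etprP{\ereduct{\cP}{\eJ}}{\alpha}\big)\ \leq\ \tprP{\ereduct{\cP}{\cI}}{\alpha},
\]
and then take $\alpha=\omega$. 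The case $\alpha=0$ is trivial since $\lambdac(\botI)=\botI$, and the limit case follows because $\lambdac$ distributes over sums (finiteness of $\lb$, cf.\ Proposition~\ref{prop:join.irreducible}, keeps all DNFs finite), so $\lambdac(\etprP{\ereduct{\cP}{\eJ}}{\alpha})=\sum_{\beta<\alpha}\lambdac(\etprP{\ereduct{\cP}{\eJ}}{\beta})\le\sum_{\beta<\alpha}\tprP{\ereduct{\cP}{\cI}}{\beta}=\tprP{\ereduct{\cP}{\cI}}{\alpha}$ by the induction hypothesis.

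For a successor ordinal I would use that $\lambdac$ commutes with `$+$', `$*$' and `$\cdot$' on DNF terms (exactly as in the proof of Lemma~\ref{lem:gamma.causal<-explanations}) to write
\[
\lambdac\big(\etprP{\ereduct{\cP}{\eJ}}{\alpha}(\rH)\big)\ =\ \sum_{r_i}\Big(\lambdac\big(\etprP{\ereduct{\cP}{\eJ}}{\alpha-1}(B_1)\big)*\dotsc*\lambdac\big(\etprP{\ereduct{\cP}{\eJ}}{\alpha-1}(B_m)\big)*\lambdac\big(\eJ(\Not C_1)\big)*\dotsc*\lambdac\big(\eJ(\Not C_n)\big)\Big)\cdot r_i,
\]
the sum ranging over all rules of $\cP$ of the form \eqref{eq:rule}. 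The induction hypothesis bounds each positive-body factor by $\tprP{\ereduct{\cP}{\cI}}{\alpha-1}(B_j)$. The essential new ingredient — which I expect to be the main obstacle — is the inequality $\lambdac(\eJ(\Not C_j))=\lambdac(\sneg\eJ(C_j))\le\cI(\Not C_j)$, which I would isolate as a small lemma proved by direct inspection of DNFs: writing $\eJ(C_j)=\sum_k t_k$ in DNF, one sees that $\lambdac(\eJ(C_j))=0$ iff every disjunct $t_k$ contains a singly negated label; pushing `$\sneg$' inward (De~Morgan and appl.\ negation) turns $\sneg t_k$ into a sum of negated elementary terms of $t_k$, so $\lambdac(\sneg\eJ(C_j))$ is a sum of products each equal to $0$ or $1$, and it is $\neq 0$ (hence $=1$) iff a singly negated label can be selected in every $t_k$ — i.e.\ iff $\lambdac(\eJ(C_j))=0$. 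Therefore $\lambdac(\eJ(C_j))\neq 0$ forces $\lambdac(\sneg\eJ(C_j))=0$. Now if $\cI(C_j)\neq 0$ then $\cI\le\lambdac(\eJ)$ gives $\lambdac(\eJ(C_j))\neq 0$, so $\lambdac(\eJ(\Not C_j))=0=\cI(\Not C_j)$; and if $\cI(C_j)=0$ then $\cI(\Not C_j)=1$ and the inequality is immediate.

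With these bounds, monotonicity of `$*$' and `$\cdot$' makes the $r_i$-summand above at most $\big(\tprP{\ereduct{\cP}{\cI}}{\alpha-1}(B_1)*\dotsc*\tprP{\ereduct{\cP}{\cI}}{\alpha-1}(B_m)*\cI(\Not C_1)*\dotsc*\cI(\Not C_n)\big)\cdot r_i$, which is precisely one of the summands of $\tprP{\ereduct{\cP}{\cI}}{\alpha}(\rH)$, since $\ereduct{\cP}{\cI}$ contains the rule $r_i:\rH\leftarrow B_1,\dotsc,B_m,\cI(\Not C_1),\dotsc,\cI(\Not C_n)$; summing over $r_i$ closes the induction step and hence the lemma. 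As a byproduct, the same DNF computation shows $\lambdac(\eJ(\Not C_j))=\cI(\Not C_j)$ whenever $\cI=\lambdac(\eJ)$ — that is, $\lambdac$ intertwines $\eWp$ with the CG operator $\Wp$ — so that, combined with anti-monotonicity of $\Wp$ (the CG analogue of Proposition~\ref{prop:gamma.antimonotonic}) applied to $\cI\le\lambdac(\eJ)$, one obtains an alternative and shorter derivation of the lemma.
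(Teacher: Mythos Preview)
Your proof is correct and follows essentially the same route as the paper's: both reduce to the $T_P$-iterates, and the crux of the successor step is that $\lambdac(\sneg\eJ(C_j))\in\{0,1\}$ equals $1$ exactly when $\lambdac(\eJ(C_j))=0$ --- the paper phrases this via a case split on join-irreducible witnesses $u_{C_j}\le\sneg\eJ(C_j)$, while you package the same computation as the cleaner inequality $\lambdac(\eJ(\Not C_j))\le\cI(\Not C_j)$. Your closing remark that the equality $\lambdac(\eJ(\Not C_j))=\lambdac(\eJ)(\Not C_j)$ plus anti-monotonicity of $\Wp$ yields a shortcut is a valid alternative organisation (proving Proposition~\ref{prop:aux:thm:causal-prov.smodels.are.the.causal-smodels} directly and deducing the lemma from it, rather than the paper's order of proving the two lemmas first and combining them into the proposition).
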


\begin{proof}
Since $\lb$ is finite, it follows that $\evalues$ is also finite.
Furthermore, since $\evalues$ is a finite distributive lattice,
every element $t \in \evalues$ can be represented as a unique sum of join irreducible elements (Proposition~\ref{prop:join.irreducible}).

Assume as induction hypothesis that 
$u \leq \etprP{\ereduct{\cP}{\eJ}}{\beta}(\rH)$
implies
$\lambdac(u) \leq \tprP{\ereduct{\cP}{\cI}}{\beta}(\rH)$
for every
join irreducible $u$, atom $\rH \in \at$ and ordinal $\beta < \alpha$.

In case that $\alpha$ is a successor ordinal.
For any join irreducible justification $u\leq\etprP{\ereduct{\cP}{\eJ}}{\alpha}(\rH)$
there is a rule $\rR^\eJ$ in $\cP^\eJ$ of the form \eqref{eq:e.rule} and there are join irreducible terms
$u_{B_j} \leq \etprP{\ereduct{\cP}{\eJ}}{\alpha-1}(B_j)$
and
\mbox{$u_{C_j} \leq \sneg\eJ(C_j)$} for all $B_j$ and $C_j$ such that
\begin{gather*}
u \ \ \leq \ \ ( u_{B_1} * \dotsc * u_{B_m} * u_{C_1} * \dotsc * u_{C_n} ) \cdot r_i
\end{gather*}

If $u_{C_j}$ contains an oddly negated label for some $C_j$, then $\lambdac(u_{C_j})=0$ and it consequently follows that \mbox{$\lambdac(u)=0 \ \leq \ \tprP{\cP^\cI}{\alpha}(\rH)$}. Thus, we assume that $u_{C_j}$ only contains evenly negated labels for any $C_j$.
Note that, since $u_{C_j} \leq \sneg \eJ(C_j)$, then $u_{C_j}$ cannot contain any non-negated label, that is, all occurrences of labels in $u_{C_j}$ are strictly evenly negated and, thus, every term $u'_{C_j} \leq \eJ(C_j)$ must contain some oddly negated label.
Hence, $\cI(C_j) \leq \lambdac(\eJ(C_j)) = 0$ for any $C_j$ and there is a rule
$\rR^\cI$ in $\cQ^\cI$ of the form
\begin{gather*}
r_i : \ \ \rH \leftarrow  B_1, \dotsc, B_m
\end{gather*}
By induction hypothesis,
$u_{B_j} \leq \etprP{\ereduct{\cP}{\eJ}}{\alpha-1}(B_j)$
implies $\lambdac(u_{B_j}) \leq \tprP{\ereduct{\cP}{\cI}}{\alpha-1}(B_j)$
and, consequently, $\lambdac(u) \leq \tprP{\ereduct{\cP}{\cI}}{\alpha}(\rH)$.

Since $\etprP{\ereduct{\cP}{\eJ}}{\alpha}(\rH) = \sum_{ u \in U_\rH} u$ where every $u \in U_\rH$ is join irreducible and every $u \in U_\rH$ satisfies
$u \leq \etprP{\ereduct{\cP}{\eJ}}{\alpha}(\rH)$, it follows that
$\lambdac(u) \leq \tprP{\ereduct{\cP}{\cI}}{\alpha}(\rH)$
and, thus,
$\sum_{u \in U_\rH} \lambdac(u) \leq \tprP{\ereduct{\cP}{\cI}}{\alpha}(\rH)$.
Note that, by definition,
$\lambdac(\sum_{u \in U_\rH} u) = \sum_{u \in U_\rH} \lambdac(u)$
and, thus,
\begin{gather*}
\lambdac(\etprP{\ereduct{\cP}{\eJ}}{\alpha}(\rH))
  \ \ = \ \ \lambdac(\sum_{u \in U_\rH} u) 
  \ \ \leq \ \ \tprP{\ereduct{\cP}{\cI}}{\alpha}(\rH)
\end{gather*}
In case that $\alpha$ is a limit ordinal, it follows 
$u \leq \etprP{\cP^\eJ}{\alpha}(\rH)$ iff $u \leq \etprP{\cP^\eJ}{\beta}(\rH)$ for some $\beta < \omega$ and,
by induction hypothesis, it follows that
$\lambdac(u) \leq \tprP{\cP^\cI}{\beta}(\rH) \leq \tprP{\cP^\cI}{\alpha}(\rH)$
and, thus, 
$\tprP{\ereduct{\cP}{\cI}}{\alpha} \geq \lambdac(\etprP{\ereduct{\cP}{\eJ}}{\alpha})$.

Finally, by definition $\Wp(\cI)$ and $\eWp(\eJ)$
are respectively the least models of $P^\cI$ and $P^\eJ$ and, from Theorem~\ref{theorem:tp.properties}, these are precisely 
$\tprP{\ereduct{\cP}{\cI}}{\omega}$
and
$\etprP{\ereduct{\cP}{\eJ}}{\omega}$.
Hence, 
$\tprP{\ereduct{\cP}{\cI}}{\omega} \geq \lambdac(\etprP{\ereduct{\cP}{\eJ}}{\omega})$
implies
$\Wp(\cI) \geq \lambdac(\eWp(\eJ))$.
\qed
\end{proof}

\begin{proposition}\label{prop:aux:thm:causal-prov.smodels.are.the.causal-smodels}
Given a program $P$ over a signature~$\signature$ where $\lb$ is a finite set of labels,
any ECJ interpretation $\eI$ satisfies $\Wp(\lambdac(\eI))=\lambdac(\eWp(\eI)))$.\qed
\end{proposition}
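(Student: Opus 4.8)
The plan is to derive the identity as the conjunction of two opposite inequalities, each of which is a direct instance of a lemma already established. Concretely, I would apply both Lemma~\ref{lem:gamma.causal<-explanations} and Lemma~\ref{lem:gamma.causal-$>$explanations} with the CG interpretation $\cI \eqdef \lambdac(\eI)$ and the ECJ interpretation $\eJ \eqdef \eI$.

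With this choice the hypotheses of both lemmas hold trivially by reflexivity: since $\cI = \lambdac(\eI) = \lambdac(\eJ)$, we have both $\cI \geq \lambdac(\eJ)$ and $\cI \leq \lambdac(\eJ)$. Feeding the first into Lemma~\ref{lem:gamma.causal<-explanations} yields $\Wp(\lambdac(\eI)) \leq \lambdac(\eWp(\eI))$; feeding the second into Lemma~\ref{lem:gamma.causal-$>$explanations} yields $\Wp(\lambdac(\eI)) \geq \lambdac(\eWp(\eI))$. Because $\leq$ is antisymmetric on causal values, combining the two gives $\Wp(\lambdac(\eI)) = \lambdac(\eWp(\eI))$, which is the claim. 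The finiteness of $\lb$ is needed only through Lemma~\ref{lem:gamma.causal-$>$explanations} (to secure the join-irreducible decomposition its induction relies on), so I would simply carry that hypothesis along.

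The one delicate point — and it is discharged inside the two lemmas rather than here — is the mismatch between how CG and ECJ interpretations evaluate negative literals, and hence between the reducts $\cP^{\cI}$ and $\cP^{\eI}$. The bridge is that $\cI(C) = \lambdac(\eI)(C) = 0$ exactly when every justification of $\eI(C)$ is inhibited (equivalently, $\lambdac$ sends $\eI(C)$ to $0$), which is precisely when the CG reduct retains the corresponding rule body; dually $\lambdac(\eI(\Not C)) = \lambdac(\sneg\eI(C))$ collapses to $1$ in that situation. Since this is exactly the equivalence exploited in the inductions on the powers of $\etp$ and $\tp$ within Lemmas~\ref{lem:gamma.causal<-explanations} and~\ref{lem:gamma.causal-$>$explanations}, no additional argument is required. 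In short, there is essentially no obstacle at this level: the proposition is a one-step corollary of the two sandwiching lemmas, and the real work lies in those lemmas, not in this statement.
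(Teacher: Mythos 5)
Your proposal is correct and follows essentially the same route as the paper: the paper also instantiates the two sandwiching lemmas (the $\leq$ and $\geq$ versions relating $\Wp$ and $\lambdac\circ\eWp$) with a CG interpretation equal to $\lambdac(\eI)$ and concludes by antisymmetry. Your explicit choice $\cI\eqdef\lambdac(\eI)$, with the hypotheses holding by reflexivity, is if anything a slightly cleaner phrasing of the same one-step argument.
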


\begin{proofof}
{Proposition~\ref{prop:aux:thm:causal-prov.smodels.are.the.causal-smodels}}
Let $\cI$ be a CG interpretation such that $I(\rH)=\cI(\rH)$ for every atom~$\rH$.
Then, it follows that $\cI = \lambdac(\eI)$.
Hence, from Lemmas~\ref{lem:gamma.causal<-explanations} and~\ref{lem:gamma.causal-$>$explanations}, it respectively follows that
$\Wp(\cI)\leq \lambdac(\,\eWp(\eI))$ and
$\Wp(\cI)\geq \lambdac(\,\eWp(\eI))$.
Then, $\Wp(\cI) = \Wp(\lambdac(\eI)) = \lambdac(\,\eWp(\eI))$.
\qed
\end{proofof}

\begin{proofof}
{Theorem~\ref{thm:causal-prov.smodels.are.the.causal-smodels}}
According to~\cite{CabalarFF14}, a CG interpretation~$\cI$ is a CG stable model of $\cP$ iff $\cI$ is the least model of the program $\cP^\cI$. Then, the CG stable models are just the fixpoints of the $\Wp$ operator.

Let $\cI$ be a CG stable model according to~\cite{CabalarFF14},
let $\eI$ be a ECJ interpretation such that
$\eI(\rH) = \cI(\rH)$ for every atom $\rH \in \at$ and
 let $\eJ\eqdef\eWprI{\infty}{\eI}$ be the least fixpoint of $\eWp^2$ iterating from~$\eI$.
Since $\eI(\rH) = \cI(\rH)$ for every atom $\rH \in \at$, it follows that $\cI = \lambdac(\eI)$
and, by definition of CG stable model, it follows that $\cI = \Wp(\cI)$.
Thus, from Proposition~\ref{prop:aux:thm:causal-prov.smodels.are.the.causal-smodels}, it follows that $\cI=\lambdac(\eWp(\eI))$.
Applying $\Wp$ to both sides of this equality, we obtain that
$\Wp(\cI)=\Wp(\lambdac(\eWp(\eI)))$.
From Proposition~\ref{prop:aux:thm:causal-prov.smodels.are.the.causal-smodels} again,
it follows that
$\Wp(\lambdac(\eWp(\eI))) = \lambdac(\eWp(\eWp(\eI))) = \lambdac(\eWp^2(\eI))$
and, thus, 
$\Wp(\cI)=\lambdac(\eWp^2(\eI))$.
Furthermore, since $\cI = \Wp(\cI)$,
it follows that $\cI=\lambdac(\eWp^2(\eI))$.
Inductively applying this argument, it follows that
\mbox{$\cI=\lambdac(\eWprI{\alpha}{\eI})$}
for any successor ordinal $\alpha$.
Moreover, for a limit ordinal $\alpha$,
\begin{gather*}
\lambdac\big(\, \eWprI{\alpha}{\eI} \,\big)
  \ = \ \lambdac\Big(\, \sum_{\beta < \alpha} \eWprI{\beta}{\eI} \,\Big)
  \ = \ \sum_{\beta < \alpha} \lambdac\big(\eWprI{\beta}{\eI} \big)
  \ = \ \cI
\end{gather*}
Then, since we have defined $\eJ=\eWprI{\infty}{\eI}$, it follows that
$\cI = \lambdac(\eJ) = \lambdac(\eI)$
and, since we also have that $\cI=\lambdac(\eWp(\eI))$, we obtain that
$\lambdac(\eI)=\lambdac(\eWp(\eI))$.

The other way around. Let $\eI$ be a fixpoint of $\eWp^2$ such that $\lambdac(\eI)=\lambdac(\eWp(\eI))$ and let $\cI\eqdef\lambdac(\eI)$.
In the same way as above, it follows that $\Wp(\cI)=\lambdac(\eWp(\eI))=\lambdac(\eI)=\cI$. That is, $\Wp(\cI)=\cI$ and so that $\cI$ is a causal stable model of $\cP$ according to~\cite{CabalarFF14}.\qed
\end{proofof}

\subsection{Proof of Theorem~\ref{thm:wellf.justification-$>$sm.cause}}

\begin{proofof}
{Theorem~\ref{thm:wellf.justification-$>$sm.cause} }
Let $\cI$ be a causal stable model of $\cP$ and $\eI$ be the correspondent fixpoint of $\eWp^2$ with $\cI = \lambdac(\eI)$.
Since $E$ is a enabled justification of $\rA$, i.e. $E\leq\ewfm(\rA)$, then $E\leq\elfp(\rA)$ with $\elfp$ the least fixpoint of $\eWp^2$.
Since, $\eI$ is a fixpoint of $\eWp^2$,
if follows that $E \leq \elfp(\rA) \leq \eI(\rA)$
and, thus, $\lambdac(E) \leq \lambdac(\eI(\rA)) = \cI(\rA)$.
Then \mbox{$G\eqdef graph(\lambdac(E))$} is, by definition, a causal explanation of the atom~$\rA$.
\end{proofof}

\subsection{Proof of Theorem~\ref{thm:prov.correspondence}}

The proof of \review{R1.13, R3.12}{Theorem~\ref{thm:prov.correspondence}} will need the following definition.

\begin{definition}\label{def:WnP-interpretation}
Given a program $\cP$, a \emph{WnP interpretation} is a mapping \mbox{$\wI:At \longrightarrow \boolAlgebra$} assigning a Boolean formula to each atom. The evaluation of a negated literal $\Not \rA$ \review{R1.15}{with respect to} a WnP interpretation is given by $\wI(\Not \rA) = \neg \wI(\rA)$.
An interpretation $\wI$ is a WnP model of rule like \eqref{eq:rule}  iff
\begin{IEEEeqnarray*}{c+x*}
\wI(\rB_1) * \dotsc * \wI(\rB_m)
    * \wI(\Not \rC_1) * \dotsc * \wI(\Not \rC_n) 
        * r_i \ \leq \ \wI(\rH) &
\end{IEEEeqnarray*}
The operator $\wWpP{P}(\wI)$ maps a WnP interpretation~$\wI$ to the least model of the program $\cP^\wI$.\qed
\end{definition}

Note that the only differences in the model evaluation between ECJ and WnP comes from the valuation of negative literals and the use of `$*$' instead of `$\cdot$' for keeping track of rule application.
Besides, we will also use the following facts whose proof is addressed in an appendix.

\begin{definition}
Given a positive program $\cP$, we define a direct consequence operator~$\wtpP{\cP}$ such that
\begin{align*}
\wtpP{\cP}(\wI)(\rH)
    \ \ &\eqdef \ \
      \sum \big\{ \ \wI(B_1) * \dotsc * \wI(B_n) * r_i \
        \mid \ (r_i : \ \rH \leftarrow B_1, \dotsc, B_n ) \in P \ \big\}
\end{align*}
for any WnP interpretation $\wI$ and atom $\rH \in \at$.\qed
\end{definition}

\begin{definition}[From \citeNP{damasio2013justifications}]\label{def:WnPsyntax}
Given a  program $\cP$, \review{R1.16}{its why-not program is given by} $\wwP\eqdef \cP \cup \cP'$ here $\cP'$ contains a labelled fact of the form 
\begin{gather*}
\neg not(\rA) : \rA
\end{gather*}
for each atom $\rA \in At$ not occurring in $\cP$ as a fact.
The why-not provenance information
under the well-founded semantics is defined as follows:
$Why_\wwP (\rH) = [\wlfpP{\wwP}(\rH)]$;
$Why_\wwP (\rH) = [\neg\wgfpP{\wwP}(\rH)]$; and
$Why_\wwP (\Undef A) = [\neg\wlfpP{\wwP}(\rH) \wedge \wgfpP{Q}(\rH)]$
where
$\wlfpP{\wwP}$ and $\wgfpP{\wwP}=\wWpP{\wwP}(\wlfpP{\wwP})$ be the least and greates fixpoints of $\wWpP{\wwP}^2$, respectively.
\qed
\end{definition}

\begin{lemma}\label{lem:gamma.causal-$>$provenance}
Let $P$ be a labelled logic program over a signature~$\signature$ where $\lb$ is a finite set of labels and let $\eI$ and $\wI$ be respectively a ECJ and a WnP interpretation such that
$\lambdap(\eI)\geq \wI$.
Then,
$\lambdap(\eWpP{\wP}(\eI))\leq \wWpP{\wwP}(\wI)$.
\end{lemma}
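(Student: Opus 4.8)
The plan is to transcribe the proof of Lemma~\ref{lem:gamma.causal-$>$explanations}, using $\lambdap$ in place of $\lambdac$ and the WnP one-step operator in place of the CG one, together with the extra bookkeeping forced by the two augmented programs. The facts about $\lambdap$ I would record first are that it is a lattice homomorphism for `$+$' and `$*$' (hence monotone and distributive over sums), that it sends `$\cdot$'-products to `$*$'-products, that $\lambdap(l)=l$ for every label $l$, and that, by Lemma~\ref{lem:lambdap.neg.homomorphism}, $\lambdap(\sneg t)=\neg\lambdap(t)$. From these it follows that $\lambdap$ carries the contribution $\sneg not(\rA)$ of each added fact $(\sneg not(\rA):\rA)$ of $\wP$ (Definition~\ref{def:wellf.provenance}) onto the contribution $\neg not(\rA)$ of the corresponding added fact of $\wwP$ (Definition~\ref{def:WnPsyntax}); and, from the hypothesis $\lambdap(\eI)\geq\wI$ with anti-monotonicity of `$\neg$', for every atom $\rC$ we get $\lambdap(\eI(\Not\rC))=\neg\lambdap(\eI(\rC))\leq\neg\wI(\rC)=\wI(\Not\rC)$. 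Thus the reducts $\ereduct{\wP}{\eI}$ and $\ereduct{\wwP}{\wI}$ have the same rule skeletons, with every body value on the ECJ side mapped by $\lambdap$ below the matching body value on the WnP side.

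Since the signature is finite, $\evalues$ is a finite distributive lattice, so every value is a finite sum of join-irreducible elements and ``join-irreducible'' coincides with ``join-prime'' (Proposition~\ref{prop:join.irreducible}); the least models involved are reached in finitely many iterations, so $\eWpP{\wP}(\eI)=\etprP{\ereduct{\wP}{\eI}}{\omega}$ by Theorem~\ref{thm:tp.properties} and $\wWpP{\wwP}(\wI)=\wtprP{\ereduct{\wwP}{\wI}}{\omega}$ by its WnP analogue (the WnP direct-consequence operator being identical to $\etp$ except that it uses `$*$' rather than `$\cdot$' to record rule application). The core step is a transfinite induction on $\alpha$ showing: for every join-irreducible $u\in\evalues$ and every atom $\rH$, if $u\leq\etprP{\ereduct{\wP}{\eI}}{\alpha}(\rH)$ then $\lambdap(u)\leq\wtprP{\ereduct{\wwP}{\wI}}{\alpha}(\rH)$. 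The case $\alpha=0$ is vacuous since $\etprP{\ereduct{\wP}{\eI}}{0}=\botI$. For a successor $\alpha$, join-primeness of $u$ yields, exactly as in Lemma~\ref{lem:gamma.causal-$>$explanations}, a reduced rule of the form~\eqref{eq:e.rule} in $\ereduct{\wP}{\eI}$ and join-irreducible values $u_{B_j}\leq\etprP{\ereduct{\wP}{\eI}}{\alpha-1}(B_j)$ and $u_{C_j}\leq\eI(\Not C_j)$ with $u\leq(u_{B_1}*\dots*u_{B_m}*u_{C_1}*\dots*u_{C_n})\cdot r_i$; applying $\lambdap$, using the induction hypothesis on each $u_{B_j}$, the inequality $\lambdap(u_{C_j})\leq\wI(\Not C_j)$ from the first paragraph, $\lambdap(r_i)=r_i$ and $\lambdap(x\cdot y)=\lambdap(x)*\lambdap(y)$, the matching reduced rule of $\ereduct{\wwP}{\wI}$ fires and gives $\lambdap(u)\leq\wtprP{\ereduct{\wwP}{\wI}}{\alpha}(\rH)$ (when $r_i$ is one of the added facts of $\wP$, the corresponding fact of $\wwP$ fires already at stage $1$). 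For a limit $\alpha$, finiteness of $\evalues$ makes the increasing chain $(\etprP{\ereduct{\wP}{\eI}}{\beta})_{\beta<\alpha}$ eventually constant, so $u\leq\etprP{\ereduct{\wP}{\eI}}{\alpha}(\rH)$ forces $u\leq\etprP{\ereduct{\wP}{\eI}}{\beta}(\rH)$ for some $\beta<\alpha$ and the induction hypothesis applies.

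To conclude, I would write $\etprP{\ereduct{\wP}{\eI}}{\omega}(\rH)$ as the sum of its join-irreducible summands and use that $\lambdap$ distributes over `$+$' to lift the previous claim to $\lambdap(\etprP{\ereduct{\wP}{\eI}}{\omega}(\rH))\leq\wtprP{\ereduct{\wwP}{\wI}}{\omega}(\rH)$ for every atom $\rH$, i.e.\ $\lambdap(\eWpP{\wP}(\eI))\leq\wWpP{\wwP}(\wI)$. I expect the only real difficulty to be bookkeeping: keeping straight the three ``rule-application'' products (`$\cdot$' on the ECJ side, `$*$' on the WnP side, and the product implicit in the labelled facts $(\sneg not(\rA):\rA)$), checking that body elements which are already terms — precisely the case for those added facts — are handled correctly, and, as in Lemma~\ref{lem:gamma.causal-$>$explanations}, making the decomposition into join-irreducibles fully rigorous; none of these is a genuine obstacle beyond care.
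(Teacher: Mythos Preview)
Your proposal is correct and follows essentially the same route as the paper: a transfinite induction on the iterates of the direct-consequence operators for $\ereduct{\wP}{\eI}$ and $\ereduct{\wwP}{\wI}$, decomposing each value into join-irreducibles (via Proposition~\ref{prop:join.irreducible}), pushing a join-irreducible $u$ through a witnessing rule, applying the induction hypothesis on positive body atoms and the inequality $\lambdap(\sneg\eI(\rC_j))=\neg\lambdap(\eI(\rC_j))\leq\neg\wI(\rC_j)$ on negative ones. Your treatment is in fact slightly more explicit than the paper's in two places --- you spell out how the added facts $(\sneg not(\rA):\rA)$ of $\wP$ are carried by $\lambdap$ onto those of $\wwP$, and you record the algebraic properties of $\lambdap$ up front --- but these are expository refinements of the same argument, not a different method.
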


\begin{proof}
By definition $\eWpP{\wP}(\eI)$ and $\wWpP{\wwP}(\wI)$ are the least model of the programs
$\ereduct{\wP}{\eI}$ and $\ereduct{\wwP}{\wI}$, respectively.
Furthermore, the least model of programs $\wP^\eI$ and $\wwP^\wI$ are the least fixpoint of the $T_{\wP^\eI}$ and $\wtpP{\wwP^\wJ}$ operators,
that is,
$\eWpP{\wP}(\eI)=\etprP{\ereduct{\wP}{\eI}}{\omega}$ 
and
$\wWpP{\wwP}(\eJ)=\wtprP{\ereduct{\wwP}{\wI}}{\omega}$.

In case that $\alpha=0$, it follows that 
$\lambdap(\etprP{\wP^\eI}{0}(\rH)) = \wtprP{\ereduct{\wwP}{\wI}}{0}(\rH) = 0$ for every atom $\rH$.
We assume as induction hypothesis that
$\lambdap(\etprP{\ereduct{\wP}{\eI}}{\beta}) \leq \wtprP{\ereduct{\wwP}{\wI}}{\beta}$
for all $\beta<\alpha$.

In case that $\alpha$ is a successor ordinal.
Assume that $u\leq\etprP{\ereduct{\wP}{\eI}}{\alpha-1}(\rH)$ for some join irreducible~$u$ and atom~$\rH$.
Then there is a rule $r_i\in\cP$ of the form \eqref{eq:rule}
and
\begin{gather*}
u \ \ \leq \ \ ( u_{B_1} * \dotsc * u_{B_1} * u_{C_1} * \dotsc * u_{C_1} ) \cdot r_i
\end{gather*}
where $u_{\rB_j}\leq\etprP{\ereduct{\wP}{\eI}}{\alpha-1}(\rB_j)$ and
$u_{\rC_j}\leq\sneg\eI(\rC_j)$.
Hence, by induction hypothesis, it follows that
$\lambdap(u_{\rB_j})\leq\wtprP{\ereduct{\wwP}{\wI}}{\alpha-1}(\rB_j)$ and, since \mbox{$u_{\rC_j}\leq\sneg\eI(\rC_j)$},
it also follows that \mbox{$\lambdap(u_{\rC_j})\leq\neg\wI(\rC_j)$} for all~$\rC_j$.
Consequently, we have that $\lambdap(u) \leq\wtprP{\wwP^\wI}{\alpha}(\rH)$.

In case that $\alpha$ is a limit ordinal,
$u \leq \etprP{\wP^\eI}{\alpha}$ iff $u \leq \etprP{\wP^\eI}{\beta}$ for some $\beta < \alpha$ and all join irreducible $u$.
Hence, by induction hypothesis, it follows that
$\lambdap(u) \leq \etprP{\wwP^\wJ}{\beta} \leq \etprP{\wwP^\wJ}{\alpha}$
and, thus, 
$\lambdap(\etprP{\ereduct{\wP}{\eI}}{\alpha}) \leq \wtprP{\ereduct{\wwP}{\wJ}}{\alpha}$.
\qed
\end{proof}

\begin{lemma}\label{lem:gamma.causal<-provenance}
Let $P$ be a labelled logic program over a signature~$\signature$ where $\lb$ is a finite set of labels and let $\eI$ and $\wI$ be respectively a ECJ and a WnP interpretation such that
$\lambdap(\eI)\leq\wI$.
Therefore,
$\lambdap(\eWpP{\wP}(\eI))\geq\wWpP{\wwP}(\wI)$.\qed
\end{lemma}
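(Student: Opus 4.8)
The plan is to mirror the proof of Lemma~\ref{lem:gamma.causal-$>$explanations}, exchanging the roles of the two sides. Exactly as at the start of the proof of Lemma~\ref{lem:gamma.causal-$>$provenance}, $\eWpP{\wP}(\eI)$ and $\wWpP{\wwP}(\wI)$ are the least models of the reducts $\ereduct{\wP}{\eI}$ and $\ereduct{\wwP}{\wI}$, and hence coincide with $\etprP{\ereduct{\wP}{\eI}}{\omega}$ and $\wtprP{\ereduct{\wwP}{\wI}}{\omega}$. Since $\lb$ is finite, $\boolAlgebra$ is a finite distributive lattice (in fact a finite Boolean algebra), so the analogue of Proposition~\ref{prop:join.irreducible} holds in it: join-irreducible and join-prime elements coincide, and every element is the sum of the join-prime elements below it. It therefore suffices to prove, by transfinite induction on $\alpha$, that $\wtprP{\ereduct{\wwP}{\wI}}{\alpha} \leq \lambdap(\etprP{\ereduct{\wP}{\eI}}{\alpha})$; i.e.\ that every join-prime $v \in \boolAlgebra$ with $v \leq \wtprP{\ereduct{\wwP}{\wI}}{\alpha}(\rH)$ satisfies $v \leq \lambdap(\etprP{\ereduct{\wP}{\eI}}{\alpha}(\rH))$ for every atom $\rH$. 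The lemma then follows by taking $\alpha=\omega$.

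For $\alpha=0$ both iterates equal $\botI$, so there is nothing to prove. If $\alpha$ is a limit ordinal, join-primality of $v$ gives $v \leq \wtprP{\ereduct{\wwP}{\wI}}{\beta}(\rH)$ for some $\beta < \alpha$, and the induction hypothesis together with monotonicity of the iteration and of $\lambdap$ yields $v \leq \lambdap(\etprP{\ereduct{\wP}{\eI}}{\beta}(\rH)) \leq \lambdap(\etprP{\ereduct{\wP}{\eI}}{\alpha}(\rH))$.

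The substantive case is $\alpha$ a successor. Since $v$ is join-prime and $\wtprP{\ereduct{\wwP}{\wI}}{\alpha}(\rH)$ is a sum ranging over the rules of $\wwP$, there is a rule $r_i$ of $\wwP$ of the form~\eqref{eq:rule} with
\[
v \ \leq \ \big(\,\wtprP{\ereduct{\wwP}{\wI}}{\alpha-1}(\rB_1) * \dotsc * \wtprP{\ereduct{\wwP}{\wI}}{\alpha-1}(\rB_m) * \neg\wI(\rC_1) * \dotsc * \neg\wI(\rC_n)\,\big) * r_i .
\]
The same rule occurs in $\wP$ (the programs $\wwP$ and $\wP$ differ only in the labels $\neg not(\rA)$ versus $\sneg not(\rA)$ of the added facts, and $\lambdap(\sneg not(\rA)) = \neg not(\rA)$), so the value $\big(\,\etprP{\ereduct{\wP}{\eI}}{\alpha-1}(\rB_1) * \dotsc * \etprP{\ereduct{\wP}{\eI}}{\alpha-1}(\rB_m) * \eI(\Not \rC_1) * \dotsc * \eI(\Not \rC_n)\,\big) \cdot r_i$ is one of the summands defining $\etprP{\ereduct{\wP}{\eI}}{\alpha}(\rH)$. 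I would then bound each factor from above: the induction hypothesis gives $\wtprP{\ereduct{\wwP}{\wI}}{\alpha-1}(\rB_j) \leq \lambdap(\etprP{\ereduct{\wP}{\eI}}{\alpha-1}(\rB_j))$; the hypothesis $\lambdap(\eI) \leq \wI$ and anti-monotonicity of classical negation give $\neg\wI(\rC_j) \leq \neg\lambdap(\eI(\rC_j))$, which by Lemma~\ref{lem:lambdap.neg.homomorphism} equals $\lambdap(\sneg\eI(\rC_j)) = \lambdap(\eI(\Not \rC_j))$; and $r_i = \lambdap(r_i)$. Since $\lambdap$ is a homomorphism for `$*$' and `$+$' and maps `$\cdot$' to `$*$', multiplying these bounds shows that $v$ lies below $\lambdap$ applied to the displayed summand, hence $v \leq \lambdap(\etprP{\ereduct{\wP}{\eI}}{\alpha}(\rH))$ by monotonicity of $\lambdap$; the degenerate case in which $r_i$ is one of the added facts has an empty body and is handled identically. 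None of this is genuinely hard---it is the argument of Lemma~\ref{lem:gamma.causal-$>$explanations} read in the opposite direction---so the only points demanding care are the reduction to join-prime elements of $\boolAlgebra$ (the analogue of Proposition~\ref{prop:join.irreducible} for a finite Boolean algebra) and the transfer of the hypothesis $\lambdap(\eI)\leq\wI$ across negation, for which Lemma~\ref{lem:lambdap.neg.homomorphism} is exactly what is needed.
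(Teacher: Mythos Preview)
Your proof is correct and follows essentially the same approach as the paper: transfinite induction on the $T$-iteration, reduction to join-irreducible/join-prime elements of $\boolAlgebra$, and transfer of the hypothesis $\lambdap(\eI)\leq\wI$ across negation. The only cosmetic differences are that the paper cites Lemma~\ref{lem:lambdaprov.neg.homomorphism.formula} (an immediate corollary of the Lemma~\ref{lem:lambdap.neg.homomorphism} you invoke) and decomposes the join-irreducible $u$ into per-factor pieces $u_{\rB_j},u_{\rC_j}$ rather than bounding the whole product at once; you are also a bit more explicit than the paper about the added-fact rules in $\wP$ versus $\wwP$.
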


\begin{proof}
The proof is similar to the proof of Lemma~\ref{lem:gamma.causal-$>$provenance} and we just show the case in which $\alpha$ is a successor ordinal.

Assume that $u\leq\wtprP{\ereduct{\wwP}{\wI}}{\alpha}(\rH)$ for some join irreducible~$u$ and atom~$\rH$.
Hence, there is some rule $r_i\in\cP$ of the form~\eqref{eq:rule} and
\begin{gather*}
u \ \leq \ u_{\rB_1} * \dotsc * u_{\rB_m} * u_{\rC_1} * \dotsc * u_{\rC_n} * r_i
\end{gather*}
where $u_{\rB_j} \leq \wtprP{\ereduct{\wwP}{\wI}}{\alpha-1}(\rB_j)$ for each $\rB_j$ and \mbox{$u_{\rC_j} \leq \neg\wI(\rC_j)$} for each $\rC_j$.
By induction hypothesis,
$u_{\rB_j} \leq \lambdap(\etprP{\ereduct{\wP}{\eI}}{\alpha-1})(\rB_j)$
for all $\rB_j$.
Furthermore, since $\lambdap(\eI)\leq\wI$ it follows, from Lemma~\ref{lem:lambdaprov.neg.homomorphism.formula}, that $\lambdap(\sneg\eI)\geq\neg\wI$
and, since \mbox{$u_{\rC_j} \leq \neg\wI(\rC_j)$}, it also follows that
\mbox{$u_{\rC_j} \leq \lambdap(\sneg\eI(\rC_j))$}.
Hence,
\begin{gather*}
\lambda(u) \ \leq \ ( \lambdap(u_{B_1}) * \dotsc * \lambdap(u_{B_1}) * \lambdap(u_{C_1}) * \dotsc * \lambdap(u_{C_1}) ) * r_i
  \ \ \leq \ \ \lambdap(\etprP{\ereduct{\wP}{\eI}}{\alpha}(\rH))
\end{gather*}
Thus,
$\wtprP{\ereduct{\wwP}{\wI}}{\alpha}(\rB_j)\leq \lambdap(\etprP{\ereduct{\wP}{\eI}}{\alpha}(\rB_j))$.\qed
\end{proof}

Note that the image of $\lambdap$ is a boolean algebra and the set of causal values corresponding to negated terms $\setm{ \sneg t}{ t \in \evalues }$ are also a boolean algebra.
Consequently, \review{R1.16}{we define a function $\lambdaq(t) = \sneg\sneg t$}
which is analogous to $\lambdap$ but whose image is in $\evalues$.

\begin{lemma}\label{lem:gamma.lmabdaq}
Let $P$ be a labelled logic program and let $\eI$ be an ECJ interpretation.
Then,
$\eWpP{\wP}(\eI) = \eWpP{\wP}(\lambdaq(\eI))$ and
$\lambdap(t)=\lambdap(\lambdaq(t))$.
\qed
\end{lemma}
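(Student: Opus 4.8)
The plan is to prove the two equalities separately; both are short and reduce to a single algebraic identity, namely the triple-negation axiom $\sneg\sneg\sneg t = \sneg t$ from Figure~\ref{fig:neg}, together with the behaviour of $\lambdap$ on negation already recorded in Lemma~\ref{lem:lambdap.neg.homomorphism}. First I would note that $\lambdaq$ lifts to interpretations by $\lambdaq(\eI)(\rA) = \sneg\sneg\eI(\rA)$, so $\lambdaq(\eI)$ is again a bona fide ECJ interpretation.

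For $\eWpP{\wP}(\eI) = \eWpP{\wP}(\lambdaq(\eI))$, I would recall that $\eWpP{\wP}(\eJ)$ is by definition the least causal model of the positive program $\ereduct{\wP}{\eJ}$, so it suffices to show that the two reducts $\ereduct{\wP}{\eI}$ and $\ereduct{\wP}{\lambdaq(\eI)}$ are the \emph{same} program. By Definition~\ref{def:e.model} these reducts can differ only in the slots occupied by body literals of the form $\Not\rC_j$: in $\ereduct{\wP}{\eI}$ such a literal is replaced by the value $\eI(\Not\rC_j) = \sneg\eI(\rC_j)$, whereas in $\ereduct{\wP}{\lambdaq(\eI)}$ it is replaced by $\lambdaq(\eI)(\Not\rC_j) = \sneg\big(\lambdaq(\eI)(\rC_j)\big) = \sneg\sneg\sneg\eI(\rC_j)$. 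The pseudo-complement axiom $\sneg\sneg\sneg t = \sneg t$ makes these two causal values equal, so the two reducts coincide rule by rule, and hence so do their least models.

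For $\lambdap(t) = \lambdap(\lambdaq(t))$, I would simply unfold $\lambdaq(t) = \sneg\sneg t$ and apply Lemma~\ref{lem:lambdap.neg.homomorphism} twice, obtaining $\lambdap(\sneg\sneg t) = \neg\lambdap(\sneg t) = \neg\neg\lambdap(t)$; since $\lambdap$ ranges over the Boolean algebra $\boolAlgebra$, where $\neg\neg x = x$, this equals $\lambdap(t)$.

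There is no genuine obstacle here; the only point that deserves an explicit remark is that $\lambdap$ and $\lambdaq$ are officially stipulated on DNF representatives, so one should note at the outset that both are well defined on causal values (as is already used implicitly throughout) and that $\sneg\sneg t$ is itself a legitimate term, so that Lemma~\ref{lem:lambdap.neg.homomorphism} indeed applies to it.
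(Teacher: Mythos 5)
Your proof is correct and follows essentially the same route as the paper's: both equalities are reduced to the triple-negation axiom $\sneg\sneg\sneg t=\sneg t$, which makes the reducts $\ereduct{\wP}{\eI}$ and $\ereduct{\wP}{\lambdaq(\eI)}$ coincide (hence their least models do), and to $\neg\neg$-elimination in the Boolean algebra for $\lambdap(\lambdaq(t))=\lambdap(t)$. Your version is only marginally more explicit about which values enter the reduct and about well-definedness, which is harmless.
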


\begin{proof}
For $\eWpP{\wP}(\eI) = \eWpP{\wP}(\lambdaq(\eI))$.
Since $\lambdaq(t) = \sneg\sneg t$ and
$\sneg\sneg\sneg t = \sneg t$, it follows that
$\lambdaq(\sneg I) = \sneg\sneg\sneg I = \sneg I$ and, thus,
$\wP^I = \wP^{\lambdaq(I)}$.
Since by definition
$\eWpP{\wP}(\eI)$ and $\eWpP{\wP}(\lambdaq(\eI))$ are respectively the least models of
programs $\wP^I$ and $\wP^{\lambdaq(I)}$ it is clear that
$\eWpP{\wP}(\eI) = \eWpP{\wP}(\lambdaq(\eI))$.

For $\lambdap(t)=\lambdap(\lambdaq(t))$, just note
$\lambdap(\lambdaq(t)) =  \lambdap(\sneg\sneg t)= \neg\neg \lambdap(t) = \lambdap(t)$.\qed
\end{proof}

\begin{proposition}\label{prop:aux:thm:prov.correspondence}
Let $P$ be a program over a signature~$\signature$ where $\lb$ is a finite set of labels.
Then, any causal \review{R1.14, R3.13}{interpretation}~$\eI$ satisfies: 
{\renewcommand{\theenumi}{(\roman{enumi})}%
\begin{enumerate}
\item $\wWpP{\wwP}(\lambdap(\eI)) \ = \ \lambdap(\eWpP{\wP}(\eI))$,

\item $\eWpP{\wP}(\lambdaq(\eI)) \ = \ \eWpP{\wP}(\eI)$ and

\item $\lambdap(t)=\lambdap(\lambdaq(t))$.\qed
\end{enumerate}}
\end{proposition}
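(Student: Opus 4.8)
The plan is to obtain all three items almost immediately from the lemmas already in hand: item~(ii) is the first equality of Lemma~\ref{lem:gamma.lmabdaq}, item~(iii) is its second equality, so only item~(i) needs an argument, and that argument will be a two-sided squeeze. I would therefore state the proof as ``items (ii) and (iii) are Lemma~\ref{lem:gamma.lmabdaq}; for (i), proceed as follows.''

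For item~(i) I would instantiate the WnP interpretation as $\wI \eqdef \lambdap(\eI)$. Since then $\lambdap(\eI) \geq \wI$ holds trivially, Lemma~\ref{lem:gamma.causal-$>$provenance} gives $\lambdap(\eWpP{\wP}(\eI)) \leq \wWpP{\wwP}(\lambdap(\eI))$; and since $\lambdap(\eI) \leq \wI$ holds equally trivially, Lemma~\ref{lem:gamma.causal<-provenance} gives $\lambdap(\eWpP{\wP}(\eI)) \geq \wWpP{\wwP}(\lambdap(\eI))$. As $\leq$ is a partial order on the provenance values, antisymmetry forces $\wWpP{\wwP}(\lambdap(\eI)) = \lambdap(\eWpP{\wP}(\eI))$, which is item~(i). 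The finiteness of $\lb$ plays no role at this level; it is consumed inside the lemmas, through the finite join-irreducible decompositions used there.

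I do not expect a genuine obstacle at the level of this proposition: the two ``direction'' lemmas were deliberately stated with one-sided hypotheses ($\lambdap(\eI)\geq\wI$ resp. $\lambdap(\eI)\leq\wI$) precisely so that the borderline instance $\wI=\lambdap(\eI)$ can be fed to both and the equality extracted. The only thing worth double-checking is a bookkeeping point that is already handled upstream: that $\lambdap$ sends each added causal fact $(\sneg not(\rA):\rA)$ of $\wP$ to the corresponding Boolean fact $(\neg not(\rA):\rA)$ of $\wwP$, so that the two reducts $\wP^\eI$ and $\wwP^{\lambdap(\eI)}$ really do match rule by rule; this is exactly what makes the transfinite inductions in Lemmas~\ref{lem:gamma.causal-$>$provenance}, \ref{lem:gamma.causal<-provenance} and~\ref{lem:gamma.lmabdaq} go through, and nothing further is required for the proposition itself.
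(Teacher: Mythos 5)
Your proposal is correct and matches the paper's own proof: the paper likewise obtains item (i) by applying Lemma~\ref{lem:gamma.causal-$>$provenance} and Lemma~\ref{lem:gamma.causal<-provenance} with the WnP interpretation taken to be $\lambdap(\eI)$, yielding the two opposite inequalities and hence equality, and derives items (ii) and (iii) directly from Lemma~\ref{lem:gamma.lmabdaq}.
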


\begin{proof}
(i) From Lemmas~\ref{lem:gamma.causal-$>$provenance} and~\ref{lem:gamma.causal<-provenance}, it respectively follows that
$\lambdap(\,\eWpP{\wP}(\eI)) \leq \wWpP{\wwP}(\lambdap(\eI))$ and that
$\lambdap(\,\eWpP{\wP}(\eI))\geq \wWpP{\wwP}(\lambdap(\eI))$.
Then, $\wWpP{\cP}(\lambdap(\eI)) = \lambdap(\,\eWpP{\wwP}(\eI))$.
(ii) and (iii) follow from Lemma~\ref{lem:gamma.lmabdaq}.\qed
\end{proof}


\begin{proofof}
{Theorem~\ref{thm:prov.correspondence}}
Note that \mbox{$Why_\wwP(\rA)=\wlfpP{\wwP}(\rA)$} and that, by $\lambdap$ definition, it follows that $\lambdap(\botI) = \botI$ and thus, from Proposition~\ref{prop:aux:thm:prov.correspondence} (i),
it follows that
$\wWpP{\wwP}(\bot) = \wWpP{\wwP}(\lambdap(\botI)) = \lambdap(\eWpP{\wP}(\botI))$
and
\begin{gather*}
\wWpP{\wwP}(\bot)
		\ \ = \ \ \wWpP{\wwP}(\lambdap(\botI))
		\ \ = \ \ \lambdap(\eWpP{\wwP}(\botI))
		\ \ = \ \ \lambdap(\lambdaq(\eWpP{\wwP}(\botI)))
\end{gather*}
Hence, from Proposition~\ref{prop:aux:thm:prov.correspondence}, it follows that
\begin{align*}
\wWpP{\wwP}^2(\bot)
		\ \ = \ \ \wWpP{\wwP}(\wWpP{\wwP}(\bot))
		\ \ &= \ \ \wWpP{\wwP}(\lambdap(\lambdaq(\eWpP{\wwP}(\botI))))
		\\
		\ \ &= \ \ \lambdap(\eWpP{\wP}(\lambdaq(\eWpP{\wP}(\botI))))
		\ \ = \ \ \lambdap(\eWpP{\wP}(\eWpP{\wP}(\botI)))
		\ \ = \ \ \lambdap(\eWpP{\wP}^2(\botI))
\end{align*}
Inductively applying this reasoning it follows that
$\wWprP{\wwP}{\infty} = \lambdap(\eWprP{\wP}{\infty})$ which, by Knaster-Tarski theorem are the least fixpoints of the operators, that is, $\wlfpP{\wwP}=\lambdap(\elfpP{\wP})$
and, consequently,
\mbox{$Why_\wwP(\rA) = \wlfpP{\wwP}(\rA) = \lambdap(\elfpP{\wP}(\rA))= \lambdap(\ewfmP{\wP}(\rA)) = Why_\cP(\rA) $}.
Similarly, by definition, it follows that \mbox{$Why_\wwP(\Not \rA) = \neg\wgfpP{\wwP}(\rA)$} where $\wgfpP{\wwP}$ is the greatest fixpoint of the operator $\wWpP{\wwP}^2$. Thus,
\begin{align*}
Why_\wwP(\Not \rA) \ = \ \neg\wWpP{\wwP}(\wlfpP{\wwP})
	\ &= \ \lambdap(\sneg\eWpP{\wP}(\elfpP{\wP}))
	\ = \ \lambdap(\sneg\egfpP{\wP}(\rA))
	\ = \ \lambdap(\ewfmP{\wP}(\Not \rA))
\end{align*}
Finally, $Why_\wwP(\Undef \rA)\ = \ \neg\wlfpP{\wwP}(\rA) * \wgfpP{\wwP}(\rA)$ and, thus
\begin{align*}
Why_\wwP(\Undef \rA)
  \ &= \ \lambdap(\sneg\elfpP{\wP}(\rA)) * \lambdap(\sneg\sneg\egfpP{\wP}(\rA))
  \\
  \ &= \ \lambdap(\sneg \elfpP{\wP}(\rA) * \sneg\sneg\egfpP{\wP}(\rA))
  \\
  \ &= \ \lambdap(\sneg\ewfmP{\wP}(\rA) * \sneg\ewfmP{\wP}(\Not \rA))
  \ = \ \lambdap(\ewfmP{\wP}(\Undef \rA))
\end{align*}
and, thus, 
$Why_\wwP(\Undef \rA)
  = \lambdap(\ewfmP{\wP}(\Undef \rA))
  =  Why_\cP(\Not \rA)$.\qed
\end{proofof}

\subsection{Proof of Theorem~\ref{thm:well.non-positive.justifications.modify.program}}

\begin{lemma}\label{lem:least.label.removing-rules}
Let $P$ be a labelled logic program  over a signature~$\signature$ where $\lb$ is a finite set of labels and no rule is a labelled by $not(\rA)$ nor $\sneg\sneg not(\rA)$.
Let $Q$ be the result of removing all rules labelled by $\sneg not(\rA)$ for some atom $\rA$.
Let $\eI$ and $\eJ$ be two interpretations such that
$\eJ = \varrho_{not(\rA)}(\eI)$.
Then, $\eWpP{Q}(\eJ) = \varrho_{not(\rA)}(\eWp(\eI))$.\qed
\end{lemma}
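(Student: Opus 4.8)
The plan is to argue in the spirit of Lemma~\ref{lem:eWp.label.removing-rules}: reduce both sides to the direct-consequence operator via Theorem~\ref{thm:tp.properties} and push $\varrho_{not(\rA)}$ through the two fixpoint iterations in lock-step, this time obtaining an exact equality rather than an inequality. Write $\varrho$ for $\varrho_{not(\rA)}$.

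First I would collect the algebraic facts about $\varrho$ that the argument needs. By Definition~\ref{def:varrho}, $\varrho$ is well-defined on causal values and distributes over $+$, $*$ and $\cdot$ (and over infinite sums and products). I would then check, by a short structural induction on a term in NNF ---using the De~Morgan laws and the \emph{appl.\ negation} axiom in the inductive step and verifying the elementary terms $l$, $\sneg l$, $\sneg\sneg l$ directly--- that $\varrho(\sneg t)=\sneg\varrho(t)$ for every term $t$; in particular $\varrho(not(\rA))=1$, $\varrho(\sneg not(\rA))=0$, and $\varrho(l)=l$ for every label $l\neq not(\rA)$. Two consequences are recorded. First, since $\eJ=\varrho(\eI)$, for every atom $\rC$ we get $\varrho(\eI(\Not \rC))=\varrho(\sneg\eI(\rC))=\sneg\varrho(\eI(\rC))=\sneg\eJ(\rC)=\eJ(\Not \rC)$, and the same holds for any negated body term, since $\varrho$ fixes every term not mentioning $not(\rA)$. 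Second, by the hypothesis on $P$ the only rules of $P$ absent from $Q$ are those labelled $\sneg not(\rA)$ ---at most one, since $P$ is uniquely labelled--- every surviving rule carries a label $r_i$ fixed by $\varrho$, and the reduct $\ereduct{Q}{\eJ}$ is exactly the result of deleting those rules from $\ereduct{P}{\eI}$ and rewriting each body constant $\eI(\Not \rC_j)$ as $\eJ(\Not \rC_j)=\varrho(\eI(\Not \rC_j))$.

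Next, by Theorem~\ref{thm:tp.properties} I would replace the least models by $\etprP{\ereduct{P}{\eI}}{\omega}$ and $\etprP{\ereduct{Q}{\eJ}}{\omega}$ and prove by transfinite induction on $\alpha$ that $\etprP{\ereduct{Q}{\eJ}}{\alpha}=\varrho(\etprP{\ereduct{P}{\eI}}{\alpha})$. The base case $\alpha=0$ is $\botI=\varrho(\botI)$; limit ordinals follow from the induction hypothesis because $\varrho$ distributes over the sum defining $\etprP{\ereduct{P}{\eI}}{\alpha}$ at a limit ordinal. For a successor $\alpha=\beta+1$ and a fixed head $\rH$, I would expand $\etprP{\ereduct{P}{\eI}}{\alpha}(\rH)$ as the sum, over rules $(r_i:\rH\leftarrow \rB_1,\dots,\rB_m,\Not \rC_1,\dots,\Not \rC_n)$ of $P$ with head $\rH$, of the terms $\big(\etprP{\ereduct{P}{\eI}}{\beta}(\rB_1)*\dots*\eI(\Not \rC_1)*\dots\big)\cdot r_i$, push $\varrho$ inside using distributivity, and argue summand by summand: a summand from a rule labelled $\sneg not(\rA)$ collapses to $0$ (matching its deletion in $Q$), while for any other rule the induction hypothesis $\varrho(\etprP{\ereduct{P}{\eI}}{\beta}(\rB_j))=\etprP{\ereduct{Q}{\eJ}}{\beta}(\rB_j)$ together with $\varrho(\eI(\Not \rC_j))=\eJ(\Not \rC_j)$ and $\varrho(r_i)=r_i$ turns the summand into the contribution of the corresponding rule of $Q$ to $\etprP{\ereduct{Q}{\eJ}}{\alpha}(\rH)$. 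Summing over all rules and setting $\alpha=\omega$ yields $\eWpP{Q}(\eJ)=\varrho(\eWp(\eI))$.

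The hard part will be justifying the identity $\varrho\circ\sneg=\sneg\circ\varrho$ and using it cleanly: Definition~\ref{def:varrho} only specifies $\varrho$ on terms in DNF, so I must be careful that $\varrho$ is well-defined on equivalence classes and commutes with normalisation to DNF (and with infinite sums). This is also where the hypothesis that no rule of $P$ is labelled $not(\rA)$ or $\sneg\sneg not(\rA)$ earns its keep: it is exactly what guarantees that every rule retained in $Q$ has a label fixed by $\varrho$, so that the reducts of $P$ and $Q$ stay aligned summand by summand along the whole iteration.
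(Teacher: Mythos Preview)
Your proposal is correct and follows essentially the same route as the paper: reduce $\eWp$ to the iterates $\etprP{\ereduct{P}{\eI}}{\alpha}$ and $\etprP{\ereduct{Q}{\eJ}}{\alpha}$ via Theorem~\ref{thm:tp.properties}, then prove the equality $\etprP{\ereduct{Q}{\eJ}}{\alpha}=\varrho(\etprP{\ereduct{P}{\eI}}{\alpha})$ by transfinite induction, using that the contribution of the unique rule labelled $\sneg not(\rA)$ is annihilated by $\varrho$ and that $\varrho(\eI(\Not \rC_j))=\eJ(\Not \rC_j)$.

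The only tactical difference worth noting is that the paper argues both inequalities separately via join-irreducible witnesses (picking a join-irreducible $u\leq\etprP{\ereduct{P}{\eI}}{\alpha}(\rH)$, decomposing it along a single rule, and pushing $\varrho$ through; then the converse direction), whereas you push $\varrho$ directly through the defining sum of $\etprP{\ereduct{P}{\eI}}{\alpha}(\rH)$ using distributivity. Your version is arguably cleaner and avoids the appeal to Proposition~\ref{prop:join.irreducible}, but both rest on the same identity $\varrho(\sneg t)=\sneg\varrho(t)$---which the paper also invokes (without proof) later in the proof of Theorem~\ref{thm:well.non-positive.justifications.modify.program}. Your caution about well-definedness of $\varrho$ on equivalence classes and its interaction with DNF normalisation is warranted; the paper leaves this implicit.
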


\begin{proof}
In the sake of simplicity, we just write $\varrho$ instead of $\varrho_{not(\rA)}$.
By definition $\eWp(\eI)$ and $\eWpP{\cQ}(\eJ)$ are respectively the least model of $\ereduct{\cP}{\eI}$ and $\ereduct{\cQ}{\eJ}$.
The proof follows then by induction on the steps of the $T_P$ operator assuming that
$\varrho(\etprP{P^\eI}{\beta}) = \etprP{Q^\eJ}{\beta}$ for all $\beta < \alpha$.

Note that, $\etprP{X}{0}(\rH)=0$ for any program $X$ and atom~$\rH$ and, thus, the statement trivially holds.

In case that $\alpha$ is a successor ordinal.
Let $u \in \evalues$ be a join irreducible causal value such that $u \leq\etprP{P^\eI}{\alpha}(\rH)$.
Then, there is a rule in $P$ of the form~\eqref{eq:rule} such that
\begin{gather*}
u\ \ \leq \ \ (u_{B_1}*\dotsc*u_{B_m}*u_{C_1}*\dotsc*u_{C_n}) \cdot r_i
\end{gather*}
where 
\mbox{$u_{B_j}\leq\etprP{\cP^\eI}{\alpha-1}(B_j)$} and \mbox{$u_{C_j}\leq\sneg\eI(C_j)$} for each positive literal $B_j$ and each negative literal $\Not C_j$ in the body of rule $r_i$.

If $r_i = \sneg not(\rA)$, then
$\varrho(u) = 0 \leq \etprP{Q}{\alpha-1}(\rH)$.
Otherwise,
\begin{enumerate}
\item By induction hypothesis, it follows that $\varrho(u_{B_j}) \leq\etprP{Q}{\alpha-1}(\rB_j)$, and

\item from $J(\rH) = \varrho(I(\rH))$ and \mbox{$u_{C_j}\leq\sneg\eI(C_j)$}, it follows that $\varrho(u_{C_j}) \leq\sneg\eJ(\rC_j)$.
\end{enumerate}
Furthermore, no rule in the program $P$ is labelled with $not(\rA)$ nor $\sneg\sneg not(\rA)$ and, thus,
$r_i \neq not(\rA)$ and $r_i \neq \sneg\sneg not(\rA)$.
Hence, $\varrho(u) \leq\etprP{Q}{\alpha-1}(\rH)$.

The other way around is similar.
Since $u \leq\etprP{\cQ^\eJ}{\alpha}(\rH)$ there is a rule in $Q$ of the form~\eqref{eq:rule} such that
\begin{gather*}
u \ \ \leq \ \ (u_{B_1}*\dotsc*u_{B_m}*u_{C_1}*\dotsc*u_{C_n}) \cdot r_i
\end{gather*}
and \mbox{$u_{B_j}\leq\etprP{\cQ^\eJ}{\alpha-1}(B_j)$} and \mbox{$u_{C_j}\leq\sneg\eJ(C_j)$} for each positive literal $B_j$ and each negative literal $\Not C_j$ in the body of rule $r_i$.
By induction hypothesis,
\mbox{$u_{B_j}\leq\varrho(\etprP{\cP^\eI}{\alpha-1}(B_j))$}
for each $B_j$ with $1 \leq j \leq m$
and, since $J(\rH) = \varrho(I(\rH))$ and \mbox{$u_{C_j}\leq\sneg\eJ(C_j)$}, it follows that
$u_{C_j} \leq\varrho(\sneg\eI(\rC_j))$.
Then, $u \leq \varrho(\etprP{P^\eI}{\alpha}(\rH))$.

In case that $\alpha$ is a limit ordinal
$\etprP{X}{\alpha}=\sum_{\beta<\alpha}\etprP{X}{\beta}(\rH)$ and, thus,
$u \leq \etprP{X}{\alpha}$
if and only if
$u \leq \etprP{X}{\beta}(\rH)$ with $\beta < \alpha$.
By induction hypothesis,
$\varrho(\etprP{\cP^\eI}{\beta}(\rH)) = \etprP{\cQ^\eJ}{\beta}(\rH)$
and, thus,
$u \leq \varrho(\etprP{\cP^\eI}{\alpha})$
if and only if
$u \leq \etprP{\cQ^\eJ}{\alpha}$.
Hence, 
$\varrho(\etprP{\cP^\eI}{\alpha}) = \etprP{\cQ^\eJ}{\alpha}$
and, consequently,
$\eWpP{Q}(\eJ) = \varrho(\eWp(\eI))$.
\qed
\end{proof}

\begin{proposition}
\label{prop:eWp.least.fixpoint.label.removing-rules}
Let $P$ be a labelled logic program  over a signature~$\signature$ where $\lb$ is a finite set of labels where no rule is a labelled by $not(\rA)$ nor $\sneg\sneg not(\rA)$.
Let $Q$ be the result of removing all rules labelled by $\sneg not(\rA)$ for some atom $\rA$.
Then, $\elfpP{\cQ}=\varrho_{not(\rA)}(\elfpP{\cP})$ and
$\egfpP{\cQ}=\varrho_{not(\rA)}(\egfpP{\cP})$.\qed
\end{proposition}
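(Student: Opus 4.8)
The plan is to lift the one-step commutation identity of Lemma~\ref{lem:least.label.removing-rules} through the Kleene iterations that compute the two fixpoints. Throughout, write $\varrho$ for $\varrho_{not(\rA)}$, following the proof of that lemma. What Lemma~\ref{lem:least.label.removing-rules} says is precisely that $\eWpP{\cQ}(\varrho(\eI)) = \varrho(\eWp(\eI))$ for every interpretation $\eI$, i.e.\ $\eWpP{\cQ} \circ \varrho = \varrho \circ \eWp$; composing this with itself once gives $\eWWpP{\cQ} \circ \varrho = \varrho \circ \eWWpP{\cP}$. I would also record two elementary facts about $\varrho$ on causal values, which then pass pointwise to interpretations: ($a$) $\varrho(\botI) = \botI$, since $\varrho$ leaves $0$ untouched; and ($b$) $\varrho$ is monotone and preserves arbitrary joins — by definition it is a homomorphism for $+$ and $*$ on DNF terms and is well defined on equivalence classes, hence monotone, and since $\lb$ is finite the lattice $\evalues$ is finite, so every join is a finite one and is therefore preserved (equivalently, one may argue via join-prime elements using Proposition~\ref{prop:join.irreducible}).

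Next I would prove, by transfinite induction on $\alpha$, that $\eWprP{\cQ}{\alpha} = \varrho(\eWprP{\cP}{\alpha})$, where $\eWprP{\cP}{\alpha}$ denotes the $\alpha$-th iterate of $\eWWpP{\cP}$ from $\botI$ (and likewise for $\cQ$). The base case $\alpha = 0$ is fact~($a$). For a successor $\alpha$, the induction hypothesis together with the squared commutation identity gives $\eWprP{\cQ}{\alpha} = \eWWpP{\cQ}(\eWprP{\cQ}{\alpha-1}) = \eWWpP{\cQ}(\varrho(\eWprP{\cP}{\alpha-1})) = \varrho(\eWWpP{\cP}(\eWprP{\cP}{\alpha-1})) = \varrho(\eWprP{\cP}{\alpha})$. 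For a limit $\alpha$, fact~($b$) gives $\varrho(\eWprP{\cP}{\alpha}) = \varrho\big(\sum_{\beta<\alpha}\eWprP{\cP}{\beta}\big) = \sum_{\beta<\alpha}\varrho(\eWprP{\cP}{\beta}) = \sum_{\beta<\alpha}\eWprP{\cQ}{\beta} = \eWprP{\cQ}{\alpha}$. Since $\eWWpP{\cP}$ and $\eWWpP{\cQ}$ are monotone on a complete lattice (Proposition~\ref{prop:gamma.antimonotonic}), both iterations stabilise at their respective least fixpoints past some ordinal; evaluating the identity there yields $\elfpP{\cQ} = \varrho(\elfpP{\cP})$.

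The greatest-fixpoint claim then follows in one line. Recalling that $\egfp \eqdef \eWp(\elfp)$ and, likewise, $\egfpP{\cQ} = \eWpP{\cQ}(\elfpP{\cQ})$, a last application of Lemma~\ref{lem:least.label.removing-rules} at $\eI = \elfpP{\cP}$ — so that $\eJ = \varrho(\elfpP{\cP}) = \elfpP{\cQ}$ by the previous paragraph — gives $\egfpP{\cQ} = \eWpP{\cQ}(\varrho(\elfpP{\cP})) = \varrho(\eWp(\elfpP{\cP})) = \varrho(\egfpP{\cP})$. I expect the only point requiring genuine care to be fact~($b$): checking that $\varrho_{not(\rA)}$ is well defined on causal values and commutes with joins, so that the limit step of the induction is legitimate; everything else is routine bookkeeping around Lemma~\ref{lem:least.label.removing-rules}.
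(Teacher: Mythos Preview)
Your proposal is correct and follows essentially the same approach as the paper: a transfinite induction on the Kleene iterates of $\eWWpP{\cP}$ from $\botI$, using Lemma~\ref{lem:least.label.removing-rules} (twice) at successor stages, and then one further application of that lemma for $\egfp = \eWp(\elfp)$. The only cosmetic difference is at the limit step, where the paper argues via join irreducibles (Proposition~\ref{prop:join.irreducible}) while you argue directly that $\varrho$ commutes with joins; given the finite lattice these are equivalent, and your formulation is arguably cleaner.
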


\begin{proof}
Note that $\elfpP{X} = \eWprP{X}{\infty}$ with $X \in \set{\cP,\cQ}$.
Furthermore, by definition, it follows that
$\eWprP{\cP}{0}=\eWprP{\cQ}{0}=0$.
Then, assume as induction hypothesis that
$\eWprP{\cQ}{\beta} = \varrho(\eWprP{\cP}{\beta})$ for all $\beta<\alpha$.
When $\alpha$ is a successor ordinal, by definition
$\eWprP{X}{\alpha} = \eWWpP{X}(\eWprP{X}{\alpha-1}) = \eWpP{X}(\eWpP{X}(\eWprP{X}{\alpha-1}))$ with $X \in \set{\cP,\, \cQ}$ and, thus, the statement follows from Lemma~\ref{lem:least.label.removing-rules}.

In case that $\alpha$ is a limit ordinal
$\eWprP{X}{\alpha} = \sum_{\beta < \alpha} \eWprP{X}{\beta}$.
Then, for every join irreducible $u$ it follows that
$u\leq \eWprP{P}{\alpha}$ if and only if
$u\leq \eWprP{P}{\beta}$ for some $\beta < \alpha$
(by induction hypothesis) iff
$\varrho(u)\leq \eWprP{\cP}{\beta}$ iff
$\varrho(u)\leq \eWprP{\cP}{\alpha}$.
Hence, $\eWprP{\cQ}{\alpha}=\varrho(\eWprP{\cP}{\alpha})$ and, conseuqntly,
$\elfpP{\cQ}= \varrho(\elfp)$

Finally, note that $\egfpP{X}= \eWpP{X}(\elfpP{X})$ with $X \in \set{\cP,\,\cQ}$ and, thus, the statement follows directly from Lemma~\ref{lem:least.label.removing-rules}.\qed
\end{proof}

\begin{proofof}{Theorem~\ref{thm:well.non-positive.justifications.modify.program}}
By definition, program~$\cP$ is the result of removing all rules labelled with $\sneg not(A)$ in $\wP$.
In case that $\rL$ is \review{R1.17}{some atom $\rH$}, by definition, it follows that
$\ewfm(\rH) = \elfp(\rH)$
and
$\ewfmP{\wP}(\rH) = \elfpP{\wP}(\rH)$
and, from Proposition~\ref{prop:eWp.least.fixpoint.label.removing-rules},
it follows that
$\elfpP{\cP}=\varrho(\elfpP{\wP})$
and, thus
$\ewfmP{\cP}=\varrho(\ewfmP{\wP})$.

Similarly, in case that $\rL$ is a negative literal ($\rL=\Not \rH$), then
$\ewfm(\rH) = \sneg\egfp(\rH)$
and
$\ewfmP{\wP}(\rH) = \sneg\egfpP{\wP}(\rH)$
and, from Proposition~\ref{prop:eWp.least.fixpoint.label.removing-rules},
it follows that
$\egfpP{\cP}=\varrho(\egfpP{\wP})$.
Just note tha $\varrho_{x}(\sneg u) = \sneg\varrho_{x}(u)$ for any elementary term $x$ and any value $u$.
Hence,
$\egfpP{\cP}=\varrho(\egfpP{\wP})$
implies that
$\sneg\egfpP{\cP}=\varrho(\sneg\egfpP{\wP})$
and, consequently,
$\ewfmP{\cP}=\varrho(\ewfmP{\wP})$.

In case that $\rL$ is an undefined literal ($\rL=\Undef \rH$), by definition, it follows that
$\ewfm(\rH) = \sneg\ewfm(\rH)*\sneg\ewfm(\Not \rH)= \sneg\elfp(\rH)*\sneg\sneg\egfp(\rH)$
and
$\ewfmP{\wP}(\rH) = \sneg\elfpP{\wP}(\rH)*\sneg\sneg\egfpP{\wP}(\rH)$
and the result follows as before from  Proposition~\ref{prop:eWp.least.fixpoint.label.removing-rules}.
\qed
\end{proofof}

\subsection{Proof of Theorem~\ref{thm:wellf.justification<-$>$weff.provenace}}

\begin{proofof}{Theorem~\ref{thm:wellf.justification<-$>$weff.provenace}}
Note that $\varrho(\lambdap(u))=\lambdap(\varrho(u))$ for any causal value $u \in \evalues$.
By definition $Why_\cP(\rL) = \lambdap(\ewfmP{\wP})(\rL)$ and, thus
\begin{gather*}
\varrho(Why_\cP(\rL))
  \ \ =\ \ \varrho(\lambdap(\ewfmP{\wP})(\rL))
  \ \ =\ \ \lambdap(\varrho(\ewfmP{\wP}))(\rL)
\end{gather*}
From Theorem~\ref{thm:well.non-positive.justifications.modify.program},
it follows that $\ewfm=\varrho(\ewfmP{\wP})$ and, thus,
$\varrho(Why_\cP(\rL)) = \lambdap(\ewfm)(\rL)$.\qed
\end{proofof}

\subsection{Proof of Theorem~\ref{thm:wellf.justification<-$>$weff.standard}}
\label{proof:thm:wellf.justification<-$>$weff.standard}

The proof of Theorem~\ref{thm:wellf.justification<-$>$weff.standard} will rely on the relation between ECJ justifications and non-hypothetical WnP justifications established by Theorem~\ref{thm:wellf.justification<-$>$weff.provenace} plus the following result from~\cite{damasio2013justifications}.
First, we need some notation.
Given a conjuntion of labels $D$, by $Remove(D)$ we denote the set of negated labels in $D$, by $Keep(D)$ the set of positive labels, by $AddFacts(D)$ the set of facts $\rA$ such that $\neg not(\rA)$ occurs in $D$ and by $NoFacts(D)$ the set of facts $\rA$ such that $not(\rA)$ occurs in $D$.

\begin{theorem}[Theorem~3 from \citeNP{damasio2013justifications}]
\label{thm:prov.models}
Given a labelled logic program $\cP$, let $N$ be a set of facts not in program $\cP$ and $R$ be a subset of rules of $\cP$. A literal $L$ belongs to the $WFM$ of $(\cP \backslash R) \cup N$ iff there is a conjunction of literals \mbox{$D \models Why_\cP(L)$}, such
that
\mbox{$Remove(D) \subseteq R$},
\mbox{$Keep(D) \cap R = \emptyset$},
\mbox{$AddFacts(D) \subseteq N$}, and
\mbox{$NoFacts(D) \cap N = \emptyset$}.
\qed
\end{theorem}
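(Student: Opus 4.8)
The plan is to reduce the stated equivalence to a single statement about a Boolean valuation commuting with the provenance fixpoint, and then to establish that commutation by induction on the fixpoint iteration. For a fixed modification pair $(R,N)$ I would introduce the Boolean homomorphism $v_{R,N}:\boolAlgebra\longrightarrow\set{0,1}$ determined on the generating labels by $v_{R,N}(r_i)=1$ iff $r_i\notin R$ and $v_{R,N}(not(\rA))=1$ iff $\rA\notin N$, and extended to all provenance values so as to respect $*$, $+$ and $\neg$. Writing $Why_\cP(L)$ in disjunctive normal form as a join of conjunctions $D_k$, a direct check shows that $v_{R,N}(D_k)=1$ holds exactly when the four side conditions $Remove(D_k)\subseteq R$, $Keep(D_k)\cap R=\emptyset$, $AddFacts(D_k)\subseteq N$ and $NoFacts(D_k)\cap N=\emptyset$ hold simultaneously: a positive (resp.\ negated) rule label forces $r_i\notin R$ (resp.\ $r_i\in R$), while a label $not(\rA)$ (resp.\ $\neg not(\rA)$) forces $\rA\notin N$ (resp.\ $\rA\in N$). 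Hence the existence of a conjunction $D\models Why_\cP(L)$ meeting the four conditions is equivalent to $v_{R,N}(Why_\cP(L))=1$, and the theorem reduces to the claim
\[
L\in WFM((\cP\backslash R)\cup N)\quad\text{iff}\quad v_{R,N}(Why_\cP(L))=1.
\]

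The core of the argument is a commutation lemma asserting that $v_{R,N}$ turns the WnP machinery on the augmented program $\wwP$ into the ordinary two-valued well-founded computation on $(\cP\backslash R)\cup N$. Concretely, I would first prove that $v_{R,N}$ commutes with the reduct and with the WnP direct-consequence operator, so that for every WnP interpretation $\wI$,
\[
v_{R,N}(\wWpP{\wwP}(\wI))\;=\;\Gamma_{(\cP\backslash R)\cup N}(v_{R,N}\circ\wI),
\]
where $\Gamma_{(\cP\backslash R)\cup N}$ denotes the standard (two-valued) Gelfond--Lifschitz operator of the modified program. Two points carry the proof: (a)~a rule $r_i:\rH\leftarrow\dots$ of $\cP$ contributes to the image under $v_{R,N}$ precisely when $r_i\notin R$, because its label multiplies every justification it produces and $v_{R,N}(r_i)=0$ annihilates that justification when $r_i\in R$; and (b)~a hypothetical fact $(\neg not(\rA):\rA)$ of $\cP'$ is ``switched on'' exactly when $\rA\in N$, since then $v_{R,N}(not(\rA))=0$ makes its label $\neg not(\rA)$ evaluate to $1$ so that it behaves as a genuine fact, whereas for $\rA\notin N$ its label evaluates to $0$ and it vanishes. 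Because $v_{R,N}$ is a Boolean homomorphism it also commutes with the negation used to evaluate $\Not\rC_j$, which is exactly what lets the reduct pass through.

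With the one-step commutation in hand, I would lift it to the alternating fixpoint by induction on the iteration of $\wWpP{\wwP}^2$ from $\botI$; since $\lb$ is finite, $\boolAlgebra$ is finite, the iteration stabilises at a finite stage, and the finite-join-preserving homomorphism $v_{R,N}$ passes through each step and through the stabilising join, so by Knaster--Tarski $v_{R,N}(\wlfpP{\wwP})=\lfp(\Gamma^2_{(\cP\backslash R)\cup N})$ and $v_{R,N}(\wgfpP{\wwP})=\gfp(\Gamma^2_{(\cP\backslash R)\cup N})$. Reading off the three q-literal cases of Definition~\ref{def:WnPsyntax} then yields the claim: a positive atom $\rH$ holds in $WFM((\cP\backslash R)\cup N)$ iff it lies in the least fixpoint iff $v_{R,N}(Why_\cP(\rH))=1$, and symmetrically for $\Not\rH$ (greatest fixpoint, via $\neg$) and for $\Undef\rH$. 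I expect the main obstacle to be making (a) and (b) precise simultaneously with the anti-monotone alternation: one must verify that the homomorphic image of the \emph{least} model of each WnP reduct $\wwP^{\wI}$ is exactly the least model of the corresponding two-valued reduct of $(\cP\backslash R)\cup N$, and in particular that removing a rule of $R$ on the program side is faithfully mirrored by the annihilation $v_{R,N}(r_i)=0$ on the provenance side --- including the degenerate case where that annihilation collapses an entire justification to $0$ rather than merely deleting one conjunct.
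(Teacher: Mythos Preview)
The paper does not prove this statement at all: it is imported verbatim as Theorem~3 from \cite{damasio2013justifications} and used as a black box in the proof of Theorem~\ref{thm:wellf.justification<-$>$weff.standard}. So there is no ``paper's own proof'' to compare against.

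That said, your outline is the natural and correct route to establishing the result from scratch. The reduction via the Boolean homomorphism $v_{R,N}$ is exactly the right abstraction: since $\boolAlgebra$ is the free Boolean algebra on the label set, any $\set{0,1}$-assignment to the generators extends uniquely to a homomorphism, and your unpacking of the four side conditions as $v_{R,N}(D)=1$ is a direct calculation. The commutation lemma is the real content, and your points (a) and (b) capture precisely why it holds: the label $r_i$ multiplied into every justification produced by rule $r_i$ acts as a switch under $v_{R,N}$, and the hypothetical facts $(\neg not(\rA):\rA)$ are designed so that their labels toggle them on exactly when $\rA\in N$. Because $v_{R,N}$ is a complete Boolean homomorphism into $\set{0,1}$ it preserves the (possibly infinite) joins defining $\wtpP{\wwP^{\wI}}$, so the one-step commutation lifts to the least-model computation and then to the alternating fixpoint without needing the finiteness assumption you invoke; you can drop the appeal to finite stabilisation and argue directly by transfinite induction.

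One small point to tighten: in the commutation lemma you should also check that facts already present in $\cP$ (which carry their own atom label rather than a $not(\cdot)$ label) survive $v_{R,N}$ correctly --- they do, provided their labels are not in $R$, and the theorem's hypothesis that $N$ consists only of facts \emph{not} in $\cP$ ensures there is no double-counting on that side. The ``degenerate annihilation'' case you flag at the end is not an obstacle: it is exactly what makes the homomorphism faithful to rule removal.
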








\begin{definition}
Given a positive program $\cP$, we define a direct consequence operator~$\stpP{\cP}$ such that
\begin{align*}
\stpP{\cP}(\sI)(\rH)
    \ \ &\eqdef \ \
      \sum \big\{ \ \sI(B_1) * \dotsc * \sI(B_n)  \
        \mid \ (r_i : \ \rH \leftarrow B_1, \dotsc, B_n ) \in P \ \big\}
\end{align*}
for any standard interpretation interpretation $\sI$ and atom $\rH \in \at$.\qed
\end{definition}

\begin{lemma}\label{lem:gamma.causal-$>$standard}
Let $P$ be a labelled logic program  over a signature~$\signature$ where $\lb$ is a finite set of labels and let $\eI$ and $\sI$ be respectively a ECJ and a standard interpretation satisfying that there is some enable justification $E \leq \sneg\eI(\rH)$ for every atom $\rH$ such that $\sI(\rH) = 0$.
Then, every atom $\rH$ satisfies $\sWpP{\cP}(\sI)(\rH)=1$ iff there is some enabled justification $E \leq \eWpP{\cP}(\eI)(\rH)$.\qed
\end{lemma}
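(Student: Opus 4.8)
The plan is to prove the biconditional by a transfinite induction on the iteration of the $\etp$ operator, in the style of Lemmas~\ref{lem:gamma.causal<-explanations} and~\ref{lem:gamma.causal-$>$explanations}, using the map $\lambdac$ of Definition~\ref{def:causal.values} as the bridge between ``admitting an enabled justification'' and ``being nonzero''. First I would record the following reading of enabled justifications: for any value $v \in \evalues$, $v$ admits an enabled justification (a nonzero term $E$ with no sums in which no label occurs oddly, with $E \leq v$) iff $\lambdac(v) \neq 0$. Indeed, writing $v$ in DNF as a sum of products of elementary terms, $\lambdac$ --- being a lattice homomorphism that also respects `$\cdot$', sends each $\sneg l$ to $0$ and each $\sneg\sneg l$ to $1$ --- maps a product summand to $0$ exactly when that summand is inhibited, so $\lambdac(v) \neq 0$ iff some summand is enabled; such a summand is itself an enabled justification, while conversely any enabled $E \leq v$ is a single product with $0 \neq \lambdac(E) \leq \lambdac(v)$. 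Pushing `$\sneg$' inward by De Morgan, the same computation gives $\lambdac(\sneg v) \neq 0$ iff \emph{every} product summand of $v$ is inhibited, i.e. iff $v$ has \emph{no} enabled justification; in particular the hypothesis of the lemma reads $\sI(\rH)=0 \Rightarrow \lambdac(\sneg\eI(\rH)) \neq 0 \Leftrightarrow \eI(\rH)$ has no enabled justification.

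By Theorem~\ref{thm:tp.properties} and its routine analogue for the (label-free) operator $\stp$, both $\eWpP{\cP}(\eI) = \etprP{\ereduct{\cP}{\eI}}{\omega}$ and $\sWpP{\cP}(\sI) = \stprP{\ereduct{\cP}{\sI}}{\omega}$ are reached after finitely many steps, since $\lb$ is finite. Hence it suffices to prove, by transfinite induction on $\alpha$, that every atom $\rH$ satisfies
\[
\stprP{\ereduct{\cP}{\sI}}{\alpha}(\rH) = 1
\qquad\Longleftrightarrow\qquad
\etprP{\ereduct{\cP}{\eI}}{\alpha}(\rH)\ \text{admits an enabled justification.}
\]
The base case $\alpha=0$ holds since $\stprP{\ereduct{\cP}{\sI}}{0}(\rH)=0\neq 1$ and the value $\etprP{\ereduct{\cP}{\eI}}{0}(\rH)=0$ admits no enabled justification; the limit case follows because both consequence operators and $\lambdac$ commute with the sums $\sum_{\beta<\alpha}$, so $\lambdac(\etprP{\ereduct{\cP}{\eI}}{\alpha}(\rH)) = \sum_{\beta<\alpha}\lambdac(\etprP{\ereduct{\cP}{\eI}}{\beta}(\rH))$, which is nonzero iff some $\beta$-stage is.

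For the successor step I would argue both directions through the homomorphism property of $\lambdac$. If $\stprP{\ereduct{\cP}{\sI}}{\alpha}(\rH) = 1$, some reduct rule coming from $r_i : \rH \leftarrow B_1,\dots,B_m,\Not C_1,\dots,\Not C_n$ in $\cP$ has $\stprP{\ereduct{\cP}{\sI}}{\alpha-1}(B_j)=1$ for all $j$ and $\sI(C_j)=0$ for all $j$; the induction hypothesis gives enabled $E_{B_j} \leq \etprP{\ereduct{\cP}{\eI}}{\alpha-1}(B_j)$ and the lemma's hypothesis gives enabled $E_{C_j} \leq \sneg\eI(C_j)$, and then $(E_{B_1}*\dots*E_{B_m}*E_{C_1}*\dots*E_{C_n})\cdot r_i$ lies below $\etprP{\ereduct{\cP}{\eI}}{\alpha}(\rH)$ and is a nonzero sum-free term with no oddly occurring label --- nonzero because a product/application of elementary terms vanishes only if it contains some $t$ together with $\sneg t$, and oddity-free because by the \emph{appl.\ negation} axiom `$\sneg$' of an application equals `$\sneg$' of the corresponding product, so composing with $r_i$ changes neither. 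Conversely, if $\etprP{\ereduct{\cP}{\eI}}{\alpha}(\rH)$ admits an enabled justification then $\lambdac(\etprP{\ereduct{\cP}{\eI}}{\alpha}(\rH)) \neq 0$; applying $\lambdac$ to $\etprP{\ereduct{\cP}{\eI}}{\alpha}(\rH) = \sum_{r_i}\big( \etprP{\ereduct{\cP}{\eI}}{\alpha-1}(B_1)*\dots*\sneg\eI(C_1)*\dots\big)\cdot r_i$ and using that $0$ is the only annihilator of `$*$' and `$\cdot$' produces a rule $r_i$ with $\lambdac(\etprP{\ereduct{\cP}{\eI}}{\alpha-1}(B_j))\neq 0$ and $\lambdac(\sneg\eI(C_j))\neq 0$ for all $j$; the former gives $\stprP{\ereduct{\cP}{\sI}}{\alpha-1}(B_j)=1$ by induction, the latter gives $\sI(C_j)=0$, hence $\sI(\Not C_j)=1$, so that same reduct rule forces $\stprP{\ereduct{\cP}{\sI}}{\alpha}(\rH)=1$.

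The main obstacle is precisely the last inference ``$\lambdac(\sneg\eI(C_j))\neq 0 \Rightarrow \sI(C_j)=0$'', which is the \emph{converse} of the hypothesis as literally written (that only gives ``$\sI(C_j)=0 \Rightarrow \lambdac(\sneg\eI(C_j))\neq 0$''); I expect the intended statement to carry the biconditional ``$\sI(\rH)=0$ iff $\sneg\eI(\rH)$ admits an enabled justification'', which is exactly the invariant kept along the alternating iteration in the proof of Theorem~\ref{thm:wellf.justification<-$>$weff.standard}, so I would state the hypothesis that way and carry it through. An alternative, induction-free route for the whole lemma is to plug $\cI \eqdef \lambdac(\eI)$ into Lemmas~\ref{lem:gamma.causal<-explanations}--\ref{lem:gamma.causal-$>$explanations} to get $\lambdac(\eWpP{\cP}(\eI)) = \WpP{\cP}(\lambdac(\eI))$, observe that under the biconditional correspondence $\cP^{\lambdac(\eI)}$ and $\cP^{\sI}$ have the same surviving rules and hence least models agreeing on ``nonzero'' versus ``true'', and then close with the $\lambdac$-characterisation of enabled justifications.
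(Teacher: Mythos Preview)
Your inductive scheme is essentially the paper's: both unfold $\eWpP{\cP}(\eI)$ and $\sWpP{\cP}(\sI)$ as least fixpoints of the $T$-operators on the reducts and argue the biconditional stage by stage, splitting on rules in the successor step. Your explicit use of $\lambdac$ to characterise ``admits an enabled justification'' is a sharpening not made explicit in the paper, but it does not change the structure of the argument.

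The gap you flag is real, and the paper's own proof shares it. In the converse direction of the successor step the paper decomposes an enabled $E \leq \etprP{\ereduct{\cP}{\eI}}{\alpha}(\rH)$ along a rule, obtains enabled $E_{C_j} \leq \sneg\eI(C_j)$, and then simply asserts $\sI(C_j)=0$ without further comment; this is precisely the missing converse of the stated hypothesis. Your proposed fix---taking the hypothesis as the biconditional ``$\sI(\rH)=0$ iff there is an enabled $E \leq \sneg\eI(\rH)$''---is the right one: it is exactly the invariant the proof of Theorem~\ref{thm:wellf.justification<-$>$weff.standard} maintains along the alternating iteration (trivially at $\eI=\sI=\botI$, and propagated by the lemma's biconditional conclusion together with your observation that $\lambdac(\sneg v)\in\{0,1\}$ and equals $1$ exactly when $\lambdac(v)=0$). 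Your alternative route via Proposition~\ref{prop:aux:thm:causal-prov.smodels.are.the.causal-smodels} is also viable, but note it does not sidestep the issue: identifying the surviving rules of $\cP^{\lambdac(\eI)}$ with those of $\cP^{\sI}$ still requires ``$\lambdac(\eI)(C_j)=0$ iff $\sI(C_j)=0$'', which is again the biconditional hypothesis.
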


\begin{proof}
By definition $\eWpP{\cP}(\eI)$ and $\sWpP{\cP}(\sI)$ are the least model of the programs
$\ereduct{\cP}{\eI}$ and $\ereduct{\cP}{\sI}$, respectively.
Furthermore, the least model of programs $\cP^\eI$ and $\cP^\sI$ are the least fixpoint of the $T_P$ and $\stp$ operators,
that is,
$\eWp(\eI)=\etprP{\ereduct{\cP}{\eI}}{\omega}$
and
$\sWp(\eJ)=\stprP{\ereduct{\cP}{\sI}}{\omega}$.
In case that $\alpha=0$, it follows that 
$\stprP{\cP^\sI}{0}(\rH)$ for every atom $\rH$ and, thus, the statement holds vacuous.
We assume as induction hypothesis that
for every atom $\rH$ and ordinal $\beta<\alpha$ such that $\stprP{\ereduct{\cP}{\sI}}{\beta}(\rH)=1$, there is some enabled justification $E \leq \etprP{\ereduct{\cP}{\eI}}{\beta}(\rH)$.

In case that $\alpha$ is a successor ordinal.
If $\stprP{\ereduct{\cP}{\eI}}{\alpha-1}(\rH)=1$, then
there is a rule $r_i\in\cP$ of the form \eqref{eq:rule}
such that $\stprP{\ereduct{\cP}{\eI}}{\alpha-1}(\rB_j)=1$ and
$\eI(\rC_j)=0$. 
On the one hand, by induction hypothesis,
it follows that 
there is some enabled justification $E_{B_j} \leq \etprP{\ereduct{\cP}{\eI}}{\alpha-1}(\rB_j)$
and, by hypothesis,
there is some enabled justification $E_{C_j} \leq \sneg I(C_j)$.
Hence, 
\begin{gather*}
E \ \ \eqdef \ \ (E_{B_1} * \dotsc E_{B_m} * E_{C_1} * \dotsc * E_{C_n} ) \cdotl r_i
\end{gather*}
is an enabled justification $E \leq\etprP{\ereduct{\cP}{\eI}}{\alpha}(\rH)$.

The other way around, let $E$ be some join irreducible justification.
If $E \leq  \etprP{\ereduct{\cP}{\eI}}{\alpha}(\rH)$,
then
there is a rule $r_i\in\cP$ of the form \eqref{eq:rule}
such that 
\begin{gather*}
E \ \ \leq \ \ (E_{B_1} * \dotsc E_{B_m} * E_{C_1} * \dotsc * E_{C_n} ) \cdotl r_i
\end{gather*}
where
$E_{B_j} \leq  \etprP{\ereduct{\cP}{\eI}}{\alpha}(\rB_j)$ and
$E_{C_j} \leq  \sneg\eI(\rC_j)$ are enabled justifications.
Hence, it follows that \mbox{$\stprP{\ereduct{\cP}{\sI}}{\alpha}(\rB_j)=1$} and $\sI(\rC_j)=0$.

In case that $\alpha$ is a limit ordinal,
$\stprP{\cP^\cI}{\alpha} = 1$ iff $\stprP{\cP^\cI}{\beta}=1$ for some $\beta < \alpha$
iff there is a join irreducible enabled justification
$E \leq \etprP{\cP^\eI}{\beta}) \leq \lambdap(\etprP{\cP^\eI}{\alpha}$.
\qed
\end{proof}

\begin{proofof}
{Theorem~\ref{thm:wellf.justification<-$>$weff.standard}}
Let $E\leq\ewfm(\rL)$ be an enabled justification of $\rL \in \set{ \rA,\, \Not \rA,\, \Undef \rA}$.
From Theorem~\ref{thm:wellf.justification<-$>$weff.provenace}, it follows that
$\lambdap(E) \leq \lambdap(\ewfm(\rL)) = \varrho(Why_\cP(\rL))$,
that is, 
$\lambdap(E) \leq \varrho(Why_\cP(\rL))$.
Note that the minimum causal value $t$ such that $\varrho(t) = \varrho(Why_\cP(\rL))$ is
$Why_\cP(\rL) \wedge \bigwedge_{\rA \in \at} not(\rA)$
and, thus, $D \leq Why_\cP(\rL)$ 
where $D$ is defined by
$D = \lambdap(E) \wedge \bigwedge_{\rA \in \at} not(\rA)$.
Furthermore, since $E$ is an enabled justification,
$\lambdap(E)$ is a positive conjunction and, thus,
so it is $D$.
Hence, there is a positive conjunction $D$ such that $D \leq Why_\cP(\rL)$ and, from Theorem~\ref{thm:prov.models}, it follows that $\rL$ holds with respect to the standard WFM of $P$.

The other way around. 
If $\rL=\rA$ is an atom, then $\rL$ holds with respect to the standard WFM iff $\lfp(\sWp^2)(\rL)=1$.
Furthermore,
$\sWpr{0}(\rH)=\eWpr{0}=0$ for any atom $\rH$ and, thus, there is an enabled justification $E \leq \sneg \eWpr{0}= \sneg 0 = 1$ for any atom $\rH$.
Then, from Lemma~\ref{lem:gamma.causal-$>$standard}, 
for any atom $\rH$ ,
there is an enabled justification $E \leq \eWp(\eWpr{0})(\rH)$ iff
\mbox{$\sWp(\sWpr{0})(\rH)=1$}.
Applying this result again, it follows that
$E \leq \eWWpr{1}(\rH)=\eWp^2(\eWpr{0})(\rH)$ if and only if
$\sWpr{1})(\rH)=\sWp^2(\sWpr{0})(\rH)=1$.
Inductively applying this reasoning it follows that
$\sWpr{\infty}(\rH) = 1$ iff there is an enabled justification
$E \leq \eWpr{\infty}(\rH)$ which, by Knaster-Tarski theorem are the least fixpoints respectively of the $\sWp$ and $\eWp$ operators.

Similarly, if $\rL = \Not \rA$, then $\rL$ holds with respect to the standard WFM if and only if $\gfp(\sWp^2)(\rL)=\sWp(\lfp(\sWp^2))(\rL)=0$ iff there is not any an enabled justification $E \leq \eWp(\lfp(\eWp^2))(\rL)=\gfp(\eWp^2)(\rL)$ iff there is an enabled justification
$E \leq \wfm(L) = \sneg \gfp(\eWp^2)(\rL)$.

Finally, if $\rL = \Undef \rA$, then $\rL$ holds with respect to the standard WFM iff $\lfp(\sWp^2)(\rL)=0$ and $\gfp(\sWp^2)(\rL)=1$
if and only if
there is not any enabled justification
$E \leq \wfm(\rL)$
and
there is not any enabled justification
$E \leq \wfm(\Not \rL)$
iff
there is some enabled justification
$E \leq \sneg\wfm(\rL)$
and
there is some enabled justification
$E \leq \sneg\wfm(\Not \rL)$
iff
there is some enabled justification
$\ewfm(\Undef \rA)=\sneg\ewfm(\rA) * \sneg\ewfm(\Not \rA)$.\qed
\end{proofof}

\subsection{Proof of Theorem~\ref{thm:wellf.why-not-$>$sm.cjustification}}

\begin{lemma}\label{lem:removing.monotonicity}
Let $t$ and $u$ be two causal terms such that no-sums occur in $t$ ant $t \leq u$.
Then, $\varrho_{x}(t) \leq \varrho_x(u)$.\qed
\end{lemma}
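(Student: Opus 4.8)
The lemma is applied only with $x$ a plain label (one of the auxiliary labels $not(\rA)$ introduced in Theorem~\ref{thm:well.non-positive.justifications.modify.program}), which is the case I treat. The plan is to recognise $\varrho_x$ as an algebra homomorphism on $\evalues$ and then simply invoke its monotonicity. Write $r\in\lb$ for the label $x$. By construction, $\evalues$ is the free algebra on the generators $\lb$ for the equational theory of Definition~\ref{def:values} (a completely distributive lattice with `$*$', `$+$' together with the axioms of Figures~\ref{fig:appl} and~\ref{fig:neg}; see Theorem~\ref{thm:algebra.values} for the negation-free fragment). Hence the assignment $r\mapsto 1$ and $l\mapsto l$ for every other label $l$ extends to a unique endomorphism $\sigma\colon\evalues\longrightarrow\evalues$. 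Being in particular a lattice homomorphism, $\sigma$ preserves `$*$' and `$+$', hence `$\leq$', so $\sigma$ is monotone; note also $\sigma(\sneg r)=\sneg 1=0$ and $\sigma(\sneg\sneg r)=\sneg\sneg 1=\sneg 0=1$.

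First I would prove that $\varrho_x(v)=\sigma(v)$ for every term $v$ in DNF, by structural induction on $v$. (Recall $\varrho_x$ is computed on a DNF representative, while $\sigma$ is representative-independent, so the two agree regardless of the chosen form; since `$\leq$' is defined through the algebra axioms we may replace $t$ and $u$ by DNF representatives at the outset.) For $v=v_1\otimes v_2$ with $\otimes\in\{+,*,\cdot\}$, Definition~\ref{def:varrho} gives $\varrho_x(v)=\varrho_x(v_1)\otimes\varrho_x(v_2)$, while $\sigma(v)=\sigma(v_1)\otimes\sigma(v_2)$ because $\sigma$ is a homomorphism, so the induction hypothesis applies. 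For the base cases $v$ is elementary: if $v\in\{r,\sneg\sneg r\}$ then $\sneg\sneg v$ is equivalent to $\sneg\sneg r=\sneg\sneg x$ (using $\sneg\sneg\sneg t=\sneg t$), so $\varrho_x(v)=1=\sigma(v)$; if $v=\sneg r$ then $v$ is equivalent to $\sneg x$, so $\varrho_x(v)=0=\sigma(v)$; and if $v$ involves only labels other than $r$, neither equivalence test of Definition~\ref{def:varrho} fires, so $\varrho_x(v)=v=\sigma(v)$. This gives $\varrho_x=\sigma$ on DNF terms, and hence, for $t\leq u$, $\varrho_x(t)=\sigma(t)\leq\sigma(u)=\varrho_x(u)$, which is the claim; composing finitely many such $\varrho_x$ (as in the $\varrho$ of Theorem~\ref{thm:well.non-positive.justifications.modify.program}) preserves monotonicity.

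I expect the main obstacle to be the base-case bookkeeping of this induction: matching the equivalence tests ``$\sneg\sneg t\equiv\sneg\sneg x$'' and ``$t\equiv\sneg x$'' against $\sigma$ via $\sneg 1=0$, $\sneg 0=1$ and $\sneg\sneg\sneg t=\sneg t$, and verifying that the recursion of $\varrho_x$ through `$*$' and `$\cdot$' stays compatible with DNF-normalisation (transitive closure of products, only elementary terms under `$\cdot$'), so that the induction hypothesis is legitimately applicable to the subterms. A point worth flagging is that sum-freeness of $t$ plays no role in this route; a more self-contained, paper-style alternative is to first reduce to $u$ sum-free by using that a sum-free non-zero DNF term is join irreducible, hence join prime by Proposition~\ref{prop:join.irreducible}, so $t\leq u$ forces $t\leq u_i$ for one sum-free summand $u_i$ of the DNF of $u$, and then to settle the sum-free-versus-sum-free case by the same structural induction.
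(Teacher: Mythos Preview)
Your argument is correct, but it is considerably more elaborate than the paper's. The paper dispatches the lemma in two lines: from $t\leq u$ one has $t=t*u$; since the recursive clause of Definition~\ref{def:varrho} gives $\varrho_x(t*u)=\varrho_x(t)*\varrho_x(u)$, this yields $\varrho_x(t)=\varrho_x(t)*\varrho_x(u)$, i.e.\ $\varrho_x(t)\leq\varrho_x(u)$. No free-algebra machinery, no structural induction on DNF, and no homomorphism $\sigma$ are invoked.

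Your route---recognising $\varrho_x$ as the endomorphism of $\evalues$ induced by the substitution $x\mapsto 1$ and then appealing to monotonicity of lattice homomorphisms---is a genuinely different decomposition. What it buys you is (i) a proof that $\varrho_x$ is well-defined on equivalence classes (independent of the chosen DNF representative), which the paper tacitly assumes, and (ii) monotonicity without any sum-freeness hypothesis on $t$, confirming your observation that the hypothesis is idle. The price is that you must justify the free-algebra universal property for $\evalues$ (your citation of Theorem~\ref{thm:algebra.values} is not quite on point here, since that result concerns the negation-free fragment and an isomorphism with graph ideals, not freeness of $\evalues$ itself), and you must patch the incompleteness of Definition~\ref{def:varrho}, which has no explicit ``otherwise $\varrho_x(t)=t$'' clause for elementary $t$ not involving $x$. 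Both are minor, and your base-case bookkeeping ($\sneg 1=0$, $\sneg 0=1$, $\sneg\sneg\sneg t=\sneg t$) is fine. The join-irreducible alternative you sketch at the end is also sound but, like the main route, unnecessary once one sees the paper's two-line argument.
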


\begin{proof}
By definition $t \leq u$ if and only if $t = t * u$.
Then, $\varrho_x(t) = \varrho_x(t*u) = \varrho_x(t) * \varrho_x(u)$
and, thus if follows that $\varrho_x(t) \leq \varrho_x(u)$.
\qed
\end{proof}

\begin{lemma}\label{lem:lambdac.lambdap.leq}
Let $t$ be a causal term.
Then, $\lambdac(\lambdap(t)) \leq \lambdap(\lambdac(t))$.
\qed
\end{lemma}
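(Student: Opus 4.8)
The plan is to reduce the claim to disjunctive normal form and then compare the two compositions factor by factor. First I would fix the reading of $\lambdac(\lambdap(t))$: starting from a DNF representative of $t$ (Proposition~\ref{prop:negation.normal.form}), $\lambdap$ replaces every `$\cdot$' by `$*$' and every `$\sneg$' by `$\neg$', and after collapsing the resulting double negations of labels one is left with a Boolean DNF term, to which $\lambdac$ is then applied with the clauses of Definition~\ref{def:causal.values} read for `$\neg$' in place of `$\sneg$', i.e. $\neg l \mapsto 0$, $l \mapsto l$, homomorphically on `$+$' and `$*$'. This is the reading that makes the two sides comparable and the statement true; it amounts to computing $\lambdac \circ \lambdap$ conjunct-wise on the DNF of $t$. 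Now write $t \equiv \sum_{i} C_i$ in DNF, with each $C_i$ sum-free, negation appearing only on labels, applications occurring only between elementary terms, and every product transitively closed. Since $\lambdap$ and $\lambdac$ are homomorphic with respect to (possibly infinite) sums, $\lambdac(\lambdap(t)) = \sum_i \lambdac(\lambdap(C_i))$ and $\lambdap(\lambdac(t)) = \sum_i \lambdap(\lambdac(C_i))$; because `$+$' is the join of the lattice it suffices to prove $\lambdac(\lambdap(C)) \leq \lambdap(\lambdac(C))$ for a single sum-free DNF conjunct $C$.

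For such a $C$, every factor is an elementary term $l$, $\sneg l$, $\sneg\sneg l$, or an application $e_1 \cdot e_2$ of two elementary terms. Call $C$ \emph{inhibited} if some factor contains an odd occurrence of a label (a factor $\sneg l$, or $\sneg l \cdot e$, or $e \cdot \sneg l$), and \emph{enabled} otherwise. If $C$ is inhibited, then $\lambdap(C)$ is a Boolean product one of whose factors is the literal $\neg l$, so $\lambdac(\lambdap(C)) = 0$ because $0$ annihilates `$*$', and the inequality is immediate. If $C$ is enabled, then every elementary constituent of every factor is $l$ or $\sneg\sneg l$; since $\lambdap(l) = l$ and $\lambdap(\sneg\sneg l) = \neg\neg l = l$, the value $\lambdap(C)$ is the `$*$'-product of all labels occurring in $C$, causes and enablers alike, and $\lambdac$ fixes each of these bare labels, so $\lambdac(\lambdap(C)) = \prod_{l \in L(C)} l$ where $L(C)$ is that set of labels. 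On the other side $\lambdac$ sends each enabler $\sneg\sneg l$ to $1$, so an application factor involving an enabler collapses to at most its cause component, and $\lambdac(C)$ is built only from the productive-cause labels of $C$; flattening `$\cdot$' to `$*$' via $\lambdap$ gives $\lambdap(\lambdac(C)) = \prod_{l \in K(C)} l$ with $K(C) \subseteq L(C)$ the set of cause labels of $C$. Since $K(C) \subseteq L(C)$ and a `$*$'-product over a larger finite set of labels is $\leq$ a `$*$'-product over a subset (as $x * y \leq y$ always holds), we get $\lambdac(\lambdap(C)) \leq \lambdap(\lambdac(C))$, and summing over $i$ finishes the proof.

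The main obstacle is bookkeeping rather than ideas. One must (i) pin down precisely which representative of the Boolean value $\lambdap(t)$ is fed to $\lambdac$ — the one obtained by applying $\lambdap$ to the DNF of $t$ and normalising double negations of labels — so that $\lambdac \circ \lambdap$ is well defined and coincides with the conjunct-wise description above (a naive ``$\lambdac$ of the Boolean value $\lambdap(t)$'' is not representative-independent, and under it the inequality can fail); and (ii) check exhaustively that on each shape of factor ($l$, $\sneg l$, $\sneg\sneg l$, and the application types $l \cdot m$, $\sneg\sneg l \cdot m$, $l \cdot \sneg\sneg m$, $\sneg\sneg l \cdot \sneg\sneg m$, together with those containing an inhibitor) the composition $\lambdac \circ \lambdap$ returns exactly the underlying bare labels while $\lambdac$ returns only the cause part. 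Once these cases are laid out the conclusion is purely mechanical from $x * y \leq y$ and monotonicity of `$+$'; in particular no reasoning about the non-commutative order within `$\cdot$' is needed, since both compositions discard it.
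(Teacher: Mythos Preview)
Your proof is correct and follows essentially the same strategy as the paper's: both work with the DNF representative of $t$ and reduce to comparing the two compositions on elementary subterms. The paper presents this as a direct structural induction on the DNF term (base cases $l$, $\sneg l$, $\sneg\sneg l$; inductive cases `$\cdot$', `$*$', `$+$' handled by the homomorphism of both maps together with monotonicity of `$*$' and `$+$'), while you flatten the recursion into an explicit inhibited/enabled case split on each conjunct and then compare the resulting label-products via $K(C)\subseteq L(C)$. The two arguments are equivalent; yours makes the semantic content (enablers survive $\lambdap$ but are erased by $\lambdac$) more visible, whereas the paper's induction is shorter and does not need the exhaustive list of factor shapes. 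Your care in pinning down which representative of the Boolean value $\lambdap(t)$ is fed to $\lambdac$ is a point the paper leaves entirely implicit---it simply computes $\lambdac(\lambdap(\cdot))$ homomorphically along the DNF structure of $t$, which is exactly the reading you spell out.
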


\begin{proof}
If $t \in \lb$ is a label, then $\lambdac(t) = t$ and $\lambdap(t) = t$ and, thus,
$\lambdac(\lambdap(t)) = t \leq t = \lambdap(\lambdac(t))$.
If $t  = \sneg l$ with $l \in \lb$ a label, then
$\lambdac(t) = 0$ and $\lambdap(t) = \neg l$ and, thus,
$\lambdac(\lambdap(t)) = 0 \leq 0 = \lambdap(\lambdac(t))$.
If $t  = \sneg\sneg l$ with $l \in \lb$ a label, then
$\lambdac(t) = 1$ and $\lambdap(t) = l$ and, thus,
$\lambdac(\lambdap(t)) = l \leq 1 = \lambdap(\lambdac(t))$.

Assume as induction hypothesis that
$\lambdac(\lambdap(u)) \leq \lambdap(\lambdac(u))$ for every subterm $u$ of $t$.
If $t = u_1 \cdotl u_2$,
then
\begin{align*}
\lambdac(\lambdap(u_1 \cdotl u_2))
    \ \ = \ \ \lambdac(\lambdap(u_1) * \lambdac(\lambdap(u_2)
    \ \ \leq \ \ \lambdap(\lambdac(u_1) * \lambdap(\lambdac(u_2)
    \ \ = \ \ \lambdap(\lambdac(u_1 \cdotl u_2))
\end{align*}
Similarly, if $t = \sum_{u \in U} u$, then
\begin{align*}
\lambdac(\lambdap(\sum_{u \in U} u)
    \ \ = \ \ \sum_{u \in U}\lambdac(\lambdap(u)
    \ \ \leq \ \ \sum_{u \in U}\lambdap(\lambdac((u))
    \ \ = \ \ \lambdap(\lambdac(\sum_{u \in U} u))
\end{align*}
and
if $t = \prod_{u \in U} u$, then
\begin{align*}
\lambdac(\lambdap(\prod_{u \in U} u)
    \ \ = \ \ \prod_{u \in U}\lambdac(\lambdap(u)
    \ \ \leq \ \ \prod_{u \in U}\lambdap(\lambdac((u))
    \ \ = \ \ \lambdap(\lambdac(\prod_{u \in U} u))
\end{align*}
\qed
\end{proof}

\begin{proofof}
{Theorem~\ref{thm:wellf.why-not-$>$sm.cjustification}}
From Theorem~\ref{thm:wellf.justification<-$>$weff.provenace},
it follows that
$\varrho(Why_P(\rA)) = \lambdap(\wfm)(\rA)$.
Furthermore, 
since $D \leq Why_P(\rA)$,
from Lemma~\ref{lem:removing.monotonicity}, it follows that
\begin{gather*}
\varrho(D) \ \ \leq \ \ \varrho(Why_P(\rA)) \ \ = \ \ \lambdap(\wfm)(\rA) \ \ = \ \ \lambdap(\elfp)(\rA)
\end{gather*}
and, thus, $\lambdac(\varrho(D)) \leq \lambdac(\lambdap(\elfp))(\rA)$.
Let $\cI$ be any CG stable model.
Then, since $\cI=\lambdac(\eI)$ for some fixpoint $\eI$ of $\eWp^2$,
it follows that $\lambdac(\elfp) \leq \cI$
and, thus, $\lambdap(\lambdac(\elfp)) \leq \lambdap(\cI)$.
Furthermore, from Lemma~\ref{lem:lambdac.lambdap.leq}, it follows that
$\lambdac(\lambdap(\elfp)) \leq \lambdap(\lambdac(\elfp))$
and, thus
\begin{gather*}
\lambdac(\varrho(D))
	 \ \ \leq \ \ \lambdac(\lambdap(\elfp))(\rA)
	 \ \ \leq \ \ \lambdap(\lambdac(\elfp))(\rA)
	 \ \ \leq \ \ \lambdap(\cI)(\rA)
\end{gather*}
Note that, since $D$ is non-hypothetical and enabled, it does not contain negated labels and, thus, $\lambdac(\varrho(D)) = \varrho(D)$.
Consequently, $\varrho(D) \leq \lambdap(\cI)(\rA)$.\qed
\end{proofof}

\newpage


\end{document}